\documentclass[11pt]{article}
\usepackage{graphicx} 
\usepackage{authblk}
\usepackage{amsmath,amssymb,mathrsfs}
\usepackage{amsthm}
\usepackage{xcolor}
\usepackage{float}
\usepackage[margin=0.8in]{geometry}
\usepackage{hyperref}
\usepackage{subcaption}
\newtheorem{theorem}{Theorem}
\newtheorem{lemma}{Lemma}
\newtheorem{corollary}{Corollary}
\newtheorem{proposition}{Proposition}
\theoremstyle{definition}
\newtheorem{definition}{Definition}
\theoremstyle{remark}

\theoremstyle{plain}
\title{Reverse quantum state diffusion from differential geometry}
\author[1]{Dinh-Long Vu\thanks{long-vu@nus.edu.sg}}
\author[1]{Patrick Rebentrost\thanks{cqtfpr@nus.edu.sg}}
\affil[1]{Centre for Quantum Technologies, National University of Singapore, Singapore, Singapore}
\date{}
\begin{document}
\maketitle
\begin{abstract}
Quantum state diffusion unravels a Lindblad master equation into stochastic trajectories of pure states. The ensemble of trajectories is described by a probability density on the manifold of pure states, which contains much more information than the density matrix, its first moment. We derive the time reversal of this diffusion. Writing the reversal of a general diffusion on a manifold in Stratonovich form, and passing through the manifold of pure states, we obtain a backward stochastic differential equation for the state vector, valid for any Hilbert space dimension and any Lindbladian. The backward stochastic equation provides a physical, nonlinear unravelling of the inverse Lindbladian that holds only for the ensemble from which it was constructed. Following score-based generative models, we then use the backward equation to generate a new ensemble close to the original. For the depolarizing channel the unravelling is a Brownian motion on complex projective space, so the uniform distribution of pure states is its stationary law in any dimension. Starting the backward equation from that prior, we bound the error of the resulting distribution compared to the original one in different settings: exact and learned scores, continuous and discrete time.
\end{abstract}

\section{Introduction}\label{sec: intro}
The dynamics of an open quantum system weakly coupled to a Markovian environment is described by a master equation for its density matrix $\rho$. The most general equation that generates a completely positive and trace preserving (CPTP) map has the Lindblad, or Gorini-Kossakowski-Sudarshan-Lindblad, form \cite{lindblad1976generators, gorini1976completely, breuer2002theory}. The same dynamics can be \emph{unravelled} into stochastic trajectories of pure states, such that the average of the projector  over the trajectories reproduces the density matrix at all times \cite{carmichael1993open, dalibard1992wave, breuer2002theory, wiseman2010quantum}. Physically, a trajectory describes the state of the system conditioned on the record of a continuous measurement of its environment \cite{wiseman2010quantum}. Among the unravellings, quantum state diffusion  \cite{gisin1992quantum, gisin1997quantumstatediffusionfoundations} describes the state vector by a continuous, norm-preserving stochastic differential equation (SDE), often called Gisin's equation; it corresponds to heterodyne detection of the environment \cite{wiseman2010quantum}. 

The questions that started this work are the following: can Gisin's equation be reversed, and would this reversal be compatible with the inverse Lindblad evolution, which is not a CPTP map?

The reversal of a classical SDE involves the gradient of the log probability density, the so-called score function \cite{nelson1967dynamical, anderson1982reverse, haussmann1986time}. In quantum mechanics, the density matrix is often thought of as the quantum analog of a classical probability distribution. However, we find this analogy misleading in the context of quantum state diffusion. Gisin's equation is a classical diffusion on the manifold of pure states and its probability distribution obeys a classical Fokker-Planck equation on this manifold.  The density matrix  is only the first moment of this distribution and infinitely many distributions share the same density matrix. For instance, the maximally mixed qubit state can either be seen as the average of two orthogonal states or of the uniform distribution on the Bloch sphere. Higher order statistics distinguish these ensembles but are invisible to the master equation. There is thus a clear conceptual and technical distinction between the distribution of pure states and the density matrix.

Applying the general machinery of SDE on  Riemannian manifolds to the complex projective space, we obtain a backward SDE for the state vector (Theorem \ref{thm: backward qsd}). It has the same noise operators as Gisin's equation, which are the jump operators shifted by their mean. Its drift consists of the forward drift with the opposite sign, a geometric term native to the complex projective space (Lemma \ref{lem: projective divergence}), and the scores, which are the derivative of the log density along the noise directions. In taking the average over the ensemble of pure states, we show that the density matrix of the backward ensemble obeys the inverse Lindblad evolution (Corollary \ref{corr: reverse lindblad}). Although the inverse evolution is not a CPTP map, the backward SDE always produces physical states: while Gisin's equation unravels $\mathcal{L}$ for any ensemble, the backward SDE only unravels $-\mathcal{L}$ for the ensemble from which it was built. This settles the questions that we initially set out to answer.

Moving beyond the initial goal, we ask whether this backward SDE can be used to generate new pure states from the initial ensemble, just like in classical score-based generative models \cite{sohldickstein2015deep, ho2020denoising, song2021scorebased}. There, data are gradually turned into noise by a forward diffusion, and new data are generated by running the reverse diffusion with a learned score.  On the quantum side, we find that for the depolarizing channel, Gisin's SDE defines a Brownian motion on the complex projective space (Proposition \ref{prop: forward backward depol}). The Fubini-Study measure is then the reversible stationary law, and it plays the role that the standard Gaussian plays in flat space: a universal prior, available in any dimension.  Initializing the backward SDE from this universal prior, we bound the error of the resulting distribution compared to the original one under various settings: exact scores and continuous time (Proposition \ref{prop: prior error}), learned scores and continuous time (Proposition \ref{prop: girsanov}) and finally the most realistic one: learned scores and discrete time (Proposition \ref{prop: discretization}). We test the bounds in dimensions up to $d=8$.

The paper is organized as follows. Section \ref{sec: qsd} collects relevant facts about quantum state diffusion, in particular Gisin's equation in It\^o and Stratonovich form. Section \ref{sec: reversal} derives the time reversal of a diffusion on a Riemannian manifold, in complex vector fields notation. Section \ref{sec: backward state vector} applies it to quantum state diffusion and derives the backward SDE for the state vector. Section \ref{sec: prior} identifies the unravelling of the depolarizing channel as a Brownian motion and studies the backward SDE started from its stationary measure, with an exact score, with a learned score and with a finite time step. Section \ref{sec: examples} contains the numerical tests, and Section \ref{sec: discussion} concludes. The proofs are collected in Appendices \ref{app: projective divergence}-\ref{app: step sizes}, Appendix \ref{app: numerics} describes the numerical methods, and Appendix \ref{app: bgg} relates Theorem \ref{thm: backward qsd} to the reverse SDE of \cite{garrahan2026}.

\paragraph{Comparison with the work by Bompais, Gu\c{t}\u{a} and Garrahan.} After the manuscript was written, we learned of the paper \cite{garrahan2026}. This paper addresses the same questions that we try to answer here, namely the reversal of Gisin's SDE and the task of generating a new ensemble from the original one, so it is important to clarify the similarities and differences between the two papers. First and foremost, regarding the reversal of Gisin's SDE, their main theorem and our Theorem \ref{thm: backward qsd} describe the same reverse dynamics: once the divergence term in their result is read as a distribution, it can be shown to coincide with the geometric and score terms of our backward SDE (Appendix \ref{app: bgg}). We thus acknowledge that \cite{garrahan2026} is the first work to derive the reverse quantum state diffusion equation. The two papers use completely different approaches to derive this result however, which affects how the result can be used, in particular for the score-based application. 

At a high level, \cite{garrahan2026} treats relevant objects, i.e.,  quantum states, jump operators and the Hamiltonian, explicitly as matrices and they use matrix calculus to obtain the backward generator in It\^o convention. Since they work with the Lebesgue measure on the real vector space of Hermitian matrices, with respect to which the law of pure states (which live on a submanifold of measure zero) is singular, several steps in their derivation are formal. In contrast, we work directly on the complex projective space of pure states with the Fubini-Study measure. We use Stratonovich convention, which makes the drift and diffusion coefficients tangent vector fields, so the entire dynamics stays on this manifold. Thus, their formulation is embedded while ours is intrinsic. Not only does this distinction make our derivation more rigorous, it also has two important practical consequences.

First, the complex projective manifold of pure states is highly symmetric. Our formulation makes symmetries explicit, while treating pure states as matrices hides the rich underlying geometric structure. Take for instance the depolarizing channel. It is observed in \cite{garrahan2026} that, for a single qubit, the correction to the feedback Hamiltonian only involves the score term, while the divergence of the diffusion tensor cancels with the dissipator. This is confirmed by explicit calculations involving Pauli matrices and the origin of this cancellation is not apparent. In our formalism, this cancellation is 
a direct consequence of the fact that the SDE generator in this case is the Laplace-Beltrami operator, i.e., the generalization of the Laplace operator in flat space to Riemannian manifolds. This is the reason why the depolarizing channel unravels to a Brownian motion, in any dimension. Moreover, this gives us access to the full spectrum of the generator, since the spectrum of the Laplace-Beltrami operator on the complex projective space is known. The spectral gap of the generator governs the convergence rate of the diffusion, while higher eigenvalues control the sub-leading corrections (Proposition \ref{prop: prior error}). 

Second, while the direct matrix calculus approach can establish certain equalities (like the cancellation above), inequalities are out of reach in most cases. In particular, no error bound for the distance between the generated ensemble and the original one is provided in \cite{garrahan2026}. Our paper provides a bound in all settings: exact and learned scores, continuous and discrete time. In the most practical case of learned scores and discrete time, geometry gives us two important properties. First, by a Noether-type argument, the derivative of the log density along a symmetry generator is conserved in expectation along the reverse process. As a result, the coefficients of the backward Hamiltonian are martingales.  Second, by orthogonality of martingale increments, the variation of these coefficients within a time interval is bounded by the increment of the Fisher information along the noise fields. The Fubini-Study metric on the complex projective space has a  positive Ricci curvature so the Li-Yau inequality \cite{li1986parabolic} can be used to bound the Fisher information. These properties lead to Proposition \ref{prop: discretization} and Corollary \ref{cor: step sizes}.

\section{Quantum state diffusion}\label{sec: qsd}
This section presents basic facts about quantum state diffusion which will be used in the rest of the manuscript. For It\^{o} and Stratonovich conventions for stochastic integrals, see \cite{oksendal2003stochastic, karatzas1991brownian}.

Consider a $d$-dimensional quantum system whose density matrix obeys the Lindblad equation \cite{lindblad1976generators, gorini1976completely}
\begin{align}
    \frac{d\rho}{dt} = \mathcal{L}(\rho) := -i[H,\rho] + \sum_{k} \Big(L_k\rho L_k^\dagger -\frac{1}{2}\big\lbrace L_k^\dagger L_k,\rho\big\rbrace\Big),
    \label{eq: lindblad}
\end{align}
with Hamiltonian $H$ and jump operators $L_k$. Gisin's SDE for the state vector reads, in It\^{o} form \cite{gisin1992quantum, gisin1997quantumstatediffusionfoundations},
\begin{align}
    |d\psi\rangle = A|\psi\rangle\, dt + \sum_kB_k|\psi\rangle\, d\xi_k,
    \label{eq: Gisin SDE}
\end{align}
with the state-dependent operators
\begin{align}
    A := -iH -\frac{1}{2}\sum_kL_k^\dagger L_k  -\frac{1}{2}\sum_k|l_k|^2  +\sum_kl_k^*L_k, \qquad B_k := L_k-l_k, \qquad l_k := \langle \psi|L_k|\psi\rangle .
    \label{eq: gisin coefficients}
\end{align}
Here $l_k$ is the expectation value of $L_k$ in the current state. The equation is nonlinear in $|\psi\rangle$, because $A$ and $B$ depend on $|\psi\rangle$ through $l_k$. The complex Wiener increments $d\xi_k$ are independent: $d\xi_j\,d\xi_k^* = \delta_{jk}\,dt$ and $d\xi_j\,d\xi_k = 0$. One can check that  $d\langle\psi|\psi\rangle = 0$: the state vector remains normalized. Let $\psi := |\psi\rangle\langle\psi|$ be the projector onto the state; we use the same letter for the state vector and the projector when no confusion can arise. By It\^{o}'s product rule,
\begin{align}
    d\psi = |d\psi\rangle\langle\psi| + |\psi\rangle\langle d\psi| + |d\psi\rangle\langle d\psi| .
\end{align}
After some simple manipulations, we have the It\^o SDE for the projector
\begin{align}
    d\psi = \mathcal{L}(\psi)\,dt + \sum _k \Big(B_k\psi\, d\xi_k + \psi B^\dagger_k d\xi^*_k \Big).
    \label{eq: projector sde}
\end{align}
This is more illuminating than  the SDE \eqref{eq: Gisin SDE} for the state vector as the drift is the Lindbladian itself. Taking the expectation of \eqref{eq: projector sde} recovers the Lindblad equation \eqref{eq: lindblad}.  For an operator $O$, we write $\langle O\rangle := \mathrm{tr}(O\psi) = \langle\psi|O|\psi\rangle$ for its expectation value in the pure state $\psi$.  Its average over the ensemble is $\mathbb{E}[\langle O\rangle] = \mathrm{tr}(O\rho)$, which we always write in the latter form to avoid confusion. Multiplying \eqref{eq: projector sde} by $O$ and taking the trace gives the SDE for $\langle O\rangle$
\begin{align}
    d\langle O\rangle = \langle\mathcal{L}^\dagger(O)\rangle\,dt + \sum_k\Big(\langle\psi|O(L_k-l_k)|\psi\rangle\, d\xi_k + \langle\psi|(L_k^\dagger-l_k^*)O|\psi\rangle\, d\xi_k^*\Big),
    \label{eq: observable sde}
\end{align}
where the Heisenberg-picture generator $\mathcal{L}^\dagger$ is given by
\begin{align}
    \mathcal{L}^\dagger[O] = i[H,O] + \sum_k\Big(L_k^\dagger OL_k - \frac12\lbrace L_k^\dagger L_k,O\rbrace\Big) .
\end{align}
It is dual to $\mathcal{L}$ in the sense that $\mathrm{tr}(O\,\mathcal{L}[\psi]) = \mathrm{tr}(\mathcal{L}^\dagger[O]\,\psi)$ for all $\psi$.

While It\^o calculus is more suitable for \textit{killing the noise} when needed, Stratonovich calculus obeys the ordinary chain rule, which makes it the preferred language for diffusions on manifolds \cite{hsu2002stochastic}. To convert between the two conventions, a correction needs to be added to the drift.  In our setting, this correction is $\bar l/2$ with
\begin{align}
    \bar l := \sum_k|l_k|^2 -\sum_k  \langle\psi|L_k^\dagger L_k|\psi\rangle = -\sum_k\langle\psi|B^\dagger_k B_k|\psi\rangle .\label{eq: l bar}
\end{align}
The Stratonovich form of Gisin's SDE for the state vector thus reads
\begin{align}
    |d\psi\rangle = \Big(A-\frac12\bar{l}\Big)|\psi\rangle\, dt + \sum_kB_k|\psi\rangle\circ d\xi_k .
    \label{eq: gisin stratonovich}
\end{align}
An equivalent form of equation \eqref{eq: gisin stratonovich} appears in the work of Di\'{o}si, Gisin and Strunz \cite{diosi1998nonmarkovian, Strunz_1999}. For the projector, we have the following
\begin{align}
    d\psi = \Big[\Big(A-\frac12\bar l\Big)\psi + \psi\Big(A-\frac12\bar l\Big)^\dagger\Big]\,dt +\sum_k \Big( B_k\psi\circ d\xi_k + \psi B_k^\dagger\circ d\xi^*_k\Big).
    \label{eq: gisin projector stratonovich}
\end{align}

\section{Time reversal of a diffusion on a Riemannian manifold}\label{sec: reversal}
In this section we present the general formalism of reverse diffusion on a Riemannian manifold. Although the derivation is known \cite{ debortoli2022riemannian, huang2022riemanniandiffusionmodels}, we quickly sweep through it mainly to introduce the notations in a natural order. Nothing here is specific to quantum mechanics except for the complex vector field formulation. The basic notions of differential geometry that we use, vector fields, divergence and integration by parts can be found in \cite{lee2013smooth, lee2018riemannian}.  

The logic is exactly the same as in flat space, the only difference is that we use the volume element induced from the Riemannian metric for integration and we need to express relevant quantities in a covariant form. The flow starts with the forward diffusion generator, which can be read off directly from the drift and the diffusion coefficients in It\^o form. This generator acts on functions, which is the Heisenberg picture in a quantum physicist's language. Equivalently, its adjoint acts on probability distributions in the Schr\"odinger picture and this is the Fokker-Planck equation. The time reversal of the Fokker-Planck equation reverses the sign of the adjoint generator, which can be rewritten as a diffusion generator with the old diffusion coefficients but with a new, shifted drift containing the score.  We then convert the entire pipeline to the Stratonovich convention.

\paragraph{It\^o form.} Let $X_t$ be a diffusion on a compact manifold $\mathcal{P}$ of dimension $N$ without boundary; in our setting $\mathcal{P}$ will be the manifold of pure states. In local coordinates $x^1,\dots,x^N$ we write the It\^{o} SDE
\begin{align}
    dX^i = b^i(X)\,dt + \sum_{k=1}^M\sigma^i_k(X)\, dW^k,\qquad D^{ij} := \sum_{k=1}^M \sigma^i_k\sigma^j_k ,
\end{align}
driven by $M$ independent real Wiener processes $W^k$. The matrix $D$ is symmetric and positive semi-definite. For a smooth function $f$ on $\mathcal{P}$, the mean rate of change of $f(X_t)$ is given by
\begin{align}
    \mathbb{E}\big[df(X_t)\,\big|\,X_t=x\big] = (\mathcal{W}f)(x)\,dt,\qquad
    \mathcal{W}f := \sum_{i=1}^N b^i\,\partial_i f + \frac{1}{2}\sum_{i,j=1}^N D^{ij}\,\partial_i\partial_j f .
    \label{eq: generator}
\end{align}
The second-order differential operator $\mathcal{W}$ is the \emph{generator} of the diffusion \cite{oksendal2003stochastic, Bakry_2014}. 

Let $dm = \sqrt{\mathbf{g}}\,dx^1\cdots dx^N$ be a reference volume element, for instance the Riemannian volume form with $\mathbf{g}:=\det g$ being  the determinant of the metric tensor, and let $p_t$ be the density of $X_t$ with respect to $dm$. The change in the expectation value of a function can be expressed either through the change of the function or the change of the probability distribution, i.e., 
\begin{align*}
    \int f\,\partial_tp_t\,dm = \int(\mathcal{W}f)\,p_t\,dm .
\end{align*}
The right-hand side can be integrated by parts to give $\int(\mathcal{W}f)\,p\,dm = \int f\,(\mathcal{W}^*p)\,dm$ with
\begin{align}
    \mathcal{W}^*p = -\frac{1}{\sqrt{\mathbf{g}}}\sum_i\partial_i\big(\sqrt{\mathbf{g}}\, b^i p\big) + \frac{1}{2}\frac{1}{\sqrt {\mathbf{g} }}\sum_{i,j}\partial_i\partial_j\big(\sqrt{\mathbf{g}}\, D^{ij}p\big).
\end{align}
Since this holds for all functions, $\partial_t p = \mathcal{W}^*p$. This is the Fokker-Planck equation in It\^{o} form. 

Fix a final time $T$ and let $Y_\tau := X_{T-\tau}$, $0\le\tau\le T$, be the reversed process. Its density at reverse time $\tau$ is $p_{T-\tau}$. By the Fokker-Planck equation of the forward process, and since $\partial_\tau = -\partial_t$,
\begin{align}
    \partial_\tau p_{T-\tau} = -\mathcal{W}^*p ,
\end{align}
where $p $ is implicitly understood as $ p_{T-\tau}$. The diffusion term has the wrong sign for a Fokker-Planck equation and can be \textit{absorbed} into the drift, assuming that $p_t$ is smooth and positive for $t>0$
\begin{align}
    \partial_\tau p_{T-\tau} = -\frac{1}{\sqrt {\mathbf{g} }}\sum_i\partial_i\big(\sqrt {\mathbf{g} }\, \tilde b^ip\big) + \frac12\frac{1}{\sqrt {\mathbf{g} }}\sum_{i,j}\partial_i\partial_j\big(\sqrt {\mathbf{g}}\, D^{ij}p\big),\qquad \tilde b^i = -b^i + \frac{1}{p\sqrt {\mathbf{g}}}\sum_j\partial_j\big(\sqrt{\mathbf{g} }\,p\,D^{ij}\big). \label{eq: reverse drift general}
\end{align}

Now that the reverse Fokker-Planck equation has been written in the standard form \eqref{eq: reverse drift general}, the reverse generator can be read off directly from it, as in \eqref{eq: generator}
\begin{align}
    \tilde{\mathcal{W}}_tf = -\mathcal{W}f + \frac{1}{p\sqrt {\mathbf{g}}}\sum_{j}\partial_j\Big(\sqrt{\mathbf{g}} \, p\sum_i D^{ji}\,\partial_i f\Big).
\end{align}
For a vector field $V=\sum_jV^j\partial_j$ the divergence with respect to $dm$ is $\mathrm{div}\,V = \frac{1}{\sqrt{\mathbf{g}}}\sum_j\partial_j(\sqrt{\mathbf{g}}\,V^j)$. Contracting $D$ with the differential of $f$ defines a vector field that we denote by $D\nabla f$,
\begin{align}
    (D\nabla f)^j := \sum_i D^{ji}\,\partial_i f .
\end{align}
We can therefore write the reverse generator in a covariant form
\begin{align}
    \tilde{\mathcal{W}}_t f = -\mathcal{W} f + \frac{1}{p_t}\,\mathrm{div}\big(p_t\, D\nabla f\big).
    \label{eq: nelson}
\end{align}
\paragraph{Stratonovich form.} Consider now the Stratonovich SDE
\begin{align}
    dX^i = V_0^i(X)\,dt + \sum_{a=1}^M V_a^i(X)\circ dW^a .
\end{align}
Under a change of coordinates,  the coefficients $V_a^i$ transform like the components of a vector field. It is therefore natural to regard $V_0,V_1,\dots,V_M$ as vector fields on $\mathcal{P}$, acting on functions as derivations, $V_af := \sum_iV_a^i\,\partial_if$. Writing \eqref{eq: generator} in terms of $V_a$, the forward generator takes the compact form
\begin{align}
    \mathcal{W}f = V_0f + \frac12\sum_a V_a(V_af).
    \label{eq: hormander}
\end{align}
Substituting this equation into \eqref{eq: nelson}, using the fact that $D\nabla f = \sum_ a (V_a f)V_a$, we can write the reverse generator in the same form
\begin{align}
    \tilde{\mathcal{W}}_tf = \Big[-V_0 + \sum_a\mathrm{div}_{p_t}(V_a)\,V_a\Big]f + \frac12\sum_a V_a(V_af),
\end{align}
where we have introduced the divergence with respect to the measure $p\,dm$,
\begin{align}
    \mathrm{div}_{p}(V) := \frac{1}{p}\mathrm{div}(pV) = \mathrm{div}\,V + V(\log p).\label{eq: div p}
\end{align}
The reverse Stratonovich SDE thus has the same noise fields $V_a$ and the drift
\begin{align}
    \tilde{V}_0 = -V_0 + \sum_a \mathrm{div}_{p_t}(V_a)\, V_a.
    \label{eq: stratonovich reversal}
\end{align}
The correction splits into a geometric part $\sum_a(\mathrm{div}\,V_a)V_a$, which does not depend on $p_t$, and the score part $\sum_aV_a(\log p_t)V_a = D\nabla\log p_t$. 
\paragraph{Complex noise.} In quantum state diffusion the noise enters through complex Wiener processes, and it is convenient to express \eqref{eq: stratonovich reversal} directly in complex form. Let $\xi_k = (W_{k1}+iW_{k2})/\sqrt2$, $k=1,\dots,K$, with independent real Wiener processes $W_{ka}$, so that $d\xi_k\,d\xi_k^* = dt$ and $d\xi_k^2 = 0$. A \emph{complex vector field} is a combination $Z = (V_1-iV_2)/\sqrt2$ of two real vector fields. It acts on functions by $Zf := (V_1f-iV_2f)/\sqrt2$, and $\bar Z := (V_1+iV_2)/\sqrt2$, so that $\bar Zf = \overline{Zf}$ for real $f$. Operations that are linear in the vector field are extended complex-linearly, in particular the divergence, $\mathrm{div}_pZ := (\mathrm{div}_pV_1 - i\,\mathrm{div}_pV_2)/\sqrt2 = \mathrm{div}\,Z + Z(\log p)$, and the directional derivative $(Z\cdot\partial)$ in coordinates. Since
$Z\,d\xi + \bar Z\,d\xi^*  = V_1\,dW_1 + V_2\,dW_2$, the Stratonovich SDE
\begin{align}
    dX = V_0\,dt + \sum_{k=1}^K\big(Z_k\circ d\xi_k + \bar Z_k\circ d\xi_k^*\big),
    \label{eq: complex stratonovich}
\end{align}
with a real drift $V_0$, is a Stratonovich SDE of the form considered above, with the $2K$ real noise fields $V_{k1}$, $V_{k2}$. It is then straightforward to translate everything so far into complex fields:

\begin{lemma}[Complex noise]\label{lem: complex reversal}
Consider the Stratonovich SDE with complex noise fields \eqref{eq: complex stratonovich}.
\begin{enumerate}
    \item Its generator is
    \begin{align}
        \mathcal{W}f = V_0f - \frac12\sum_k\Big[(\overline{\mathrm{div}\, Z_k})\,Z_k+(\mathrm{div}\, Z_k)\bar Z_k \Big]f +\frac{1}{2}\mathrm{div}(D\nabla f),
    \end{align}
    where
    \begin{align}
        D\nabla f = \sum_k\Big[\overline{(Z_kf)}\,Z_k + (Z_kf)\,\bar Z_k\Big].
        \label{eq: complex covariance}
    \end{align}
    \item The reverse  SDE has the same noise fields $Z_k$ and the drift
    \begin{align}
        \tilde V_0 = -V_0 + \sum_k\Big((\overline{\mathrm{div}_{p_t}Z_k})\,Z_k + (\mathrm{div}_{p_t}Z_k)\,\bar Z_k\Big) .
        \label{eq: complex divergence sum}
    \end{align}
\end{enumerate}
\end{lemma}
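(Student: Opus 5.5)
The plan is to reduce everything to the real Stratonovich case already treated and then recombine the real fields into complex ones. By construction, \eqref{eq: complex stratonovich} is the real Stratonovich SDE with drift $V_0$ and the $2K$ real noise fields $V_{k1},V_{k2}$, where $Z_k=(V_{k1}-iV_{k2})/\sqrt2$. Inverting, $V_{k1}=(Z_k+\bar Z_k)/\sqrt2$ and $V_{k2}=i(Z_k-\bar Z_k)/\sqrt2$. The central algebraic identity is that, for any complex-linear, real-on-real expressions $\alpha$ and $\beta$ in the noise field,
\begin{align*}
\sum_{a=1,2}\alpha(V_{ka})\,\beta(V_{ka}) = \overline{\alpha}(Z_k)\,\beta(Z_k)+\alpha(Z_k)\,\overline{\beta}(Z_k),
\end{align*}
with the bar meaning evaluation on $\bar Z_k$. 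I would check this directly by substitution: the cross terms $\alpha(Z)\beta(Z)$ and $\alpha(\bar Z)\beta(\bar Z)$ come with coefficients $\tfrac12$ and $-\tfrac12$, so they cancel.

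For item 2, I start from the reverse drift \eqref{eq: stratonovich reversal}, $\tilde V_0=-V_0+\sum_a \mathrm{div}_{p_t}(V_a)V_a$. The map $V\mapsto \mathrm{div}_{p_t}(V)\,V$ is a quadratic expression of the above type, since $\mathrm{div}_{p_t}$ is extended complex-linearly and is real on real fields. Applying the identity gives exactly \eqref{eq: complex divergence sum}, using $\mathrm{div}_{p_t}\bar Z_k=\overline{\mathrm{div}_{p_t}Z_k}$. The same identity also gives the covariance formula \eqref{eq: complex covariance}: take $D\nabla f=\sum_a (V_af)V_a$ and apply the identity with $\alpha(V)=Vf$, $\beta(V)=V$.

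For item 1, I start from \eqref{eq: hormander}, $\mathcal W f=V_0f+\tfrac12\sum_a V_a(V_af)$. The key step is to rewrite each second-order term in divergence form. Using the product rule $\mathrm{div}(hV)=h\,\mathrm{div}V+Vh$ with $h=V_af$ gives
\begin{align*}
V_a(V_af)=\mathrm{div}\big((V_af)V_a\big)-(\mathrm{div}V_a)(V_af).
\end{align*}
Summing over $a$, the first term becomes $\mathrm{div}(D\nabla f)$. The second term is of the bilinear type with $\alpha=\mathrm{div}$ and $\beta(V)=Vf$, so it yields
\begin{align*}
\sum_k\big[(\overline{\mathrm{div}Z_k})Z_k+(\mathrm{div}Z_k)\bar Z_k\big]f.
\end{align*}
Combining these gives the stated generator.

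There is no real analytic obstacle, since the content is the real-case result proved above. The step most prone to error is bookkeeping of conjugations and factors of $\tfrac12$ in the real-to-complex identity. In particular, I must verify that the conjugation lands on the divergence factor rather than on the field, and that the bar convention $\bar Zf=\overline{Zf}$ is only applied to real $f$ and real-valued operators. I would therefore carry out the identity once explicitly in the general bilinear form and then specialize it three times.
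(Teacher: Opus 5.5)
Your proposal is correct and follows the paper's route: the paper gives no separate proof and simply says the lemma follows by translating the real-field results (\eqref{eq: hormander} and \eqref{eq: stratonovich reversal}, with the $2K$ real fields $V_{k1},V_{k2}$) into complex notation. Your polarization identity $\sum_a\alpha(V_a)\beta(V_a)=\alpha(\bar Z)\beta(Z)+\alpha(Z)\beta(\bar Z)$ is the natural way to carry out that translation, applied to $\sum_a\mathrm{div}_{p_t}(V_a)V_a$, to $\sum_a(V_af)V_a$, and to the product-rule rewriting of $\sum_aV_a(V_af)$.
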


\section{Backward SDE}\label{sec: backward state vector}
\subsection{The geometry of pure states}\label{subsec: pure state geometry}
A pure state  can either be seen as a ray  of unit vectors in $\mathbb{C}^d$ or equivalently as a rank-1 projector $\psi = |\psi\rangle\langle\psi|$. The set of pure states is the complex projective space $\mathcal{P}\simeq\mathbb{CP}^{d-1}$, the quotient of the unit sphere by the global phases \cite{bengtsson2017geometry}. It is a compact manifold without boundary, of real dimension $2(d-1)$.  As reference measure we take the Fubini-Study volume, normalized to one. It is the unique probability measure on $\mathcal{P}$ invariant under all unitary transformations, and it is the distribution of $|v\rangle/\|v\|$ when the components of $|v\rangle\in\mathbb{C}^d$ are independent standard complex Gaussian variables \cite{bengtsson2017geometry}. 

A tangent vector at $\psi$ is the velocity $\delta\psi = \frac{d}{d\epsilon}\psi_\epsilon|_{\epsilon=0}$ of a curve on $\mathcal{P}$. Differentiating all properties of a rank-1 projector, namely $\psi_\epsilon^\dagger = \psi_\epsilon$, $\mathrm{tr}\,\psi_\epsilon = 1$ and $\psi_\epsilon^2 = \psi_\epsilon$, shows that tangent vectors are traceless Hermitian matrices $\delta\psi$ with  $\lbrace \psi,\delta\psi\rbrace= \delta\psi$ where $\lbrace, \rbrace$ denotes the anti-commutator.  The simplest curve is a linear flow generated by a matrix $M\in\mathbb{C}^{d\times d}$, followed by normalization. We denote the corresponding vector field by $X_M$
\begin{align}
    X_M(\psi) := \frac{d}{d\epsilon}\bigg|_{\epsilon=0}\frac{e^{\epsilon M}\psi\, e^{\epsilon M^\dagger}}{\mathrm{tr}\big(e^{\epsilon M}\psi\, e^{\epsilon M^\dagger}\big)} = M\psi+\psi M^\dagger - \mathrm{tr}\big(\psi(M+M^\dagger)\big)\psi .
    \label{eq: X_M}
\end{align}
The noise of Gisin's equation \eqref{eq: gisin projector stratonovich} is a field of this type. Denote the complex field $Z := B\psi$. Since $B = L-l$, it is easy to see that
\begin{align}
     Z = \big(X_L - i\,X_{iL}\big)/2.
    \label{eq: Z decomposition X}
\end{align}
This is not surprising: Gisin's equation preserves the norm of the state vector and Stratonovich calculus obeys the chain rule, so Stratonovich coefficients of Gisin's equation are tangent vector fields. The time reversal of a diffusion involves the divergence of its noise fields with respect to the reference measure. The following lemma characterizes this geometric quantity.
\begin{lemma}\label{lem: projective divergence}
For every $M\in\mathbb{C}^{d\times d}$, the divergence of $X_M$ with respect to the Fubini-Study measure is
\begin{align}
    \mathrm{div}\, X_M(\psi) = 2\,\mathrm{Re}\big(\mathrm{tr}M - d\,\langle\psi|M|\psi\rangle\big).
    \label{eq: projective divergence}
\end{align}
For the noise field of Gisin's equation, $\mathrm{div}\,Z = \mathrm{tr}L - d\, l$.
\end{lemma}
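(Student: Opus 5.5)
Since the right-hand side of \eqref{eq: projective divergence} is real-linear in $M$, and $M\mapsto X_M$ is real-linear, we decompose $M = H_1 + iH_2 $ with $H_1,H_2$ Hermitian. This splits the problem into two cases: anti-Hermitian generators ($M=iH$) and Hermitian generators ($M=H$).

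\textbf{Anti-Hermitian case.} For $M=iH$ the flow $\psi\mapsto e^{i\epsilon H}\psi e^{-i\epsilon H}$ is unitary. It therefore preserves the Fubini-Study measure, so $\mathrm{div}\,X_{iH}=0$. This agrees with the formula, since $\mathrm{Re}(i\,\mathrm{tr}H - i d\langle H\rangle)=0$.

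\textbf{Hermitian case, via the Jacobian.} For $M=H$ Hermitian the flow is $\Phi_\epsilon(\psi)=e^{\epsilon H}\psi e^{\epsilon H}/\mathrm{tr}(e^{2\epsilon H}\psi)$. I would use $\mathrm{div}\,X = \frac{d}{d\epsilon}\big|_0 \log J_\epsilon$, where $J_\epsilon$ is the Jacobian of $\Phi_\epsilon$ with respect to the Fubini-Study measure. The cleanest route is the Gaussian description of that measure. For a general invertible $G$, the pushforward of Fubini-Study under $\psi\mapsto G\psi G^\dagger/\mathrm{tr}(G\psi G^\dagger)$ has density
$$\frac{|\det G|^{-2}}{\langle\psi|(GG^\dagger)^{-1}|\psi\rangle^{d}}.$$
This follows by sampling $v\sim\mathcal{N}_\mathbb{C}(0,I)$, so that $Gv\sim\mathcal{N}_\mathbb{C}(0,GG^\dagger)$. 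One then integrates out the radial part of a complex Gaussian with covariance $\Sigma$, which gives the angular density $\det\Sigma^{-1}\langle\psi|\Sigma^{-1}|\psi\rangle^{-d}$. By invariance of the measure this density equals $1/J$ evaluated at the preimage. Setting $G=e^{\epsilon M}$ and differentiating at $\epsilon=0$ gives
$$\frac{d}{d\epsilon}\Big|_{0}\log(\text{density}) = -2\mathrm{Re}\,\mathrm{tr}M + d\,\langle\psi|(M+M^\dagger)|\psi\rangle.$$
The divergence is the negative of this quantity, namely $2\mathrm{Re}(\mathrm{tr}M - d\langle M\rangle)$. This argument actually handles all $M$ at once, so the decomposition above serves only as a consistency check. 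The correction from evaluating at the preimage rather than at $\psi$ is $O(\epsilon)$ times a zero-order term, so it does not contribute at first order.

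\textbf{Noise field of Gisin's equation.} For $Z$ I use \eqref{eq: Z decomposition X} and complex-linear extension of the divergence:
$$\mathrm{div}\,Z = \tfrac12\big(\mathrm{div}X_L - i\,\mathrm{div}X_{iL}\big).$$
Substituting $\mathrm{div}X_L=2\mathrm{Re}(\mathrm{tr}L-dl)$ and $\mathrm{div}X_{iL}=2\mathrm{Re}(i(\mathrm{tr}L-dl)) = -2\mathrm{Im}(\mathrm{tr}L-dl)$ gives
$$\mathrm{div}\,Z=\mathrm{Re}(w)+i\,\mathrm{Im}(w)=w, \qquad w=\mathrm{tr}L-dl.$$

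The main obstacle is making the Gaussian density computation rigorous. I would carry it out explicitly: write $v=r u$ with the complex volume element $r^{2d-1}dr\,d\sigma(u)$, and integrate the Gaussian $e^{-r^2\langle u|\Sigma^{-1}|u\rangle}$ over $r$. This produces the factor $\langle u|\Sigma^{-1}|u\rangle^{-d}$, and the phase quotient then passes to $\mathbb{CP}^{d-1}$. An alternative check is the infinitesimal form: compute the divergence of the linear field $v\mapsto Mv$ on $\mathbb{C}^d$ (which is $2\mathrm{Re}\,\mathrm{tr}M$) and subtract the radial contribution $2d\,\mathrm{Re}\langle M\rangle$ coming from the normalization.
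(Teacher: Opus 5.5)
Your proposal is correct and is essentially the paper's argument. The paper also realizes the Fubini--Study measure through a radial measure on $\mathbb{C}^d$ (the unit-ball cone rather than your Gaussian), obtains $J_g(\psi)=|\det g|^2/\|g|\psi\rangle\|^{2d}$ from the real determinant and a polar-coordinate radial integral, and differentiates along $g=e^{\epsilon M}$. Your Gaussian version gives the equivalent pushforward density $|\det G|^{-2}\langle\psi|(GG^\dagger)^{-1}|\psi\rangle^{-d}$ in closed form, which spares you the paper's shrinking-neighbourhood limit.

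One wording fix. The preimage correction drops out at first order because $J_0\equiv1$ has zero gradient, so the correction is $O(\epsilon)\cdot O(\epsilon)$. As you wrote it, "$O(\epsilon)$ times a zero-order term" would in fact contribute at first order. More simply, you can avoid the issue: the continuity equation gives $\partial_\epsilon\rho_\epsilon(\psi)|_{\epsilon=0}=-\mathrm{div}\,X_M(\psi)$ directly at fixed $\psi$.
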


Let us finish with a simple remark that will be useful later on. For Hermitian $H$, \eqref{eq: X_M} becomes the  Schr\"odinger equation $\dot\psi = -i[H,\psi]$, which is the flow of $X_{-iH}$. The converse is also true locally. At a fixed $\psi$, every tangent vector $u$ is $X_{iH}(\psi)$ for some Hermitian $H$. Indeed $H_u := i[\psi,u]$ is Hermitian and traceless, and, using the properties of a tangent vector 
    \begin{align}
        X_{iH_u}(\psi) = i[H_u,\psi] = -\big[[\psi,u],\psi\big] = \psi u + u\psi - 2\,\psi u\psi = u .
        \label{eq: tangent rotation}
    \end{align}
Every tangent direction is therefore the velocity of a unitary rotation, which reflects the homogeneity of $\mathcal{P}$: the unitary group acts transitively on pure states, so its infinitesimal action already spans each tangent space. 
\subsection{Construction of the reverse equation}\label{subsec: construction}

From Section \ref{sec: qsd} we see that the Stratonovich form of Gisin's SDE for the projector \eqref{eq: gisin projector stratonovich} has the canonical form \eqref{eq: complex stratonovich} with 
\begin{align}
    V_0 = \Big(A-\frac12\bar l\Big)\psi + \psi\Big(A-\frac12\bar l\Big)^\dagger,\qquad Z_k = B_k\psi,\quad \bar Z_k = \psi B_k^\dagger.
    \label{eq: gisin fields}
\end{align}
By Lemma \ref{lem: complex reversal} and Lemma \ref{lem: projective divergence}, the reversal corrects the drift by $\sum_k\big(\overline{\Lambda_{k,t}}\,Z_k + \Lambda_{k,t}\,\bar Z_k\big)$, with 
\begin{align}
    \Lambda_{k,t}(\psi) := \mathrm{div}_{p_t}Z_k = \mathrm{tr}L_k - d\,l_k + \partial_{B_k}\log p_t(\psi),
    \label{eq: Lambda}
\end{align}
where we have introduced the notation $\partial_Bf := Zf$ for the derivative of a function along the noise fields. The reverse SDE for the projector is thus given by
\begin{align}
    d\psi = \tilde V_0\,d\tau + \sum_k\big(Z_k\circ d\tilde\xi_k + \bar Z_k\circ d\tilde\xi_k^*\big),\qquad \tilde V_0 = -V_0 + \sum_k\big(\overline{\Lambda_{k,t}}\,B_k\psi + \Lambda_{k,t}\,\psi B_k^\dagger\big) .
\end{align}
Inverting the chain rule $d\psi = |d\psi\rangle \langle \psi| + |\psi\rangle \langle d\psi|$, we obtain (up to a phase) the reverse Stratonovich SDE for the state vector
\begin{align}
    |d\psi\rangle = \Big[-\Big(A-\frac12\bar l\Big) + \sum_k\overline{\Lambda_{k,t}}\,B_k\Big]|\psi\rangle\,d\tau + \sum_kB_k|\psi\rangle\circ d\tilde\xi_k.
\end{align}
Since the noise fields of the backward and forward equation are identical, the It\^{o} correction is again $\bar l/2$, and the It\^{o} drift is $-A+\bar l+\sum_k\overline{\Lambda_{k,t}}B_k$. This proves the main result of this section.

\begin{theorem}[Backward quantum state diffusion]\label{thm: backward qsd}
In reverse time $\tau = T-t$, the time-reversed state obeys the It\^{o} SDE
\begin{align}
    |d\psi\rangle = \Big[-A + \bar{l} + \sum_k\overline{\Lambda_{k,t}}\, (L_k-l_k)\Big]|\psi\rangle\, d\tau + \sum_k(L_k-l_k)|\psi\rangle\, d\tilde{\xi}_k,
    \label{eq: backward qsd}
\end{align}
with $A$, $\bar l$, $\Lambda_{k,t}$ given in \eqref{eq: gisin coefficients}, \eqref{eq: l bar}, \eqref{eq: Lambda} and independent complex Wiener increments $d\tilde{\xi}_k$. Its Stratonovich drift is $-(A-\frac12\bar{l}) + \sum_k\overline{\Lambda_{k,t}}(L_k-l_k)$.
\end{theorem}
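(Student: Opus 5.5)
The plan is to apply Lemma \ref{lem: complex reversal} on the manifold $\mathcal{P}$ of pure states, with Fubini--Study volume as reference measure, to the Stratonovich form \eqref{eq: gisin projector stratonovich} of Gisin's equation. Then I will lift the resulting projector equation back to the state vector.

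\emph{Step 1: identify the Stratonovich fields.} Reading off \eqref{eq: gisin fields}, the drift is $V_0$ and the noise fields are $Z_k=B_k\psi$. I will check that these are genuine tangent fields on $\mathcal{P}$. For $Z_k$ this follows from the decomposition \eqref{eq: Z decomposition X} into $X_L$ and $X_{iL}$. For $V_0$ I write it as $X_M$ with $M=A-\frac12\bar l$. Since $\mathrm{tr}(\psi(M+M^\dagger))=2\mathrm{Re}\langle M\rangle$ vanishes by norm preservation, the normalization term in \eqref{eq: X_M} drops out. This justifies treating the equation as a diffusion on the compact manifold $\mathcal{P}$, with $p_t$ smooth and positive for $t>0$.

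\emph{Step 2: compute the drift correction.} By part 2 of Lemma \ref{lem: complex reversal}, the reverse drift is $-V_0+\sum_k(\overline{\Lambda_{k,t}}Z_k+\Lambda_{k,t}\bar Z_k)$, where $\Lambda_{k,t}=\mathrm{div}\,Z_k+Z_k\log p_t$ by \eqref{eq: div p}. Lemma \ref{lem: projective divergence} gives $\mathrm{div}\,Z_k=\mathrm{tr}L_k-d\,l_k$, which yields \eqref{eq: Lambda}.

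\emph{Step 3: lift to the state vector.} The projector drift has the form $N\psi+\psi N^\dagger$ with $N=-(A-\frac12\bar l)+\sum_k\overline{\Lambda_{k,t}}B_k$. I propose the vector Stratonovich SDE with drift $N|\psi\rangle$ and noise $B_k|\psi\rangle\circ d\tilde\xi_k$. By the Stratonovich chain rule $d\psi=|d\psi\rangle\langle\psi|+|\psi\rangle\langle d\psi|$, it reproduces the projector equation.

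The main obstacle is that this lift is determined only up to a phase. I also need the lift to preserve the norm, so that it projects to the right law. For that I will check that $\mathrm{Re}\langle\psi|N|\psi\rangle=0$:
\begin{itemize}
\item $\mathrm{Re}\langle A-\tfrac12\bar l\rangle=0$ from the forward equation.
\item $\langle\psi|B_k|\psi\rangle=0$, so the $\Lambda$ terms contribute nothing.
\end{itemize}
Any residual imaginary part only changes the global phase, which is irrelevant on $\mathcal{P}$.

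\emph{Step 4: convert to It\^o form.} The Stratonovich-to-It\^o correction depends only on the noise fields $B_k|\psi\rangle$, which are the same as in the forward equation. It is therefore the same correction $\bar l/2$ as in Section \ref{sec: qsd}: the $\xi\xi^*$ cross terms of $\frac12\sum_k(\partial B_k|\psi\rangle)\cdot\bar Z_k$ yield $\frac12\bar l|\psi\rangle$ modulo phase. This sign should be rechecked against the forward computation. Adding it gives the It\^o drift $-A+\frac12\bar l+\frac12\bar l+\sum_k\overline{\Lambda_{k,t}}B_k$, which is \eqref{eq: backward qsd}.
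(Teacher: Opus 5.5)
Your proposal is correct and is essentially the paper's argument: apply Lemma \ref{lem: complex reversal} with Lemma \ref{lem: projective divergence} to the Stratonovich projector equation \eqref{eq: gisin projector stratonovich}, lift back to the state vector up to a phase, and reuse the forward It\^o correction $\bar l/2$ because the noise fields are unchanged (the sign you flag is indeed $+\frac12\bar l$). Your extra checks, $\mathrm{Re}\langle A-\frac12\bar l\rangle=0$ and $\langle\psi|B_k|\psi\rangle=0$, correctly make explicit the tangency and norm preservation that the paper leaves implicit; note only that smoothness and positivity of $p_t$ are a standing assumption, as in Section \ref{sec: reversal}, and do not follow from tangency.
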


As a self-consistency check, we confirm that if the backward dynamics starts from $p_T$ then the ensemble average $\tilde\rho(\tau) := \mathbb{E}[\psi_\tau]$ follows the reverse Lindblad $-\mathcal{L}$ evolution.  The backward SDE of the projector reads in It\^o form
\begin{align}
    d\psi = \Big(-\mathcal{L}(\psi) + K_t(\psi)\Big)d\tau + \sum_k\big(B_k\psi\, d\tilde\xi_k + \psi B_k^\dagger d\tilde{\xi}_k^*\big),
    \label{eq: backward projector}
\end{align}
with 
\begin{align}
    K_t(\psi) = 2\Big(\sum_kB_k\psi B_k^\dagger + \bar{l}\,\psi\Big) + \sum_k\big(\overline{\Lambda_{k,t}}\,Z_k + \Lambda_{k,t}\,\bar Z_k\big).
\end{align}
At reverse time $\tau$, the law of $\psi_\tau$ is $p_t$ and we will show that $\int p_tK_t(\psi)dm=0$. By \eqref{eq: Lambda} and \eqref{eq: div p},  $p_t\Lambda_{k,t} = \mathrm{div}(p_tZ_k)$. Integrating by parts gives
\begin{align*}
\int\sum_k\big(p_t\overline{\Lambda_{k,t}}\,Z_k + p_t\Lambda_{k,t}\,\bar Z_k\big)\,dm = -\int p_t\sum_k\Big[(\bar Z_k\cdot\partial)Z_k + (Z_k\cdot\partial)\bar Z_k\Big]dm= -2\int p_t\Big(\sum_kB_k\psi B_k^\dagger + \bar l\,\psi\Big)\,dm ,
\end{align*}
where the last equality used the fact that $\sum_k[(\bar Z_k\cdot\partial)Z_k + (Z_k\cdot\partial)\bar Z_k]$ is twice the Stratonovich-It\^o correction, which can be read off from the forward equations \eqref{eq: projector sde} and \eqref{eq: gisin projector stratonovich} to be $\sum_kB_k\psi B_k^\dagger + \bar l\,\psi$. This shows that the expectation of $K_t$ vanishes. If we were to start from another ensemble, the law of $\psi_\tau$ would be some distribution $q$ different from $p_t$ and the expectation of $K_t$ no longer vanishes since $q\Lambda_{k,t}$ is not a total divergence.

\begin{corollary}[Reverse Lindbladian]\label{corr: reverse lindblad}
Starting from $p_T$, the density matrix of the backward ensemble obeys
\begin{align}
    \frac{d\tilde{\rho}}{d\tau} = -\mathcal{L}[\tilde\rho].
\end{align}
\end{corollary}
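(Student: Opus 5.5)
The plan is to take expectations of the It\^o SDE \eqref{eq: backward projector} for the projector. We start the backward dynamics from $p_T$. By construction of the reversal (Section \ref{sec: reversal}), the law of $\psi_\tau$ is then $p_{T-\tau}=p_t$ for every $\tau\in[0,T]$. We write \eqref{eq: backward projector} in integral form between $\tau$ and $\tau+h$ and take expectations. The stochastic integrals $\int B_k\psi\,d\tilde\xi_k$ and $\int\psi B_k^\dagger\,d\tilde\xi_k^*$ have bounded integrands: $\psi$ is a rank-one projector and $B_k=L_k-l_k$ is bounded on the compact manifold $\mathcal{P}$. These integrals are therefore true martingales with zero mean. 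This gives
\begin{align*}
\tilde\rho(\tau+h)-\tilde\rho(\tau) = \int_\tau^{\tau+h}\Big(-\mathbb{E}[\mathcal{L}(\psi_s)] + \mathbb{E}[K_{T-s}(\psi_s)]\Big)\,ds .
\end{align*}

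Two observations finish the argument. First, $\mathcal{L}$ is linear, so $\mathbb{E}[\mathcal{L}(\psi_s)]=\mathcal{L}(\mathbb{E}[\psi_s])=\mathcal{L}(\tilde\rho(s))$. Second, $\mathbb{E}[K_t(\psi_\tau)]=\int p_tK_t\,dm$, and this vanishes by the integration-by-parts computation just before the corollary. That computation rests on three facts: the identity $p_t\Lambda_{k,t}=\mathrm{div}(p_tZ_k)$ from \eqref{eq: Lambda} and \eqref{eq: div p}; the divergence theorem on the closed manifold $\mathcal{P}$, which sends $\int \mathrm{div}(p_t Z_k)\,\bar Z_k\,dm$ (applied componentwise to the matrix-valued field) to $-\int p_t (Z_k\cdot\partial)\bar Z_k\,dm$; and the identification of $\sum_k[(\bar Z_k\cdot\partial)Z_k+(Z_k\cdot\partial)\bar Z_k]$ with twice the It\^o--Stratonovich correction $\sum_kB_k\psi B_k^\dagger+\bar l\psi$, read off by comparing \eqref{eq: projector sde} with \eqref{eq: gisin projector stratonovich}. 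With these, the integrand reduces to $-\mathcal{L}(\tilde\rho(s))$. The map $s\mapsto\tilde\rho(s)$ is continuous, so dividing by $h$ and letting $h\to0$ gives $d\tilde\rho/d\tau=-\mathcal{L}[\tilde\rho]$.

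The main obstacle is regularity at the endpoint $\tau=T$ (that is, $t=0$). The score $\partial_{B_k}\log p_t$ may blow up there: $p_0$ may be singular, for example a finite ensemble. Integration by parts needs $p_t$ smooth and positive, which the reversal in Section \ref{sec: reversal} already assumes for $t>0$. I would therefore prove the identity on $[0,T)$, i.e.\ for $t>0$. The exact It\^o integrals involve only bounded integrands, and the only unbounded term, $\Lambda_{k,t}$, enters through $\int p_t\Lambda_{k,t}\,dm$, where it is integrable for $t>0$. One should also check that the matrix-valued integration by parts is legitimate. The fields $Z_k,\bar Z_k$ are smooth matrix-valued functions on $\mathcal{P}$, so one can apply the scalar divergence theorem to each matrix entry $(\bar Z_k)_{ij}$ as a test function. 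If needed, continuity extends the equation to the closed interval.
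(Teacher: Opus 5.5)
Your proposal is correct and follows the paper's argument: you take expectations of the It\^o SDE \eqref{eq: backward projector}, use the linearity of $\mathcal{L}$, and show $\int p_tK_t\,dm=0$ by writing $p_t\Lambda_{k,t}=\mathrm{div}(p_tZ_k)$, integrating by parts, and identifying the result with twice the It\^o--Stratonovich correction. Your additional care about the noise terms being true martingales and about restricting to $t>0$, where $p_t$ is smooth and positive, fills in details the paper leaves implicit without changing the route.
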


\section{Starting from a universal prior}\label{sec: prior}
In score-based generative models \cite{sohldickstein2015deep, ho2020denoising, song2021scorebased}, the forward process is designed to forget its initial distribution, the data distribution. After a long time $T$ its law $p_T$ is close to a stationary distribution $p_\infty$, a standard Gaussian in flat space, which does not depend on the data and is easy to sample. The reverse SDE, with the score of $p_t$, is then started from $p_\infty$ instead of $p_T$, and it produces approximate samples of $p_0$.  In this section we ask what the backward SDE \eqref{eq: backward qsd} produces when it is started from a different ensemble from $p_T$, in particular from a universal one. We first show that the unravelling of the depolarizing channel leads to a Brownian motion on $\mathbb{CP}^{d-1}$. We then bound the error of the distribution that arises from the backward SDE \eqref{eq: backward qsd} under various settings: exact scores and continuous time, learned scores and continuous time and finally learned scores and discrete time.

\subsection{The depolarizing channel}\label{subsec: depolarizing in d}
Let $\sigma_1,\dots,\sigma_{d^2-1}$ be an orthogonal basis of the traceless Hermitian $d\times d$ matrices normalized by $\mathrm{tr}(\sigma_j\sigma_k) = 2\delta_{jk}$, and let $H=0$ and $L_k = \sqrt\gamma\,\sigma_k$. We will show that the generator of the resulting diffusion process is proportional to the Laplace-Beltrami operator. From Lemma \ref{lem: complex reversal} we know that the generator of a Stratonovich SDE is given by
\begin{align}
    \mathcal{W}f = V_0f - \frac12\sum_k\Big[(\overline{\mathrm{div}\, Z_k})\,Z_k+(\mathrm{div}\, Z_k)\bar Z_k \Big]f +\frac{1}{2}\mathrm{div}(D\nabla f).
\end{align}
For Gisin's SDE for the projector \eqref{eq: gisin fields}, these are given by
\begin{align}
     V_0 =  \Big(A-\frac12\bar l\Big)\psi + \psi\Big(A-\frac12\bar l\Big)^\dagger,\qquad Z_k = B_k\psi,\quad \bar Z_k = \psi B_k^\dagger . 
\end{align}
Using the completeness relation of $\sigma_k$ matrices
\begin{align}
    \sum_k(\sigma_k)_{ij}(\sigma_k)_{ab} = 2\,\delta_{ib}\delta_{ja} - \frac2d\,\delta_{ij}\delta_{ab} ,
    \label{eq: completeness}
\end{align}
we can show the following identities for any matrix $X\in \mathbb{C}^{d\times d}$
\begin{align}
    \sum_k\sigma_kX\sigma_k = 2\,\mathrm{tr}(X)\,I - \frac2dX,\qquad \sum_k\mathrm{tr}(\sigma_kX)\,\sigma_k = 2X - \frac2d\,\mathrm{tr}(X)\,I,\qquad \sum_k\sigma_k^2 = \frac{2(d^2-1)}{d}\,I .
    \label{eq: completeness consequences}
\end{align}
From these identities and Lemma \ref{lem: projective divergence}, it is straightforward to show that both $V_0$ and $\sum_k(\overline{\mathrm{div}\, Z_k})\,Z_k$ are zero. Therefore
\begin{align*}
    \mathcal{W}f 
    = \frac12\,\mathrm{div}\big(D\nabla f\big).
\end{align*}
Since the Laplace-Beltrami operator is the divergence of the gradient, we just need to show that $D\nabla f$ is proportional to the gradient. The gradient is defined by $g(\mathrm{grad}\,f,u) = df(u)$ for every tangent vector $u$, where $g(u,v) := \mathrm{tr}(uv)$ is the metric induced by the Hilbert-Schmidt inner product of Hermitian matrices. The Riemannian volume of $g$ is unitarily invariant, so after normalization it is the Fubini-Study measure. Fix $\psi$ and a real function $f$, and let $G = \mathrm{grad}\,f$ at $\psi$. As a tangent vector, $G$ is a traceless Hermitian matrix with $\psi G + G\psi = G$. Extending $df(u) = \mathrm{tr}(Gu)$ complex-linearly to $Z_k$ gives $Z_kf = \mathrm{tr}(GB_k\psi) = \sqrt\gamma\,\mathrm{tr}(\sigma_k\psi G)$. By \eqref{eq: complex covariance},
\begin{align*}
    D\nabla f = \sum_k\Big[\overline{(Z_kf)}\,Z_k + (Z_kf)\,\bar Z_k\Big] = \gamma\sum_k\Big[\mathrm{tr}(\sigma_kG\psi)\,\big(\sigma_k-\langle\sigma_k\rangle\big)\psi + \mathrm{tr}(\sigma_k\psi G)\,\psi\big(\sigma_k-\langle\sigma_k\rangle\big)\Big].
\end{align*}
With help of the second identity in \eqref{eq: completeness consequences}, it can  be shown that
\begin{align}
    D\nabla f  =  2\gamma\,\mathrm{grad}\,f,\label{eq: D grad}
\end{align}
thus
\begin{align}
    \mathcal{W} = \gamma\,\Delta_g.
    \label{eq: depol brownian}
\end{align}
Quantum state diffusion for the depolarizing channel is thus a Brownian motion on $\mathbb{CP}^{d-1}$.  Let us point out a few consequences of this fact. 
First, by the divergence theorem, $\int\mathcal{W}f\,dm = \gamma\int\Delta_gf\,dm = 0$ for every $f$. Therefore the Fubini-Study measure is stationary, its density is $p_\infty=1$.
Second, by Green's identity, $\Delta_g$ is symmetric with respect to $dm$, $\mathcal{W}$ is thus self-adjoint and the process is reversible.
Third, the eigenvalues of $\mathcal{W}$ are $-\lambda_k$ with \cite{ikeda1978spectra, boucetta2007spectra}
    \begin{align}
        \lambda_k = 2\gamma\,k(k+d-1),\qquad k=0,1,2,\dots
        \label{eq: depol spectrum}
    \end{align}
The corresponding eigenspace $\mathcal{H}_k$ has dimension $(d-1)(d-1+2k)\big((d+k-2)!\big)^2/((d-1)!\,k!)^2$. The spectral gap is $\lambda_1 = 2\gamma d$; its eigenspace $\mathcal{H}_1$ has dimension $ d^2-1$ and consists of the traceless linear observables. Indeed, for traceless $O$, $\mathcal{W}\langle O\rangle = \langle\mathcal{L}^\dagger[O]\rangle = -2\gamma d\,\langle O\rangle$ by \eqref{eq: observable sde} and \eqref{eq: completeness consequences}, and the functions $\langle O\rangle$ with traceless Hermitian $O$ form a space of dimension $d^2-1$.  The modes $k\ge2$, polynomials of higher degree in $\psi$, describe the structure of the ensemble beyond $\rho$ and decay faster. Note that when we talk about the spectral gap in this context, we refer to the generator $\mathcal{W}$ that acts on functions on $\mathbb{CP}^{d-1}$. This should not be confused with the Lindblad operators $\mathcal{L}$ and $\mathcal{L}^\dagger$ that act on matrices. For the depolarizing channel 
\begin{align}
    \mathcal{L}[\rho] =  -2\gamma d\,\Big(\rho-\frac Id\Big),
    \label{eq: depol d}
\end{align}
so the two gaps coincide in this particular case but in general we have $\mathrm{gap}(\mathcal{W})\leq \mathrm{gap}(\mathcal{L})$.

To close this subsection, we write down the forward and backward SDE, which follows from the fact that both $V_0$ and $\sum_k(\overline{\mathrm{div}\, Z_k})\,Z_k$ are zero.
\begin{proposition}[Forward and backward SDE of the depolarizing channel]\label{prop: forward backward depol}
    
For the depolarizing channel, the forward Stratonovich SDE of $\psi$ is
\begin{align}
    d\psi =  \sum_k\big(B_k \psi\circ d\xi_k + \psi B_k^\dagger \circ d\xi_k^*\big),\label{eq: forward depol}
\end{align}
and the reverse SDE is 
\begin{align}
d\psi = 2\gamma \,\mathrm{grad}\, \log p_t\,d\tau+  \sum_k\big(B_k \psi\circ d\tilde \xi_k + \psi B_k^\dagger\circ d\tilde\xi_k^*\big).
\end{align}
\end{proposition}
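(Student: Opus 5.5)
The plan is to specialise the general Stratonovich forward and reverse formulas to the depolarizing channel. For the forward equation, start from the Stratonovich projector SDE \eqref{eq: gisin projector stratonovich}, with drift $V_0=(A-\frac12\bar l)\psi+\psi(A-\frac12\bar l)^\dagger$ as in \eqref{eq: gisin fields}, and show that $V_0=0$. With $H=0$ and $L_k=\sqrt\gamma\sigma_k$ Hermitian, we have $l_k=\sqrt\gamma\langle\sigma_k\rangle$ real. By \eqref{eq: completeness consequences}, $\sum_k L_k^\dagger L_k=\frac{2\gamma(d^2-1)}{d}I$ and
\[
\sum_k|l_k|^2=\gamma\sum_k\mathrm{tr}(\sigma_k\psi)^2=\gamma\,\mathrm{tr}\Big(\psi\big(2\psi-\tfrac2d I\big)\Big)=2\gamma\Big(1-\tfrac1d\Big),
\]
\[
\sum_k l_kL_k=\gamma\sum_k\mathrm{tr}(\sigma_k\psi)\sigma_k=2\gamma\psi-\tfrac{2\gamma}{d}I .
\]
Next compute $\bar l=\sum_k|l_k|^2-\sum_k\langle L_k^\dagger L_k\rangle=2\gamma(1-\frac1d)-\frac{2\gamma(d^2-1)}{d}$, which is a constant. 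Assemble $A-\frac12\bar l$ as $c\,I+2\gamma\psi$ for an explicit scalar $c$. Then $V_0=2c\,\psi+4\gamma\psi$, and the check is that $c=-2\gamma$. By the computation above this reduces to scalar bookkeeping, and the paper already asserts $V_0=0$.

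For the reverse equation, apply Lemma \ref{lem: complex reversal} part 2 in the form \eqref{eq: complex divergence sum}. This gives $\tilde V_0=-V_0+\sum_k(\overline{\Lambda_k}Z_k+\Lambda_k\bar Z_k)$ with $\Lambda_k=\mathrm{div}\,Z_k+Z_k\log p_t$ as in \eqref{eq: Lambda}. Split the correction into a geometric piece and a score piece. The geometric piece is $\sum_k(\overline{\mathrm{div}Z_k}Z_k+\mathrm{div}Z_k\bar Z_k)$. By Lemma \ref{lem: projective divergence}, $\mathrm{div}Z_k=\mathrm{tr}L_k-d\,l_k=-d\sqrt\gamma\langle\sigma_k\rangle$. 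Hence $\sum_k\overline{\mathrm{div}Z_k}Z_k=-d\gamma\sum_k\langle\sigma_k\rangle(\sigma_k-\langle\sigma_k\rangle)\psi$. The second identity of \eqref{eq: completeness consequences} gives $\sum_k\langle\sigma_k\rangle\sigma_k=2\psi-\frac2dI$ and $\sum_k\langle\sigma_k\rangle^2=2-\frac2d$. Therefore
\[
\sum_k\langle\sigma_k\rangle(\sigma_k-\langle\sigma_k\rangle)\psi=\Big(2\psi-\tfrac2d I\Big)\psi-\Big(2-\tfrac2d\Big)\psi=0,
\]
so the geometric piece vanishes. This matches the claim stated before the proposition.

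The score piece is $\sum_k\big[\overline{(Z_k\log p_t)}Z_k+(Z_k\log p_t)\bar Z_k\big]$. By \eqref{eq: complex covariance} this is exactly $D\nabla\log p_t$, and by \eqref{eq: D grad} it equals $2\gamma\,\mathrm{grad}\log p_t$. The noise fields are unchanged, which yields the stated reverse SDE. The forward SDE follows directly from $V_0=0$.

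The main obstacle is the identity \eqref{eq: D grad}, which is only sketched in the text, so I would verify it. Take $G=\mathrm{grad}f$, with $G$ Hermitian, traceless, and satisfying $\psi G+G\psi=G$, which implies $\psi G\psi=0$. The second identity gives $\sum_k\mathrm{tr}(\sigma_kG\psi)\sigma_k=2G\psi-\frac2d\mathrm{tr}(G\psi)I=2G\psi$, since $\mathrm{tr}(G\psi)=\mathrm{tr}(\psi G\psi)=0$. Similarly $\sum_k\mathrm{tr}(\sigma_kG\psi)\langle\sigma_k\rangle=2\,\mathrm{tr}(G\psi\psi)=0$. So the first sum is $\gamma\cdot2G\psi\psi=2\gamma G\psi$. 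The conjugate sum gives $2\gamma\psi G$. The total is $2\gamma(G\psi+\psi G)=2\gamma G$, which establishes \eqref{eq: D grad}.
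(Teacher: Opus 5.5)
Your proposal is correct and follows the paper's own argument. You show $V_0=0$ and $\sum_k(\overline{\mathrm{div}\,Z_k})Z_k=0$ from the completeness identities, so the reverse drift of Lemma \ref{lem: complex reversal} reduces to $D\nabla\log p_t=2\gamma\,\mathrm{grad}\log p_t$ by \eqref{eq: D grad}, and your verification of \eqref{eq: D grad} is the intended use of the second completeness identity. The scalar check you left open does close: $A=2\gamma\psi-\gamma(d+1)I$ and $\bar l=-2\gamma(d-1)$, so $A-\frac12\bar l=2\gamma(\psi-I)$, i.e.\ $c=-2\gamma$.
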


\subsection{Reversal with exact score}\label{subsec: prior error}
In this subsection, we start the backward SDE from a density $q\neq p_T$ and we seek to bound the distance between $\hat p_0$ and $p_0$, where $\hat p_t$ denotes the density of the backward dynamics at the physical time $t\in[0,T]$. In Section \ref{sec: reversal} we have introduced the infinitesimal generators that govern instantaneous rate of change of functions and probability distributions; now in order to see how $\hat{p}_0$ depends on $q$, we use the exponential of these generators. We start with general $q$ before setting $q=p_\infty$.

Let $P_s$ be the transition operator of the forward process,
\begin{align}
    (P_sh)(x) := \mathbb{E}\big[h(X_{t+s})\,\big|\,X_t=x\big].
    \label{eq: transition operator}
\end{align}
For Gisin's equation the coefficients do not depend on time, so $P_s$ does not depend on $t$.
In the language of Section \ref{sec: reversal}, $P_s$ is the semigroup generated by $\mathcal{W}$, $P_s = e^{s\mathcal{W}}$. Its adjoint in $L^2(dm)$ transports densities, $p_{t+s} = P_s^*p_t$ with $P_s^* = e^{s\mathcal{W}^*}$, which is the integrated form of the Fokker-Planck equation $\partial_tp_t = \mathcal{W}^*p_t$. The two actions are related by
\begin{align}
    \int (P_sh)\,p\,dm = \int h\,(P_s^*p)\,dm .
    \label{eq: adjoint relation}
\end{align}
 Let $P_s(x,y)$ be the transition density i.e. the density of $X_{t+s}$ at $y$ given $X_t=x$, so that $(P_sh)(x) = \int P_s(x,y)\,h(y)\,dm(y)$. By Bayes' rule, the transition density of the backward process from reverse time $0$ to $\tau = T-t$ is
\begin{align}
    K_{T-t}(y,x) = \frac{p_t(x)\,P_{T-t}(x,y)}{p_T(y)} .
    \label{eq: backward kernel}
\end{align}
Integrating it against $q$ gives the following standard result
\begin{align}
    \hat p_t = p_t\cdot P_{T-t}\Big(\frac{q}{p_T}\Big).
    \label{eq: h transform}
\end{align}
 Two consequences are immediate. First, pure initial states are recovered exactly. Second, the  map $q\to  p_t\cdot P_{T-t}(q/p_T)$ is a classical channel i.e. a Markov operator on probability densities over $\mathcal{P}$: it is linear, it preserves positivity and mass. For any $f$-divergence $D$ (such as the total variation distance, the relative entropy and the $\chi^2$ divergence), the data-processing inequality says that
    \begin{align}
        D(\hat p_0\,\|\,p_0) \le  D(q\,\|\,p_T).
        \label{eq: data processing}
    \end{align}

Now consider the stationary density $q = p_\infty = 1$. Let us write $p_t = 1+\epsilon_t$, where $\int\epsilon_t\,dm = 0$. It can be shown that the deviation $\epsilon_t$ decays at the rate $\lambda_1$, uniformly in $\psi$ (see \eqref{eq: epsilon decay}). As a result, the data-processing inequality \eqref{eq: data processing} gives $\mathrm{TV}(\hat p_0,p_0)\le\mathrm{TV}(1,p_T) = O(e^{-\lambda_1T})$. However, the reversibility of the Brownian motion makes the actual decay rate larger than this.  We outline the main idea here and refer to the proof of Proposition \ref{prop: prior error} for full details. According to \eqref{eq: h transform}  
\begin{align}
    \frac{\hat p_0}{p_0} = P_T\Big(\frac{1}{p_T}\Big) = P_T( 1-\epsilon_T+\epsilon_T^2)+O(e^{-3\lambda_1T}).
\end{align}
Since $P$ is mass preserving, $P_T1 = 1$. From our discussion in the previous subsection, $P$ is symmetric (since $\mathcal{W}$ is self-adjoint) and as a result $P_T\epsilon_T = \epsilon_{2T}$, which is $O(e^{-2\lambda_1T})$. For  $\epsilon_T^2$, we split it into $\epsilon_T^2 = \|\epsilon_T\|^2 + h$ with its mean $\|\epsilon_T\|^2 = \int\epsilon_T^2\,dm$ and the fluctuation $h$. The mean is $O(e^{-2\lambda_1T})$ and is preserved by $P_T$ while the fluctuation, having no component in $\mathcal{H}_0$, is suppressed by  a factor of $e^{-\lambda_1T}$. The leading order of the relative error between $\hat p_0$ and $p_0$ is thus $-\epsilon_{2T}+ \|\epsilon_T\|^2$. To find the exact coefficient of $e^{-2\lambda_1T}$, we need the projection of $\epsilon_T$ into $\mathcal{H}_1$. As discussed above, the space $\mathcal{H}_1$ consists of the functions $\langle O\rangle$ with traceless Hermitian $O$. The projection $\Pi_1\epsilon_t$ is the unique element of $\mathcal{H}_1$ whose inner products with all elements of $\mathcal{H}_1$ agree with those of $\epsilon_t$. For traceless $O$, the latter are
\begin{align*}
    \int\epsilon_t\,\langle O\rangle\,dm = \int p_t\,\langle O\rangle\,dm - \int \langle O\rangle\,dm = \mathrm{tr}\big(O\rho(t)\big) - \frac{\mathrm{tr}\,O}{d} = \mathrm{tr}\Big(O\Big(\rho(t)-\frac Id\Big)\Big),
\end{align*}
where we used $\int\psi\,dm = I/d$ by unitary invariance. To find the function in $\mathcal{H}_1$ that matches these inner products, we need the second moment of the Fubini-Study measure \cite{harrow2013church},
\begin{align*}
    \int\psi\otimes\psi\;dm = \frac{I+\mathrm{SWAP}}{d(d+1)},
\end{align*}
where $\mathrm{SWAP}$ exchanges the two tensor factors. For traceless $X$ and $O$ it gives $\int \langle X\rangle\langle O\rangle\,dm = \mathrm{tr}\big((X\otimes O)(I+\mathrm{SWAP})\big)/d(d+1)  = \mathrm{tr}(XO)/d(d+1)$. 
\begin{align*}
    \Pi_1\epsilon_t(\psi) = d(d+1)\,\Big\langle\psi\Big|\rho(t)-\frac Id\Big|\psi\Big\rangle,\qquad \|\Pi_1\epsilon_t\|^2 = d(d+1)\,\mathrm{tr}\Big[\Big(\rho(t)-\frac Id\Big)^2\Big].
\end{align*}
Since  $\rho(t)-I/d=e^{-\lambda_1t}(\rho(0)-I/d)$, both terms are $\propto e^{-2\lambda_1T}$. Thus, to leading order, only the density matrix matters.  The structure of the ensemble that is invisible in $\rho$ lives in the modes $k\ge2$, which are forgotten faster.
\begin{proposition}[Forgetting the universal prior]\label{prop: prior error}
Consider the depolarizing channel in dimension $d$, with Lindblad gap $\lambda_1 = 2\gamma d$, and an initial ensemble $p_0$ with density matrix $\rho(0)$. Let $\hat p_0$ be the output of the backward SDE started from the universal prior. Then, as $T\to\infty$,
\begin{align}
    \frac{\hat p_0(\psi)}{p_0(\psi)} = 1 - d(d+1)\,e^{-2\lambda_1T}\Big(\langle\psi|\rho(0)|\psi\rangle - \mathrm{tr}\,\rho(0)^2\Big) + O(e^{-3\lambda_1T}),
    \label{eq: prior error}
\end{align}
where the remainder is bounded by a constant times $e^{-3\lambda_1T}$, with a constant that depends only on $d$ and $\gamma$: the bound is uniform in $\psi$ and in the initial ensemble.
\end{proposition}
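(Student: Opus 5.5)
We start from the $h$-transform identity \eqref{eq: h transform} at $t=0$ with $q=1$, which gives $\hat p_0/p_0 = P_T(1/p_T)$. Write $p_T = 1+\epsilon_T$ with $\epsilon_T = P_T^*\epsilon_0 = P_T\epsilon_0$; the two coincide because $\mathcal W=\gamma\Delta_g$ is self-adjoint.

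The first step is a uniform decay estimate $\|\epsilon_T\|_\infty \le C(d,\gamma)e^{-\lambda_1 T}$, with a constant that does not depend on $p_0$. To get it, use $\epsilon_T(y)=\int (P_T(x,y)-1)p_0(x)\,dm(x)$, so that $\|\epsilon_T\|_\infty\le \sup_{x,y}|P_T(x,y)-1|$. Expand the heat kernel in eigenfunctions, $P_T(x,y)-1=\sum_{k\ge1}e^{-\lambda_k T}\sum_j\phi_{kj}(x)\phi_{kj}(y)$. The addition theorem for the homogeneous space gives $\sum_j|\phi_{kj}(x)|^2=\dim\mathcal H_k$. The dimensions grow polynomially in $k$ while $\lambda_k$ grows quadratically, so for $T\ge1$ the sum is at most $C e^{-\lambda_1 T}$. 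This uniformity over all initial ensembles, including singular ones such as point masses, is exactly what the heat kernel bound delivers.

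Next, for $T$ large enough that $\|\epsilon_T\|_\infty\le 1/2$, expand $1/p_T = 1-\epsilon_T+\epsilon_T^2 - \epsilon_T^3/(1+\epsilon_T)$. The last term is bounded by $2C^3e^{-3\lambda_1T}$ uniformly. Since $P_T$ is a Markov contraction in sup norm, applying it preserves this bound. We also have $P_T1=1$ and $P_T\epsilon_T=\epsilon_{2T}$. Writing $\epsilon_T^2=\|\epsilon_T\|^2+h$ with $\int h\,dm=0$, we get $\|h\|_\infty\le 2C^2e^{-2\lambda_1T}$. We need $\|P_Th\|_\infty=O(e^{-3\lambda_1T})$, and this follows from $P_Th(x)=\int(P_T(x,y)-1)h(y)\,dm$ together with the kernel bound above. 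Hence
\[
\frac{\hat p_0}{p_0}=1-\epsilon_{2T}+\|\epsilon_T\|^2+O(e^{-3\lambda_1T}).
\]

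Finally, isolate the $\mathcal H_1$ components. Write $\epsilon_{2T}=e^{-2\lambda_1T}\Pi_1\epsilon_0+R$, where $R$ lies in the modes $k\ge2$. The same kernel argument gives $\|R\|_\infty\le Ce^{-\lambda_2 T}$ with $\lambda_2=2\gamma\cdot2(d+1)>\tfrac32\lambda_1$. Therefore $R=O(e^{-2\lambda_2T})$ after doubling $T$, which is certainly $O(e^{-3\lambda_1T})$. For the norm, $\|\epsilon_T\|^2=\sum_k e^{-2\lambda_kT}\|\Pi_k\epsilon_0\|^2$. Here we must be careful: $\epsilon_0$ may not be in $L^2$. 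We avoid this by writing $\|\epsilon_T\|^2=\int\epsilon_{2T}\epsilon_0\,dm$, which equals $\int \epsilon_{2T}p_0\,dm$ and is bounded via the uniform estimate on the $k\ge2$ tail.

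Then insert the computed projection $\Pi_1\epsilon_t=d(d+1)\langle\psi|\rho(t)-I/d|\psi\rangle$ and its norm $d(d+1)\mathrm{tr}[(\rho(t)-I/d)^2]$, together with $\rho(t)-I/d=e^{-\lambda_1t}(\rho(0)-I/d)$. This gives
\[
-d(d+1)e^{-2\lambda_1T}\Big(\langle\psi|\rho(0)|\psi\rangle-\tfrac1d-\mathrm{tr}\rho(0)^2+\tfrac1d\Big),
\]
which is \eqref{eq: prior error}. Small $T$ is absorbed into the constant, because $\hat p_0/p_0$ is bounded uniformly for $T\ge T_0$.

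The main obstacle is the uniform, ensemble-independent heat kernel estimate $\sup_{x,y}|P_T(x,y)-\sum_{k\le K}\cdots|\le Ce^{-\lambda_{K+1}T}$, which relies on the addition theorem and on polynomial growth of $\dim\mathcal H_k$. Everything else is bookkeeping with $P_T$ being a symmetric Markov contraction.
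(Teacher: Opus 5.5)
Your proposal is correct and follows the paper's proof essentially step for step. Both use the $h$-transform $\hat p_0/p_0 = P_T(1/p_T)$, a truncated expansion of $1/p_T$, the addition-theorem bound $|\Pi_k(x,y)|\le \dim\mathcal H_k$ (which gives ensemble-uniform decay of $\epsilon_T$ and of zero-mean functions under $P_T$), and isolation of the $\mathcal H_1$ component; the only cosmetic difference is that you evaluate $\|\epsilon_T\|^2$ as $\int\epsilon_{2T}\,p_0\,dm$ rather than as $\sum_k\|\Pi_k\epsilon_T\|^2$ with the $k\ge 2$ tail bounded in sup norm, and both give the same $e^{-2\lambda_1T}$ coefficient with an $O(e^{-2\lambda_2T})$ remainder.
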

Since the remainder is uniform in $\psi$, \eqref{eq: prior error} can be integrated against $p_0$. This gives the distances between $\hat p_0$ and $p_0$, which can be compared with the data-processing inequality \eqref{eq: data processing}. The proof is given in Appendix \ref{app: prior error}.
\begin{corollary}[Integrated error]\label{cor: prior error integrated}
Under the assumptions of Proposition \ref{prop: prior error}, as $T\to\infty$,
\begin{align}
    \mathrm{TV}(\hat p_0,p_0) &= \frac{d(d+1)}{2}\,e^{-2\lambda_1T}\,\mathbb{E}_{p_0}\Big|\langle\psi|\rho(0)|\psi\rangle - \mathrm{tr}\,\rho(0)^2\Big| + O(e^{-3\lambda_1T}),
    \label{eq: prior error tv}\\
    \mathrm{KL}(\hat p_0\,\|\,p_0) &= \frac{d^2(d+1)^2}{2}\,e^{-4\lambda_1T}\,\mathrm{Var}_{p_0}\big(\langle\psi|\rho(0)|\psi\rangle\big) + O(e^{-5\lambda_1T}),
    \label{eq: prior error kl}
\end{align}
whereas the right-hand sides of \eqref{eq: data processing} are $\mathrm{TV}(1,p_T) = O(e^{-\lambda_1T})$ and
\begin{align}
    \mathrm{KL}(1\,\|\,p_T) = \frac{d(d+1)}{2}\,e^{-2\lambda_1T}\Big(\mathrm{tr}\,\rho(0)^2-\frac1d\Big) + O(e^{-3\lambda_1T}).
    \label{eq: prior error dp}
\end{align}
\end{corollary}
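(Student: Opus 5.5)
The plan is to integrate the uniform expansion of Proposition \ref{prop: prior error} against $p_0$, and to compute $\mathrm{KL}(1\,\|\,p_T)$ separately by a second-order expansion in $\epsilon_T$. Write
\begin{align*}
\frac{\hat p_0}{p_0} = 1 + e^{-2\lambda_1 T}c\,u(\psi) + R_T(\psi),\qquad c := -d(d+1),\qquad u(\psi) := \langle\psi|\rho(0)|\psi\rangle - \mathrm{tr}\,\rho(0)^2,
\end{align*}
with $\sup_\psi|R_T|\le Ce^{-3\lambda_1T}$ uniformly in $\psi$ and in the ensemble. A useful observation is $\mathbb{E}_{p_0}[\langle\psi|\rho(0)|\psi\rangle] = \mathrm{tr}(\rho(0)\rho(0))$, so $\mathbb{E}_{p_0}u = 0$. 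This is consistent with $\int\hat p_0\,dm=1$, and it forces $\mathbb{E}_{p_0}R_T = 0$ as well.

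For the total variation, I will use $\mathrm{TV}(\hat p_0,p_0) = \frac12\mathbb{E}_{p_0}|\hat p_0/p_0 - 1|$. The triangle inequality $\big||a+b|-|a|\big|\le|b|$ together with the uniform bound on $R_T$ immediately gives \eqref{eq: prior error tv}.

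For the KL divergence, I set $r := \hat p_0/p_0 - 1$, so that $\mathrm{KL}(\hat p_0\|p_0) = \mathbb{E}_{p_0}[(1+r)\log(1+r)]$. Since $\sup|r| = O(e^{-2\lambda_1T})$ uniformly, for $T$ large $|r|<1/2$ and the Taylor bound $(1+r)\log(1+r) = r + r^2/2 + O(|r|^3)$ holds uniformly. The linear term vanishes because $\mathbb{E}_{p_0}r=0$. The quadratic term gives $\frac12 e^{-4\lambda_1T}c^2\mathbb{E}_{p_0}u^2$ plus cross terms $e^{-2\lambda_1T}\,\mathbb{E}|u R_T| = O(e^{-5\lambda_1T})$, using that $|u|\le 2$ is bounded. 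The cubic remainder is $O(e^{-6\lambda_1T})$. Finally, $\mathbb{E}_{p_0}u^2 = \mathrm{Var}_{p_0}(\langle\psi|\rho(0)|\psi\rangle)$ because the mean of $\langle\psi|\rho(0)|\psi\rangle$ is $\mathrm{tr}\rho(0)^2$, and this yields \eqref{eq: prior error kl}.

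For \eqref{eq: prior error dp}, I write $p_T = 1+\epsilon_T$ and use \eqref{eq: epsilon decay}, $\sup|\epsilon_T| = O(e^{-\lambda_1T})$. Then
\begin{align*}
\mathrm{KL}(1\|p_T) = -\int\log(1+\epsilon_T)\,dm = -\int\epsilon_T\,dm + \tfrac12\|\epsilon_T\|^2 + O(e^{-3\lambda_1T}),
\end{align*}
and the first term vanishes. It remains to control $\|\epsilon_T\|^2$. By the spectral decomposition of the self-adjoint $\mathcal W$, $\epsilon_T = P_T\epsilon_0$, so $\|\epsilon_T\|^2 = \|\Pi_1\epsilon_T\|^2 + \sum_{k\ge2}e^{-2\lambda_kT}\|\Pi_k\epsilon_0\|^2$. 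I expect this tail estimate to be the main obstacle, since $\epsilon_0$ may not be in $L^2$ (for instance if $p_0$ is singular). To handle it, I will write $\epsilon_T = P_{T-1}\epsilon_1$, where $\epsilon_1$ is bounded uniformly in terms of $d,\gamma$ by the heat kernel. This gives a tail of order $e^{-2\lambda_2(T-1)}\|\epsilon_1\|^2 = O(e^{-3\lambda_1T})$, because $\lambda_2 = 2\gamma\cdot2(d+1) > 3\lambda_1/2$. The leading term is then $\|\Pi_1\epsilon_T\|^2 = d(d+1)e^{-2\lambda_1T}\mathrm{tr}[(\rho(0)-I/d)^2]$, where $\mathrm{tr}[(\rho(0)-I/d)^2] = \mathrm{tr}\rho(0)^2 - 1/d$, which gives \eqref{eq: prior error dp}. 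The statement $\mathrm{TV}(1,p_T) = O(e^{-\lambda_1T})$ follows directly from \eqref{eq: epsilon decay}.
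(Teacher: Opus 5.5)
Your proposal is correct and follows essentially the same route as the paper. For TV you integrate the uniform expansion. For $\mathrm{KL}(\hat p_0\,\|\,p_0)$ you use mass conservation to kill the linear term and the identity $\mathbb{E}_{p_0}\langle\psi|\rho(0)|\psi\rangle=\mathrm{tr}\,\rho(0)^2$ to obtain the variance. For $\mathrm{KL}(1\,\|\,p_T)$ you expand $-\log(1+\epsilon_T)$ to second order and keep the first-eigenspace projection. The only cosmetic difference is how the $k\ge 2$ tail of $\|\epsilon_T\|^2$ is controlled: you write $\epsilon_T=P_{T-1}\epsilon_1$, whereas the paper reads it off directly from the uniform bound $\sup|\sum_{k\ge2}\Pi_k\epsilon_T|\le S_2(T)$ in \eqref{eq: epsilon decay}, which already covers singular $p_0$ through the kernel bound $|\Pi_k(x,y)|\le N_k$.
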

Both distances thus decay at least twice as fast as the data-processing bound: at the rate $2\lambda_1$ in total variation and $4\lambda_1$ in relative entropy. 

\subsection{Learned score, continuous time}
In practice the score is not known exactly. It has to be learned from samples, for instance by score matching \cite{hyvarinen2005estimation, vincent2011connection, song2021scorebased}. Let $s_k(\psi,t)$ be complex-valued approximations of $\partial_{B_k}\log p_t$, and let $\hat p^{\,s}_t$ be the density of the solution of \eqref{eq: backward qsd} in which $\Lambda_{k,t}$ is replaced by $\mathrm{tr}L_k - d\,l_k + s_k(\psi,t)$, started at $\tau=0$ from a density $q$. The following bound transcribes to $\mathcal{P}$ the continuous-time bound of Song et al.\ for score-based generative models in flat space \cite[Theorem~1]{song2021maximum}.  Compared to their proof, we work with a compact manifold, so continuous scores give a bounded score error, and Novikov's condition and uniqueness hold automatically. In $\mathbb{R}^n$ they require additional assumptions \cite{song2021maximum} or an approximation argument \cite{chen2023sampling}. The proof is given in Appendix \ref{app: girsanov}.

\begin{proposition}[Error of a learned score]\label{prop: girsanov}
Assume that $p_t>0$, and that $\partial_{B_k}\log p_t(\psi)$ and $s_k(\psi,t)$ are continuous on $[0,T]\times\mathcal{P}$ and Lipschitz in $\psi$. Then
\begin{align}
    \mathrm{KL}\big(p_0\,\big\|\,\hat p^{\,s}_0\big) \le \mathrm{KL}\big(p_T\,\big\|\,q\big) + \int_0^T\mathbb{E}_{p_t}\Big[\sum_k\big|\partial_{B_k}\log p_t - s_k\big|^2\Big]\,dt .
    \label{eq: girsanov bound}
\end{align}
\end{proposition}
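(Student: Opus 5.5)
The approach is a Girsanov argument on path space combined with the chain rule for relative entropy, following \cite[Theorem~1]{song2021maximum}.

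Let $\mathbb{Q}$ be the law of the exact backward process on path space $C([0,T],\mathcal{P})$: it starts from $p_T$ and uses the true $\Lambda_{k,t}$. Let $\hat{\mathbb{Q}}$ be the law of the learned process: it starts from $q$ and uses $\mathrm{tr}L_k - d\,l_k + s_k$. By Theorem \ref{thm: backward qsd}, the time-$\tau=T$ marginal of $\mathbb{Q}$ is $p_0$, and that of $\hat{\mathbb{Q}}$ is $\hat p^{\,s}_0$. The data-processing inequality applied to the endpoint map gives $\mathrm{KL}(p_0\|\hat p^{\,s}_0)\le\mathrm{KL}(\mathbb{Q}\|\hat{\mathbb{Q}})$. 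Disintegrating with respect to the initial point yields
\[
\mathrm{KL}(\mathbb{Q}\|\hat{\mathbb{Q}})=\mathrm{KL}(p_T\|q)+\mathbb{E}_{x\sim p_T}\,\mathrm{KL}(\mathbb{Q}^x\|\hat{\mathbb{Q}}^x),
\]
where $\mathbb{Q}^x,\hat{\mathbb{Q}}^x$ are the conditional path laws started at $x$.

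The conditional laws are compared by Girsanov. Both processes share the same noise fields $Z_k,\bar Z_k$. Their Stratonovich (equivalently It\^o) drifts differ by
\[
\delta=\sum_k\big(\overline{e_k}\,Z_k+e_k\,\bar Z_k\big),\qquad e_k:=\partial_{B_k}\log p_t-s_k .
\]
This difference lies in the span of the noise fields. In terms of the complex Wiener increments, the learned process is the exact one with $d\tilde\xi_k$ replaced by $d\tilde\xi_k - \overline{e_k}\,d\tau$ (or the conjugate, depending on convention). The noise is therefore shifted by a process $u_k=-\overline{e_k}$, adapted and bounded. The Girsanov density is
\[
\exp\Big(\sum_k\int 2\,\mathrm{Re}(\bar u_k\,d\tilde\xi_k)-\int\sum_k|u_k|^2\,d\tau\Big).
\]
One must check the normalization: with $d\xi d\xi^*=dt$, the quadratic variation of $\sum_k 2\mathrm{Re}(\bar u_k\,d\tilde\xi_k)$ is $2\sum_k|u_k|^2\,d\tau$. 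Hence
\[
\mathrm{KL}(\mathbb{Q}^x\|\hat{\mathbb{Q}}^x)=\mathbb{E}_{\mathbb{Q}^x}\int_0^T\sum_k|e_k|^2\,d\tau .
\]
Averaging over $x\sim p_T$ and using that the marginal of $\mathbb{Q}$ at reverse time $\tau$ is $p_{T-\tau}$, this becomes $\int_0^T\mathbb{E}_{p_t}\sum_k|e_k|^2\,dt$. That is exactly the second term of \eqref{eq: girsanov bound}.

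The main obstacle is making the Girsanov step rigorous. Since $\mathcal{P}$ is compact and $e_k$ is continuous on $[0,T]\times\mathcal{P}$, $e_k$ is bounded, so Novikov's condition holds and the exponential is a true martingale. The Lipschitz assumptions give strong well-posedness of both SDEs on $\mathcal{P}$: one can work in an embedding as Hermitian matrices, where Stratonovich fields are tangent, or in finitely many charts. Uniqueness in law then identifies $\hat{\mathbb{Q}}^x$ with the Girsanov-reweighted $\mathbb{Q}^x$. The stochastic-integral term has zero mean under $\mathbb{Q}^x$ because the integrand is bounded.

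A care point is the behavior near $\tau=T$ ($t=0$), where the score may blow up if $p_0$ is singular. The continuity assumption on $[0,T]$ excludes this. Alternatively, one can prove the bound on $[0,T-\eta]$ and let $\eta\to0$ using lower semicontinuity of KL.
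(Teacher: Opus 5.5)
Your proposal is correct and follows essentially the same route as the paper's proof in Appendix~\ref{app: girsanov}. The shared steps are data processing on the endpoint map, the chain rule of relative entropy over the starting point, Girsanov's theorem for a bounded shift of the complex noise along the noise fields, and well-posedness from a Lipschitz extension in Hermitian matrices together with uniqueness in law. One small correction: Girsanov gives $\mathrm{KL}(\mathbb{P}\|\mathbb{P}^s)=\mathbb{E}\int_0^T\sum_k|e_k|^2\,d\tau$ for the measures on the driving noise, but the Wiener processes generally cannot be recovered from the $\psi$-path (for instance when the $2K$ real noise fields overspan the $2(d-1)$-dimensional tangent space), so data processing only yields $\mathrm{KL}(\mathbb{Q}^x\|\hat{\mathbb{Q}}^x)\le\mathbb{E}\int_0^T\sum_k|e_k|^2\,d\tau$, not the equality you state; this inequality is what the paper uses and is all the bound needs.
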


Regarding the training objective, the second term in \eqref{eq: girsanov bound} is a score-matching loss \cite{hyvarinen2005estimation}, which cannot be evaluated directly because $\partial_{B_k}\log p_t$ is unknown. As in flat space \cite{vincent2011connection, song2021scorebased}, it can be replaced by a denoising loss. The second term of \eqref{eq: girsanov bound} equals, up to a constant independent of $s$, the denoising loss
    \begin{align*}
        \int_0^T\mathbb{E}\Big[\sum_k\big|\partial_{B_k}\log P_t(\psi_0,\psi_t) - s_k(\psi_t,t)\big|^2\Big]dt .
    \end{align*}
For the depolarizing channel, unitary covariance and the invariance of $dm$ imply $P_t(U\psi'U^\dagger,U\psi U^\dagger) = P_t(\psi',\psi)$. Since any two states with the same fidelity from $\psi_0$ are related by a unitary fixing $\psi_0$, the stabilizer orbits are the level sets $\lbrace \psi: \mathrm{tr}(\psi \psi_0)=F\rbrace$. That is, $P_t(\psi_0,\psi)$ depends only on $F$. Since $F$ is the linear observable associated with the projector $\psi_0$,  the chain rule gives the explicit target $\partial_{B_k}\log P_t = \partial_F\log P_t\cdot\langle\psi|\psi_0(L_k-l_k)|\psi\rangle$.
    
We also note that if $p_0$ is a Dirac mass, as in Subsection \ref{subsec: depol d}, the score diverges as $t\to0$ and the assumptions fail at $t=0$. The bound then applies on $[\delta,T]$ and controls $\mathrm{KL}(p_{\delta}\,\|\,\hat p^{\,s}_{\delta})$.

\subsection{Learned score, discrete time} 

So far we have assumed that the backward SDE is solved in continuous time. In practice, a numerical solution lives on a discrete grid of times $T = t_0 > t_1 > \dots > t_N = \delta\ge0$, with steps $h_n = t_n-t_{n+1}$ and reverse times $\tau_n = T-t_n$.

From Proposition \ref{prop: forward backward depol} we know that the difference between the backward and forward SDE for the depolarizing channel is (up to a constant factor) just the gradient of the log density. In Subsection \ref{subsec: pure state geometry} we established that every tangent vector is the velocity of an associated Hamiltonian flow. Moreover, this Hamiltonian can be taken to be traceless and as such it can be expanded in terms of the high-dimensional Pauli matrices $\sigma_k$. The gradient is a tangent vector and by its very definition, the coefficients in this expansion are the derivatives along the direction of the corresponding rotation
\begin{align}
    \eta_{k,t}(\psi) := X_{i\sigma_k}\log p_t(\psi)= -\frac{2}{\sqrt\gamma}\,\mathrm{Im}\,\partial_{B_k}\log p_t(\psi).
    \label{eq: eta}
\end{align}
The following lemma is proved in Appendix \ref{app: depol structure}.
\begin{lemma}[Structure of backward depolarizing QSD]\label{lem: backward depol}
In Stratonovich form, the backward SDE for the projector reads
    \begin{align}
        d\psi = i\big[H_t(\psi),\psi\big]\,d\tau + \sum_k\big(B_k\psi\circ d\tilde\xi_k + \psi B_k^\dagger\circ d\tilde\xi_k^*\big),\qquad H_t(\psi) = \gamma\sum_k\eta_{k,t}(\psi)\,\sigma_k .
        \label{eq: depol hamiltonian}
    \end{align}
It is the forward equation \eqref{eq: forward depol}, plus a Hamiltonian that depends on the state through the score.
\end{lemma}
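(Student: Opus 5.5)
The plan is to start from Proposition \ref{prop: forward backward depol}. It already gives the backward Stratonovich SDE as the forward noise plus the drift $2\gamma\,\mathrm{grad}\log p_t$. So the lemma reduces to a pointwise identity at each fixed $\psi$:
\begin{align*}
    2\gamma\,\mathrm{grad}\log p_t(\psi) = i\big[H_t(\psi),\psi\big],\qquad H_t = \gamma\sum_k\eta_{k,t}\sigma_k .
\end{align*}
Let $G := \mathrm{grad}\log p_t$ at $\psi$. As a tangent vector, $G$ is traceless Hermitian with $\psi G+G\psi = G$. I would express the coefficients $\eta_{k,t}$ through $G$ and then resum using the completeness identities \eqref{eq: completeness consequences}.

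First, I compute $\eta_{k,t}$. Since $\sigma_k$ is Hermitian, \eqref{eq: X_M} gives $X_{i\sigma_k}(\psi) = i[\sigma_k,\psi]$. By the definition of the gradient through the metric $g(u,v)=\mathrm{tr}(uv)$,
\begin{align*}
    \eta_{k,t} = \mathrm{tr}\big(G\, i[\sigma_k,\psi]\big) = i\,\mathrm{tr}\big(\sigma_k[\psi,G]\big).
\end{align*}
As a consistency check, I would confirm the second expression in \eqref{eq: eta}. From the computation preceding \eqref{eq: D grad}, $\partial_{B_k}\log p_t = \sqrt\gamma\,c_k$ with $c_k := \mathrm{tr}(\sigma_k\psi G)$. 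Hermiticity gives $\bar c_k = \mathrm{tr}(\sigma_k G\psi)$, so $\eta_{k,t} = i(c_k-\bar c_k) = -2\,\mathrm{Im}\,c_k$, which matches \eqref{eq: eta}.

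Second, I resum. The matrix $[\psi,G]$ is traceless, so the second identity in \eqref{eq: completeness consequences} yields
\begin{align*}
    \sum_k\mathrm{tr}\big(\sigma_k[\psi,G]\big)\sigma_k = 2[\psi,G],
\end{align*}
and therefore $H_t = 2i\gamma[\psi,G]$. This is Hermitian and equals $2\gamma H_G$ in the notation $H_u = i[\psi,u]$ of \eqref{eq: tangent rotation}. Then $i[H_t,\psi] = 2\gamma\, i[H_G,\psi] = 2\gamma\, X_{iH_G}(\psi) = 2\gamma G$ by \eqref{eq: tangent rotation}. This is exactly the drift of Proposition \ref{prop: forward backward depol}, and the noise terms are unchanged, which proves \eqref{eq: depol hamiltonian}.

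The only delicate point is bookkeeping: the sign and factor conventions in extending $df$ complex-linearly to $Z_k$, and the normalization $\mathrm{tr}(\sigma_j\sigma_k)=2\delta_{jk}$ that produces the factor $2$ in the resummation. The first step's consistency check with \eqref{eq: eta} is the safeguard against errors there. Everything else is a direct application of the tangent-rotation identity.
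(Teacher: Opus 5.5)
Your proposal is correct and follows essentially the same route as the paper. You reduce to the pointwise identity $2\gamma\,\mathrm{grad}\log p_t = i[H_t,\psi]$, write $X_{i\sigma_k}f = \mathrm{tr}(\sigma_k H_G)$ with $H_G = i[\psi,G]$, resum with the second completeness identity, and conclude with the tangent-rotation identity \eqref{eq: tangent rotation}. The paper phrases the same computation as the frame identity $\mathrm{grad}\,f = \frac12\sum_k (X_{i\sigma_k}f)\,X_{i\sigma_k}$ for arbitrary $f$, which it reuses in Lemma \ref{lem: depol structure}. Your consistency check against \eqref{eq: eta} is a harmless extra.
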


This decomposition into the forward dynamics and a score-dependent Hamiltonian has an important consequence on how we integrate the backward dynamics on a discrete time grid. On a reverse time interval, if we freeze the Hamiltonian at the beginning, then for the rest of that interval the frozen generator of \eqref{eq: depol hamiltonian} is split into a fixed rotation and a diffusion. This Lie-Trotter splitting is exact, because the two parts commute: the depolarizing channel is isotropic and rotations are symmetries of the forward dynamics.

In order to bound the error arising from this discrete time scheme using Girsanov's theorem, we need to bound the variation of the Hamiltonian coefficients within a time interval. A Noether-type argument can be used to show that these coefficients are martingales along the backward process: the derivative of the log-density along a symmetry generator is conserved in expectation along the reverse process. Then, by orthogonality of martingale increments, the variation of these coefficients within a time interval is bounded by the increment of the Fisher information along the noise fields. The metric here is an Einstein metric with positive Ricci curvature so the Li-Yau inequality \cite{li1986parabolic} can be used to bound the Fisher information.

\begin{lemma}[Score-dependent coefficients]\label{lem: depol structure}
For the depolarizing channel in dimension $d$ the following hold.
\begin{enumerate}

    \item  Along the backward process, $\eta_{k,T-\tau}(\psi_\tau)$, $0\le\tau\le T-\delta$, is a martingale for every $k$.
    \item The Fisher information along the noise,
    \begin{align}
        I(t) := \mathbb{E}_{p_t}\Big[\sum_{ij}D^{ij}\,\partial_i\log p_t\,\partial_j\log p_t\Big] = \gamma\,\mathbb{E}_{p_t}\Big[\sum_k\eta_{k,t}^2\Big],
    \end{align}
    is non-increasing in $t$, and $I(t)\le 2(d-1)/t$.
\end{enumerate}
\end{lemma}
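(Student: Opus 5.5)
For item 1, I would use that for the depolarizing channel the forward generator is $\mathcal{W}=\gamma\Delta_g$ (equation \eqref{eq: depol brownian}) and that $\Delta_g$ commutes with the Killing fields $X_{i\sigma_k}$. These are generators of unitary rotations, which are isometries of the Fubini-Study metric. Write $p_t=P_t^*p_0=e^{t\gamma\Delta_g}p_0$. Let $Y:=X_{i\sigma_k}$. Since $Y$ is divergence-free (Lemma \ref{lem: projective divergence} with $M=i\sigma_k$ gives $2\mathrm{Re}(i\,\mathrm{tr}\sigma_k - d\,i\langle\sigma_k\rangle)=0$) and commutes with $\Delta_g$, the function $u_t:=Yp_t$ solves the same heat equation $\partial_t u_t=\gamma\Delta_g u_t$. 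Now set $\eta_t=u_t/p_t$ and use the $h$-transform structure \eqref{eq: h transform}. For the reverse process started from $p_T$, the backward kernel \eqref{eq: backward kernel} gives $\mathbb{E}[f(\psi_{\tau'})\mid\psi_\tau=y]=\frac{1}{p_t(y)}\int P_{t-t'}(x,y)p_{t'}(x)f(x)\,dm(x)$ with $t=T-\tau$, $t'=T-\tau'$. I would apply this with $f=\eta_{t'}$, so that $p_{t'}f=u_{t'}$. Then $\int P_{t-t'}(x,y)u_{t'}(x)\,dm(x)=(P^*_{t-t'}u_{t'})(y)=u_t(y)$, and dividing by $p_t(y)$ gives $\eta_t(y)$, which is the martingale property. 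Integrability on $[0,T-\delta]$ follows from smoothness and positivity of $p_t$ for $t\ge\delta>0$ on a compact manifold. A cleaner alternative is to check directly that $(\partial_\tau+\tilde{\mathcal W}_t)\eta=0$ using \eqref{eq: nelson}; this is the Noether-type computation, and it reduces to the same commutation $[Y,\Delta_g]=0$.

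For item 2, I would first verify the identity $\sum_{ij}D^{ij}\partial_i\log p\,\partial_j\log p = \gamma\sum_k\eta_k^2$. By \eqref{eq: D grad}, the left side equals $2\gamma|\mathrm{grad}\log p|^2$. Then I would expand $\mathrm{grad}\log p=X_{iH}$ via \eqref{eq: tangent rotation}, use $\sum_k \mathrm{tr}(\sigma_k\cdot)\sigma_k$ from \eqref{eq: completeness consequences}, and check that $\sum_k (X_{i\sigma_k}f)^2 = 2|\mathrm{grad} f|^2$. This is the same computation as for \eqref{eq: D grad}, rewritten with $\eta_k=-\frac{2}{\sqrt\gamma}\mathrm{Im}\,Z_kf$ from \eqref{eq: eta}. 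Monotonicity then follows from the de Bruijn/Bakry–Émery computation $\frac{d}{dt}\int p|\nabla\log p|^2=-2\gamma\int p\,(\|\nabla^2\log p\|^2+\mathrm{Ric}(\nabla\log p,\nabla\log p))\le0$, since Fubini-Study has positive Ricci curvature. Alternatively, monotonicity follows from the martingale property: by Jensen and orthogonality of increments, $\mathbb{E}[\eta_{k,t}^2]$ increases along reverse time, so it is non-increasing in $t$.

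For the bound $I(t)\le 2(d-1)/t$, I would use the Li–Yau inequality for positive solutions of the heat equation $\partial_s u=\Delta_g u$ on a manifold with $\mathrm{Ric}\ge0$ and dimension $n=2(d-1)$. It gives $|\nabla\log u|^2-\partial_s\log u\le n/(2s)$. With $s=\gamma t$ and $\partial_s\log p=\Delta_g p/p$, integrating against $p_t\,dm$ gives $\int p\,\partial_s\log p\,dm=\int\Delta_g p\,dm=0$. Hence $\mathbb{E}_{p_t}|\mathrm{grad}\log p_t|^2\le (d-1)/(\gamma t)$, and multiplying by $2\gamma$ gives $I(t)\le 2(d-1)/t$. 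The main obstacle is getting the normalization constants to match, in particular the relation $\mathcal{W}=\gamma\Delta_g$ with the paper's metric $g(u,v)=\mathrm{tr}(uv)$, and the factor $2\gamma$ in \eqref{eq: D grad}. I expect the remaining steps to be routine once the commutation of $\Delta_g$ with rotations and the Li–Yau input are in place.
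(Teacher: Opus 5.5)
Your proposal is correct. Part 2 essentially follows the paper: the identity comes from the tight-frame expansion \eqref{eq: tight frame} together with \eqref{eq: D grad}, monotonicity comes from orthogonality of martingale increments, and the bound comes from Li--Yau integrated against $p_t$ with $s=\gamma t$ and $n=2(d-1)$. Your Bakry--\'Emery alternative, $\frac{d}{dt}\int p_t|\nabla\log p_t|_g^2\,dm=-2\gamma\int p_t\big(\|\nabla^2\log p_t\|^2+\mathrm{Ric}(\nabla\log p_t,\nabla\log p_t)\big)dm\le 0$, is a valid independent route that uses curvature instead of Part 1.

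Part 1 is where you take a genuinely different route. The paper applies It\^o's formula to $Xf_{T-\tau}(\psi_\tau)$ and shows that the drift $(\tilde{\mathcal W}_t-\partial_t)(Xf_t)$ vanishes pointwise. For this it uses $\tilde{\mathcal W}_t h=\mathcal W h+\Gamma(f_t,h)$, $\partial_t f_t=\mathcal W f_t+\frac12\Gamma(f_t)$, and the two symmetry identities $\mathcal W(Xh)=X(\mathcal W h)$ and $X(\Gamma(h))=2\Gamma(h,Xh)$ coming from unitary covariance. This gives a local martingale, which is then upgraded to a true martingale by boundedness for $t\ge\delta$.

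You instead use the Bayes kernel \eqref{eq: backward kernel}, i.e.\ the Doob $h$-transform structure. Since $\eta_t=u_t/p_t$ with $u_t=Yp_t$ a second solution of the same Fokker--Planck equation, the conditional expectation is computed in closed form. This route buys several things: a true martingale directly, with no It\^o calculus, no carr\'e du champ identity (only $[Y,P^*_s]=0$ is needed), and no localization. It also makes the mechanism transparent: any ratio of two Fokker--Planck solutions is a martingale of the reversed process.

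The price is that you lean on the Markov property of the reversal and on identifying the backward SDE's transition law with the Bayes kernel. Your argument is therefore naturally tied to the exact process started from $p_T$, which is all Proposition \ref{prop: discretization} uses. The paper's pointwise drift cancellation, by contrast, holds for the backward SDE from any initial point and exhibits $\eta$ explicitly as a stochastic integral. Finally, the case $\delta=0$ with smooth positive $p_0$, which you do not mention, goes through the same way, since $\eta_0$ is then bounded.
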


Putting everything together, we have the following scheme to integrate the backward SDE for the depolarizing channel in discrete time and with approximate scores, along with a bound on the error of its solution. The scheme is the analogue of the exponential integrators used for score-based generative models \cite{zhang2023fast, chen2023sampling}, with the unitary rotation playing the role of the exactly integrated linear drift.
\begin{definition}[Scheme for discrete time, learned scores]\label{def: discrete time scheme}
Let $a_k(\psi,t)$ be bounded real approximations of $\eta_{k,t}$. Starting from $\hat\psi_0\sim q$, repeat for $n=0,\dots,N-1$ ($t_N=\delta\geq 0$):
\begin{enumerate}
    \item \emph{rotate}: $\hat\psi_n\to U_n\hat\psi_nU_n^\dagger$, with $U_n = e^{ih_nH_n}$ and $H_n = \gamma\sum_ka_k(\hat\psi_n,t_n)\,\sigma_k$;
    \item \emph{diffuse}: run the forward depolarizing QSD for a time $h_n$ from the rotated state, and call the result $\hat\psi_{n+1}$.
\end{enumerate}
\end{definition} 
 
\begin{proposition}[Time discretization]\label{prop: discretization}
For the depolarizing channel, let $\hat p_\delta$ be the law of $\hat\psi_N$ in Definition \ref{def: discrete time scheme}. Assume $\delta>0$, or $\delta=0$ with $p_0$ smooth and positive. Then
\begin{align}
    \mathrm{KL}\big(p_\delta\,\big\|\,\hat p_\delta\big) \le \mathrm{KL}\big(p_T\,\big\|\,q\big) + \gamma\sum_{n=0}^{N-1}h_n\,\mathbb{E}_{p_{t_n}}\Big[\sum_k\big(\eta_{k,t_n}-a_k(\cdot,t_n)\big)^2\Big] + \sum_{n=0}^{N-1}h_n\big[I(t_{n+1})-I(t_n)\big].
    \label{eq: discretization bound}
\end{align}
For the exact coefficients $a_k = \eta_{k,t}$, the second term vanishes and the last term can be multiplied by $\frac12$.
\end{proposition}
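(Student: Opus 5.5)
We would prove Proposition \ref{prop: discretization} by comparing path measures with Girsanov's theorem, exactly as in Proposition \ref{prop: girsanov}, but now against a reference process whose Hamiltonian is piecewise frozen. The first step is to realize the scheme of Definition \ref{def: discrete time scheme} as a continuous-time SDE on each interval $[\tau_n,\tau_{n+1}]$. On such an interval the generator is the forward generator $\gamma\Delta_g$ plus the fixed Hamiltonian vector field $X_{iH_n}$, with $H_n$ frozen at $\hat\psi_n$. Because the depolarizing generator is unitarily invariant, $X_{iH_n}$ commutes with $\Delta_g$. The rotate-then-diffuse output therefore has the same law as the Stratonovich SDE $d\psi=i[H_n,\psi]\,d\tau+\sum_k(B_k\psi\circ d\tilde\xi_k+\psi B_k^\dagger\circ d\tilde\xi_k^*)$ run for time $h_n$. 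Note that $H_n$ is constant in $\psi$ but depends on the past point $\hat\psi_n$. The true backward process, by Lemma \ref{lem: backward depol}, is the same SDE with drift $i[H_{T-\tau}(\psi_\tau),\psi_\tau]$ instead.

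Next we apply the data-processing inequality for path measures, together with the chain rule for KL, to split off the initial term $\mathrm{KL}(p_T\|q)$. Girsanov's theorem on the compact manifold then gives, for the drift difference $u_\tau=i[H_{T-\tau}(\psi_\tau)-H_n,\psi_\tau]$, the term $\frac12\mathbb{E}\int|u_\tau|^2_{D^{-1}}d\tau$, computed along the true backward path. Novikov's condition holds because the $a_k$ are bounded and the scores are bounded for $t\ge\delta>0$. The norm of the drift difference has to be expressed in the noise metric. Using \eqref{eq: D grad} and the relation between $X_{i\sigma_k}$ and the noise fields, a Hamiltonian field $X_{i\gamma\sum_kc_k\sigma_k}$ contributes $\frac{\gamma}{2}\sum_kc_k^2$, so that the constants match the $\gamma$ in the definition of $I(t)$. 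We would fix this constant by checking it against the exact-score case of Proposition \ref{prop: girsanov}.

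The integrand on interval $n$ is then $\frac{\gamma}{2}\sum_k(\eta_{k,T-\tau}(\psi_\tau)-a_k(\psi_{\tau_n},t_n))^2$. We split it as $(\eta_{k,T-\tau}(\psi_\tau)-\eta_{k,t_n}(\psi_{\tau_n}))+(\eta_{k,t_n}(\psi_{\tau_n})-a_k(\psi_{\tau_n},t_n))$ and use $(x+y)^2\le2x^2+2y^2$. Since $\psi_{\tau_n}\sim p_{t_n}$, the second piece yields $\gamma h_n\mathbb{E}_{p_{t_n}}\sum_k(\eta_{k,t_n}-a_k)^2$. For exact coefficients this piece vanishes, and no factor $2$ is lost, which explains the extra $\frac12$ in that case.

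The main obstacle is the first piece, the within-interval variation of the score coefficients. Here we use Lemma \ref{lem: depol structure}(1). Since $\eta_{k,T-\tau}(\psi_\tau)$ is a martingale, orthogonality of increments gives
\begin{align*}
\mathbb{E}\big[(\eta_{k,T-\tau}(\psi_\tau)-\eta_{k,t_n}(\psi_{\tau_n}))^2\big]=\mathbb{E}_{p_{T-\tau}}[\eta_k^2]-\mathbb{E}_{p_{t_n}}[\eta_k^2].
\end{align*}
Summing over $k$ and multiplying by $\gamma$, this equals $I(T-\tau)-I(t_n)$. By the monotonicity in Lemma \ref{lem: depol structure}(2), it is at most $I(t_{n+1})-I(t_n)$ for every $\tau$ in the interval. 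Integrating over an interval of length $h_n$ gives $h_n[I(t_{n+1})-I(t_n)]$, with a prefactor $\frac12\cdot 2=1$ in general and $\frac12$ for exact coefficients.

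Finally, we treat the endpoints. For $\delta>0$, everything is finite because $I(t)\le 2(d-1)/t$. For $\delta=0$ with $p_0$ smooth and positive, $I(0)$ is finite and the scores are continuous up to $t=0$, so the same argument applies.
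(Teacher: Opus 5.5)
Your proposal is correct and follows the paper's proof essentially step for step. The steps are the exact rotate/diffuse splitting into a frozen-Hamiltonian SDE, Girsanov with the $D^{-1}$-norm of the drift difference (the paper's minimum-norm decomposition $\theta=S^TD^{-1}\Delta_\tau$), then $(x+y)^2\le 2x^2+2y^2$, martingale orthogonality and monotonicity of $I$, with the factor $\tfrac12$ kept when no split is needed.

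One correction: $\tfrac12\big|\gamma\sum_k c_kX_{i\sigma_k}\big|^2_{D^{-1}}$ is in general not $\tfrac{\gamma}{2}\sum_k c_k^2$ but $\tfrac{\gamma}{2}c^T\Pi c\le\tfrac{\gamma}{2}\sum_k c_k^2$. The reason is that, by \eqref{eq: tight frame}, the $d^2-1$ fields $X_{i\sigma_k}$ form an overcomplete tight frame of the $2(d-1)$-dimensional tangent space, whose Gram matrix is $2\Pi$ with $\Pi$ an orthogonal projection, and your $e_k$ need not lie in its range. You should derive this inequality directly from $D^{-1}=g/2\gamma$ rather than by calibrating against Proposition \ref{prop: girsanov}. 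Note also that decomposing along the $V_{k2}$ alone would give the constant $\gamma$, which is twice too large.
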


In this bound, the first sum is determined by the score error: refining the grid turns it into a continuous-time loss, but cannot remove it. On the other hand, the second sum vanishes as the steps shrink, and the choice of grid decides how fast, we refer to it as the discretization term. By Lemma \ref{lem: depol structure}, $I(t)$ grows at most like $2(d-1)/t$ as $t\to0$, a rate that is approached for a pure initial state, and for this rate the geometric grid is optimal.

\begin{corollary}[Geometric grid]\label{cor: step sizes}
If $h_n\le\kappa\,t_{n+1}$ for all $n$, then
\begin{align}
    \sum_{n=0}^{N-1}h_n\big[I(t_{n+1})-I(t_n)\big]\le 2\kappa\,(d-1)\Big(1+\log\frac T\delta\Big).
    \label{eq: step size sum}
\end{align}
The geometric grid $t_n = \max\big(T(1+\kappa)^{-n},\delta\big)$ satisfies the condition with $N = \lceil\log(T/\delta)/\log(1+\kappa)\rceil$ steps.
\end{corollary}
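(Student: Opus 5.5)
The plan is to compare the sum term by term with an integral of $2(d-1)/t$, using only the two facts from Lemma \ref{lem: depol structure}: $I$ is non-increasing, so each bracket $I(t_{n+1})-I(t_n)$ is nonnegative, and $I(t)\le 2(d-1)/t$. A direct bound $h_n[I(t_{n+1})-I(t_n)]\le h_n I(t_{n+1})\le 2(d-1)h_n/t_{n+1}\le 2\kappa(d-1)$ gives only $2\kappa(d-1)N$, and $N\sim\log(T/\delta)/\kappa$ for the geometric grid, so this loses a factor $1/\kappa$. I would therefore first use summation by parts to exploit the telescoping.

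\textbf{Summation by parts.} Since the grid is decreasing, I would rewrite
\begin{align*}
\sum_{n=0}^{N-1}h_n\big[I(t_{n+1})-I(t_n)\big] = h_{N-1}I(t_N) - h_0 I(t_0) + \sum_{n=1}^{N-1}(h_{n-1}-h_n)\,I(t_n).
\end{align*}
The boundary term $h_{N-1}I(\delta)\le 2(d-1)h_{N-1}/\delta\le 2\kappa(d-1)$ uses $h_{N-1}\le\kappa t_N=\kappa\delta$, and it supplies the ``$1$'' in $1+\log(T/\delta)$. The term $-h_0I(T)$ is nonpositive and can be dropped. The interior coefficients $h_{n-1}-h_n$ need not have a sign for an arbitrary grid, so for negative ones I would use $I\ge0$ to drop them, and for positive ones bound $I(t_n)\le 2(d-1)/t_n$. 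A cleaner route avoids the sign issue: bound the bracket by an integral. Since $I$ is non-increasing, $h_n[I(t_{n+1})-I(t_n)] = h_n\int_{t_{n+1}}^{t_n}(-dI)$, and $h_n\le\kappa t_{n+1}\le\kappa s$ for $s\in[t_{n+1},t_n]$. Hence the whole sum is at most
\begin{align*}
\kappa\int_{(\delta,T]} s\,(-dI(s)) = \kappa\Big(\delta I(\delta) - T I(T) + \int_\delta^T I(s)\,ds\Big)
\end{align*}
by Stieltjes integration by parts, which is valid because $I$ is monotone. Then $\delta I(\delta)\le 2(d-1)$, $-TI(T)\le0$, and $\int_\delta^T I\le 2(d-1)\log(T/\delta)$. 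Together these give exactly $2\kappa(d-1)(1+\log(T/\delta))$. This is the route I would commit to. The only care needed is that $I$ may be merely monotone rather than differentiable, which is why I work with the Stieltjes measure $-dI$ and keep everything nonnegative.

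\textbf{Geometric grid.} For $t_n = T(1+\kappa)^{-n}$ one has $h_n = t_n - t_{n+1} = \kappa t_{n+1}$. At the last step the max with $\delta$ only shortens $h_{N-1}$, so $h_{N-1}\le\kappa\delta$ still holds. The count $N=\lceil\log(T/\delta)/\log(1+\kappa)\rceil$ is the first $n$ with $T(1+\kappa)^{-n}\le\delta$.

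The main obstacle is recognising that the naive bound loses a factor $1/\kappa$, and handling the interchange of the step bound $h_n\le\kappa s$ with a possibly non-smooth $I$. Monotonicity from Lemma \ref{lem: depol structure} resolves both.
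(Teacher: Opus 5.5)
Your argument is correct and is essentially the paper's: the paper also uses nonnegativity of the brackets to replace $h_n$ by $\kappa t_{n+1}$, sums by parts to obtain $\delta I(\delta)-TI(T)+\sum_n h_nI(t_n)$, bounds the last sum by $\int_\delta^T I\,dt$ using monotonicity before applying $I(t)\le 2(d-1)/t$, and checks the last step of the geometric grid the same way you do. The only difference is that the paper performs the summation by parts discretely on $\sum_n t_{n+1}[I(t_{n+1})-I(t_n)]$, which avoids the Stieltjes measure and the regularity caveat about $I$ that you raise.
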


A few comments on the discretization term.

To make the discretization term smaller than $\varepsilon$ we can choose $\kappa = \varepsilon/[2(d-1)(1+\log(T/\delta))]$, which requires $N\simeq\log(T/\delta)/\kappa = 2(d-1)(1+\log(T/\delta))\log(T/\delta)/\varepsilon$ score evaluations. This is linear in the real dimension $2(d-1)$ of $\mathcal{P}$. It is the analogue of the nearly $d$-linear bounds for score-based generative models in flat space \cite{benton2024nearly}. There, too, a martingale property of the rescaled score along the reverse process is combined with a bound on its increments, which comes from a stochastic-localization identity for the posterior covariance. Here the martingale property is a consequence of unitary covariance, and the Li-Yau inequality bounds the increments.

The discretization term is proportional to the step size. As for Proposition \ref{prop: girsanov}, the bound controls the relative entropy between the laws of whole paths, and for this quantity the first order is natural: the martingale increments have variances proportional to the step. In flat space, for the score-based generative model built on the critically damped Langevin diffusion, the relative entropy between the path laws is bounded below by a multiple of the step size, even for Gaussian data and exact scores \cite[Theorem~7]{chen2023sampling}. The law at the final time is closer to the target: with the exact score, its relative entropy decreases faster than linearly in the examples of \S\ref{subsec: depolarizing} and \S\ref{subsec: depol d}, and approaches a quadratic dependence for small steps.

\section{Numerical illustration}\label{sec: examples}
We test the bounds of Section \ref{sec: prior} for the depolarizing channel with $\gamma = 1/4$, in dimensions $d = 2,3,4,6,8$, which include one, two and three qubits.

\subsection{Setting}\label{subsec: setting}
Fix a pure state $\psi_0$ and let $x := 2\,\mathrm{tr}(\psi_0\psi)-1\in[-1,1]$, an affine function of the fidelity with $\psi_0$.  The choice of $\psi_0$ does not matter: the depolarizing dynamics and the uniform prior are invariant under all unitaries, so every quantity reported below is the same for every $\psi_0$. We write $\nu$ for the law of $x$ under the Fubini-Study measure and $g_t := \partial_x\log p_t$ for the score in the variable $x$. All initial ensembles below are invariant under the unitaries that fix $\psi_0$. Then the score, and the output of the backward SDE, in continuous or discrete time and with exact or perturbed scores, are functions of $x$ alone. They can be computed by expanding in the eigenfunctions of the generator (Appendix \ref{app: numerics}). 

We use two initial ensembles. The first is the Dirac mass at $\psi_0$. Its Fisher information diverges as $t\to0$, and with the exact score the backward SDE recovers it exactly from any $q$ (\S\ref{subsec: prior error}), so it enters only the test of the time discretization in \S\ref{subsec: depol d}. The second is a smooth bimodal density \eqref{eq: smooth p0}. Its Fisher information at $t=0$ is finite, so the backward SDE can be run down to $t=0$.

\subsection{Exact score: forgetting the prior}\label{subsec: prior test}
We start the backward SDE with the exact score from the uniform prior $q=1$ and compare its output $\hat p_0$, given exactly by \eqref{eq: h transform}, with the smooth ensemble $p_0$.  Figure \ref{fig: prior} compares the leading terms of \eqref{eq: prior error tv} and \eqref{eq: prior error kl} with the exact distances and with the data-processing bounds. In every dimension the distances decay at the rates of Corollary \ref{cor: prior error integrated}: on $3\le\lambda_1T\le5$ the fitted rates are between $1.93\,\lambda_1$ and $2.00\,\lambda_1$ for the total variation distance, and between $3.86\,\lambda_1$ and $4.00\,\lambda_1$ for the relative entropy, while the data-processing bounds decay at the rates $\lambda_1$ and $2\lambda_1$.  

\begin{figure}[ht]
    \centering
    \includegraphics[width=\linewidth]{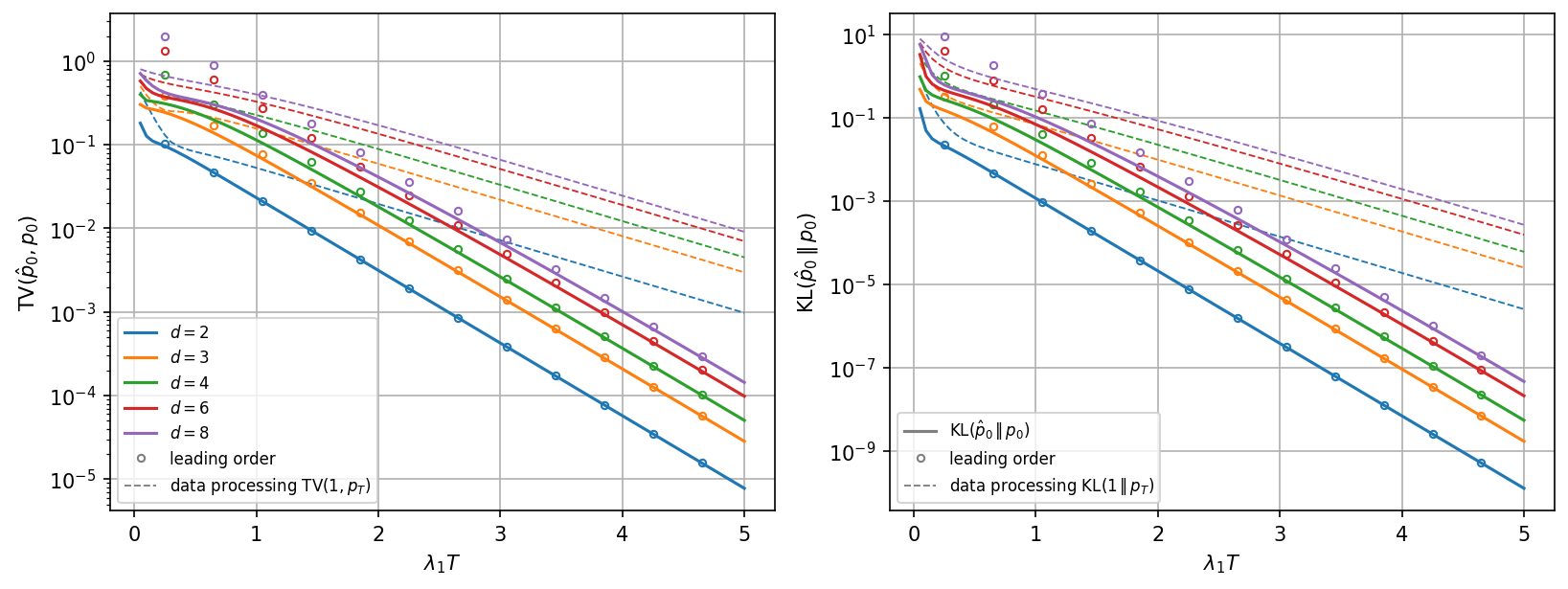}
    \caption{Forgetting the prior: depolarizing channel, $\gamma = 1/4$, smooth initial ensemble \eqref{eq: smooth p0}, backward SDE with the exact score started from the uniform prior $q=1$. Colours indicate the dimension $d$. \textbf{Left:} $\mathrm{TV}(\hat p_0,p_0)$ (solid lines), the leading term of \eqref{eq: prior error tv} (circles) and the data-processing bound $\mathrm{TV}(1,p_T)$ (dashed). \textbf{Right:} the same for the relative entropy, with $\mathrm{KL}(\hat p_0\,\|\,p_0)$, the leading term of \eqref{eq: prior error kl} and $\mathrm{KL}(1\,\|\,p_T)$. All as functions of $\lambda_1T$ with $\lambda_1 = 2\gamma d$.}
    \label{fig: prior}
\end{figure}

\subsection{Learned score, continuous time}\label{subsec: girsanov test}
We test Proposition \ref{prop: girsanov} with the smooth ensemble, $T=1$ and $q=1$. The exact scores are $\partial_{B_k}\log p_t = g_t\,\partial_{B_k}x$, and we replace them by learned scores $s_k = g^s_t\,\partial_{B_k}x$ with four kinds of error: a scale error, $g^s_t = 1.2\,g_t$; a constant bias, $g^s_t = g_t+2$; a state-dependent bias, $g^s_t = g_t+5x$; and a saturated score, $g^s_t = \mathrm{clip}(g_t,\pm10)$, as produced by a network with bounded output. The score-matching term of \eqref{eq: girsanov bound} is then $\int_0^T\mathbb{E}_{p_t}\big[2\gamma(1-x^2)(g_t-g^s_t)^2\big]dt$, and the law $\hat p^{\,s}_0$ of the output of the perturbed backward SDE is computed from a one-dimensional Fokker-Planck equation (Appendix \ref{app: numerics perturbed}). Figure \ref{fig: girsanov} shows that the bound holds in all cases; the ratio between the relative entropy and the bound is smallest for the state-dependent bias $g_t+5x$, between $19\%$ and $36\%$.

\begin{figure}[ht]
    \centering
    \includegraphics[width=0.95\linewidth]{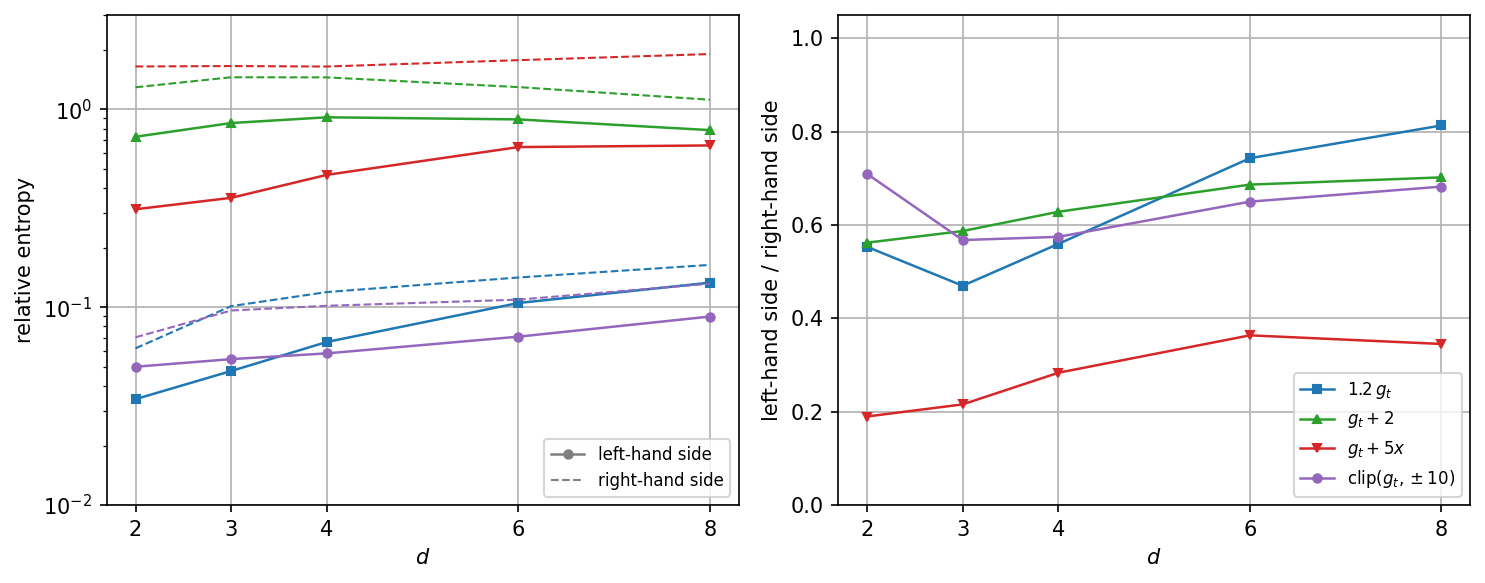}
    \caption{Learned score, continuous time: depolarizing channel, $\gamma = 1/4$, smooth initial ensemble \eqref{eq: smooth p0}, $T=1$, $q=1$, learned scores $s_k = g^s_t\,\partial_{B_k}x$. Colours and markers indicate the function $g^s_t$, as in the legend of the right panel. \textbf{Left:} relative entropy $\mathrm{KL}(p_0\,\|\,\hat p^{\,s}_0)$ of the final law (markers joined by solid lines) and right-hand side of \eqref{eq: girsanov bound} (dashed lines of the same colour). \textbf{Right:} ratio of the left-hand side to the right-hand side.}
    \label{fig: girsanov}
\end{figure}

\subsection{Time discretization: a smooth ensemble}\label{subsec: depolarizing}
We test Proposition \ref{prop: discretization} with the exact coefficients $a_k = \eta_{k,t}$ and $q = p_T$, so that only the discretization error remains. The bound \eqref{eq: discretization bound} then reduces to $\frac12\sum_nh_n\big[I(t_{n+1})-I(t_n)\big]$. For the smooth ensemble we take $T=1$, $\delta = 0$ and uniform steps $h$, for which the bound is $\frac12h\big[I(0)-I(1)\big]$. The scheme of Definition \ref{def: discrete time scheme} is run exactly on the law of $x$ (Appendix \ref{app: numerics scheme}). The results are shown in Figure~\ref{fig: discretization smooth}.

\begin{figure}[ht]
    \centering
    \includegraphics[width=0.95\linewidth]{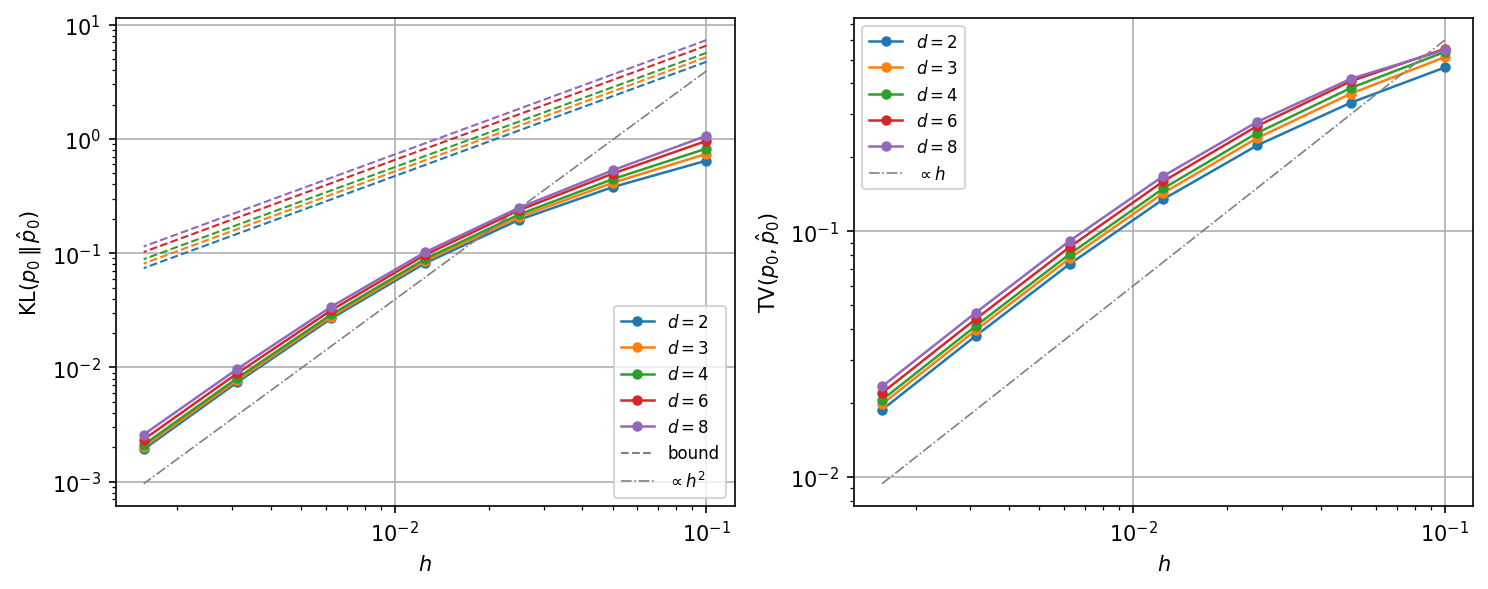}
    \caption{Time discretization for the smooth ensemble \eqref{eq: smooth p0}: depolarizing channel, $\gamma = 1/4$, exact score, started from $p_T$ with $T = 1$, uniform steps $h$ down to $\delta = 0$. \textbf{Left:} relative entropy $\mathrm{KL}(p_0\,\|\,\hat p_0)$ (dots) and bound \eqref{eq: discretization bound} with exact coefficients (dashed). \textbf{Right:} total variation distance $\mathrm{TV}(p_0,\hat p_0)$. Dash-dotted: slopes $2$ and $1$.}
    \label{fig: discretization smooth}
\end{figure}

While the bound is linear in the step size, the relative entropy of the final law is observed to decrease quadratically. Since the relative entropy is second-order in a small difference of densities, i.e. $\mathrm{KL}(p_0\,\|\,\hat p_0)\simeq\frac12\int(\hat p_0-p_0)^2/p_0\;dm$, its quadratic dependence on the step size can be explained if the relative error $\hat p_0/p_0-1$ is $O(h)$ uniformly. To partially support this hypothesis, we also plot the total variation distance in Figure \ref{fig: discretization smooth}. The results show a linear dependence on the step size.

With the perturbed scores considered above, the uniform prior $q=1$ and $h = 0.00625$, all three terms of \eqref{eq: discretization bound} contribute, and the bound holds in all cases.

\subsection{Time discretization: a pure initial state}\label{subsec: depol d}
For the Dirac mass at $\psi_0$ the Fisher information diverges as $t\to0$, and we use the geometric grids of Corollary \ref{cor: step sizes}. We again use the exact coefficients and start from $p_T$, with $T=2$, stop at $\delta = 0.1$, and take $\kappa = 0.4,0.2,0.1,0.05,0.025$, that is $N = 9$ to $122$ steps, the same in all dimensions. The bound is again \eqref{eq: discretization bound} with exact coefficients, and the corresponding a priori bound is $\kappa(d-1)(1+\log(T/\delta))$, half of the right-hand side of \eqref{eq: step size sum}. The results are shown in Figure~\ref{fig: discretization d}. We make three observations.

For all $d$, $I(t)$ is decreasing and $t\,I(t)$ stays below $2(d-1)$, as stated in Lemma \ref{lem: depol structure}. As $t$ decreases, $t\,I(t)$ increases toward $2(d-1)$. At $t=\delta$ it reaches $95\%$ of the Li-Yau value for $d=2$ and $81\%$ for $d=8$. 

For each $\kappa$, $\mathrm{KL}(p_\delta\,\|\,\hat p_\delta)/(d-1)$ is almost independent of $d$. A heuristic explanation is the flat-space analogue: for Brownian motion in $\mathbb{R}^n$ started from a Dirac mass, the density, the score and the exponential integrator all factorize over the coordinates, so the relative entropy is  proportional to $n$. The bound \eqref{eq: discretization bound} grows more slowly than $d-1$, because $I(t)$ falls further below the Li-Yau value as $d$ grows.

As for the smooth ensemble, $\mathrm{KL}(p_\delta\,\|\,\hat p_\delta)$ is observed to have a quadratic dependence on $\kappa$.

\begin{figure}[ht]
    \centering
    \includegraphics[width=0.95\linewidth]{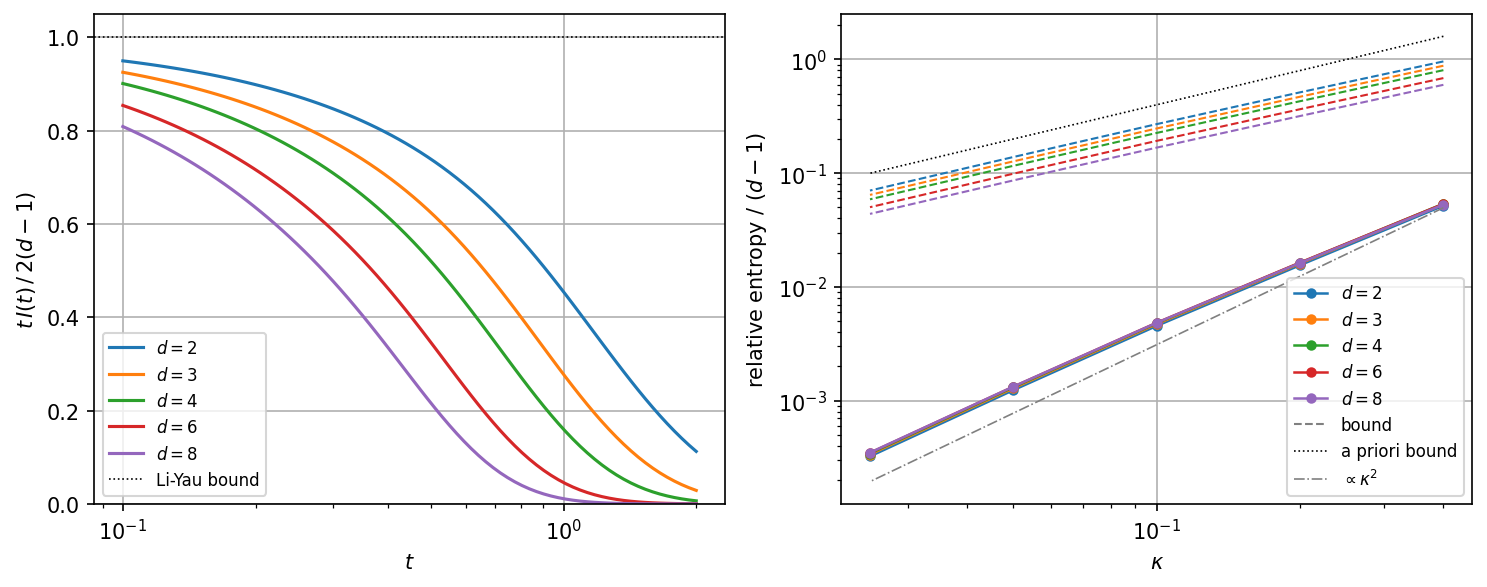}
    \caption{Time discretization for a pure initial state: depolarizing channel, $\gamma = 1/4$, Dirac mass $p_0$, exact score, started from $p_T$ with $T = 2$, $\delta = 0.1$. \textbf{Left:} $t\,I(t)$ in units of the Li-Yau value $2(d-1)$. \textbf{Right:} relative entropy $\mathrm{KL}(p_\delta\,\|\,\hat p_\delta)$ (dots) and bound \eqref{eq: discretization bound} with exact coefficients (dashed), divided by $d-1$, for the geometric grids with $\kappa = 0.4,0.2,0.1,0.05,0.025$. Dotted: a priori bound $\kappa(1+\log(T/\delta))$. Dash-dotted: slope $2$.}
    \label{fig: discretization d}
\end{figure}

\section{Discussion}\label{sec: discussion}
In this work we derive the time reversal of quantum state diffusion using an intrinsic, geometric formulation. The backward SDE for the state vector, Theorem \ref{thm: backward qsd}, has the same structure as Gisin's equation: the same noise operators, and a drift made of the forward drift with the opposite sign, a geometric term, and the score. At the level of the density matrix, the backward ensemble evolves with $-\mathcal{L}$ (Corollary \ref{corr: reverse lindblad}), only on the ensemble from which its score was computed. For the depolarizing channel the unravelling is a Brownian motion on $\mathbb{CP}^{d-1}$, so the Fubini-Study measure provides a universal prior in any dimension. Started from it, the backward SDE forgets the prior at twice the rate of the data-processing bound, $2\lambda_1$ in total variation and $4\lambda_1$ in relative entropy (Proposition \ref{prop: prior error}, Corollary \ref{cor: prior error integrated}). The errors due to a learned score and to a finite time step are controlled by Propositions \ref{prop: girsanov} and \ref{prop: discretization}; both bounds hold in all tested examples. In deriving these results we have combined tools and ideas from stochastic differential geometry, harmonic analysis on symmetric spaces, geometric analysis and group theory.

Several directions can be explored.

The score can be learned from simulated trajectories by score matching, as in score-based generative models on manifolds \cite{song2021scorebased, debortoli2022riemannian, huang2022riemanniandiffusionmodels}. In our setting, the divergence terms that such methods usually estimate stochastically are exactly computed in Lemma \ref{lem: projective divergence}. Moreover, for the depolarizing channel the denoising target is also available in closed form, as discussed below  Proposition \ref{prop: girsanov}. Given the success of score-based methods in classical machine learning, we hope that this work sets the stage for the use of classical models in the context of quantum state diffusion and also for the use of generative quantum models. 

Beyond the depolarizing channel, other channels with a smooth stationary density, whose density matrix relaxes to a state other than the maximally mixed one, would provide non-uniform priors. The reversibility, the explicit spectrum (or at least the spectral gap) and the decay rate at which such priors are forgotten remain to be determined.

Finally, Gisin's equation is one among several ways to unravel the master equation. Unravellings by quantum jumps \cite{dalibard1992wave, carmichael1993open} lead to piecewise-deterministic processes, the time reversal of which would require different tools.

As parting words, we hope to have convinced readers that the manifold of pure states have rich mathematical structures that make the diffusion processes on this manifold a worthwhile object of study. 

\section*{Acknowledgments}
The authors thank Haomu Yuan, Hitomi Mori and Bin Cheng for discussions in the early phase of this project. Claude Opus 5 and 5.5 were used to generate the numerical examples and help making the proofs more rigorous. This work is supported by the National Research Foundation, Singapore through the National Quantum Office, hosted in A*STAR, under its Centre for Quantum Technologies Funding Initiative and its Advanced Quantum Algorithms and Solutions Funding Initiative.
\bibliography{sample}
\appendix

\section{Proof of Lemma \ref{lem: projective divergence}}\label{app: projective divergence}
The divergence of a vector field measures the rate at which its flow expands or contracts volumes. If $\Phi_\epsilon$ is the flow of $X$ and $J_\epsilon(\psi)$ is the factor by which $\Phi_\epsilon$ multiplies the measure of a small neighbourhood of $\psi$, then $\mathrm{div}\,X(\psi) = \frac{d}{d\epsilon}J_\epsilon(\psi)|_{\epsilon=0}$. The flow of $X_M$ is particularly simple: by \eqref{eq: X_M}, it maps the ray $[z]$ of a vector $z\in\mathbb{C}^d$ to the ray $[gz]$, with $g = e^{\epsilon M}$. It is therefore enough to compute the volume factor $J_g$ of the action of an arbitrary invertible matrix $g$ on rays, and to differentiate at the end. We do so in three steps: we first realize the Fubini-Study measure as a volume in $\mathbb{C}^d$, we then compute how $g$ changes this volume, and we finally extract $J_g$.

\begin{proof}
Let $U\subset\mathcal{P}$ be a neighborhood of $\psi$. By definition, 
\begin{align*}
    J_g(\psi) = \lim_{U\to\psi}\frac{m(gU)}{m(U)}
\end{align*}
where $m$ is the Fubini-Study measure. This measure can be realized as the proportion of the cone of $U$ in the unit ball: $m(U) = \mathrm{vol}(C(U))/\mathrm{vol}(\text{unit ball})$, where $\mathrm{vol}$ is the Lebesgue volume of $\mathbb{C}^d\simeq\mathbb{R}^{2d}$ and $C(U) := \{z\in\mathbb{C}^d : 0<\|z\|\le1,\ [z]\in U\}$. Indeed, this defines a probability measure on $\mathcal{P}$, and it is unitarily invariant, because $C(W\cdot U) = W\,C(U)$ for a unitary $W$ and $W$ preserves the Lebesgue volume. Therefore
\begin{align*}
    \frac{m(gU)}{m(U)} = \frac{\mathrm{vol}(C(gU))}{\mathrm{vol}(C(U))} .
\end{align*}
To compare the volume of these two cones, we compare both of them with the intermediate volume of $g\,C(U)$, see Figure \ref{fig: cone volume} for an illustration.

\begin{figure}[ht]
    \centering
    \includegraphics[width=0.85\linewidth]{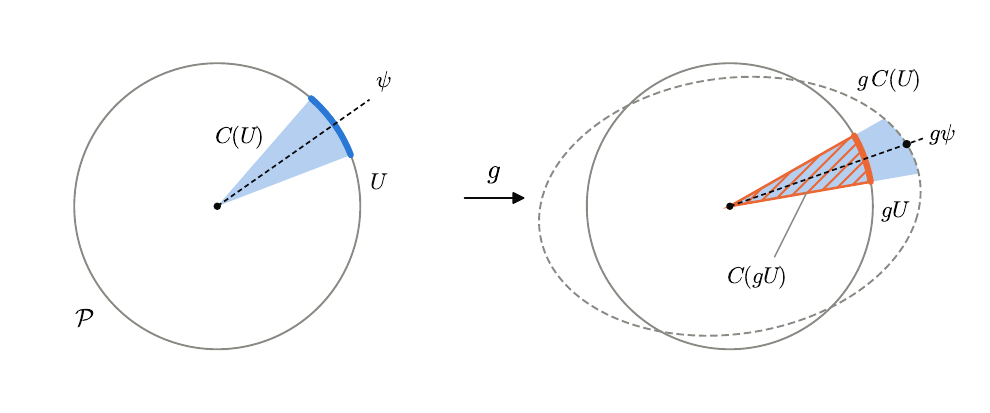}
    \caption{Illustration of the proof of Lemma \ref{lem: projective divergence}. \textbf{Left:} a set $U$ of rays around $\psi$ and its cone $C(U)$, whose area defines the measure $m(U)$. \textbf{Right:} the image $g\,C(U)$ (shaded) and the cone $C(gU)$ (hatched). For a very thin cone, the ratio between the volume of $g\,C(U)$ and that of $C(gU)$ is roughly the radius $\| g|\psi\rangle\|$ to the power of dimension, here $2d$.}
    \label{fig: cone volume}
\end{figure}

On one hand, viewed as a real linear map of $\mathbb{R}^{2d}$, $g$ has determinant $|\det g|^2$, since its realification has the eigenvalues $\lambda_j$ of $g$ together with their conjugates $\bar\lambda_j$. Hence 
\begin{align}
    \frac{\mathrm{vol}(g\,C(U))}{\mathrm{vol}(C(U))} = |\det g|^2 .
\end{align}
On the other hand, $g\,C(U) = \{w : [w]\in gU,\ \|g^{-1}w\|\le1\}$ is again a cone over $gU$, but it is no longer cut off at radius one. In polar coordinates $w = s\,u$ with $\|u\|=1$, where $d^{2d}w = s^{2d-1}ds\,d\sigma(u)$ and $\|g^{-1}w\| = s\|g^{-1}u\|$, its volume is
\begin{align*}
    \mathrm{vol}(g\,C(U)) = \int_{[u]\in gU}d\sigma(u)\int_0^{1/\|g^{-1}u\|}s^{2d-1}ds = \frac{1}{2d}\int_{[u]\in gU}\frac{d\sigma(u)}{\|g^{-1}u\|^{2d}} .
\end{align*}
Now let $U$ shrink to the ray of a unit vector $|\psi\rangle$. The integration variable $u$ then tends to $e^{i\chi}g|\psi\rangle/\|g|\psi\rangle\|$ for some phase $\chi$, so that $\|g^{-1}u\|\to1/\|g|\psi\rangle\|$, while the remaining integral $\frac{1}{2d}\int_{[u]\in gU}d\sigma(u)$ is $\mathrm{vol}(C(gU))$, by the same polar decomposition with the cut-off at radius one. Thus 
\begin{align}
    \lim_{U\to\psi} \frac{\mathrm{vol}(g\,C(U))}{\mathrm{vol}(C(gU))} = \|g|\psi\rangle\|^{2d} .
\end{align}
Combining these results, we find
\begin{align}
    J_g(\psi) = \frac{|\det g|^2}{\|g|\psi\rangle\|^{2d}} .
\end{align}
We now differentiate along $g = e^{\epsilon M}$. By Jacobi's formula, $|\det g|^2 = e^{2\epsilon\,\mathrm{Re}\,\mathrm{tr}M} = 1+2\epsilon\,\mathrm{Re}\,\mathrm{tr}M+O(\epsilon^2)$. Moreover, $\|g|\psi\rangle\|^2 = 1+2\epsilon\,\mathrm{Re}\langle M\rangle+O(\epsilon^2)$, so that $\|g|\psi\rangle\|^{2d} = 1+2d\,\epsilon\,\mathrm{Re}\langle M\rangle+O(\epsilon^2)$. Differentiating $J_g$ at $\epsilon=0$ gives \eqref{eq: projective divergence}. Two special cases serve as sanity checks: $\mathrm{div}\,X_{iH} = 0$ for Hermitian $H$, since unitaries preserve the measure, and $\mathrm{div}\,X_I = 0$, since $M=I$ does not move rays.
\end{proof}

\section{Proof of Proposition \ref{prop: prior error} and Corollary \ref{cor: prior error integrated}}\label{app: prior error}
The proof expands the outline of \S\ref{subsec: prior error}. The main task is to control the remainders uniformly in $\psi$ and in the initial ensemble, which can be done with the spectral expansion of the transition operators.

\begin{proof}
By \eqref{eq: h transform} with $t=0$ and $q=1$, the quantity to estimate is
\begin{align*}
    \frac{\hat p_0}{p_0} = P_T\Big(\frac{1}{p_T}\Big).
\end{align*}
For large $T$ the density $p_T = 1+\epsilon_T$ is close to one, and we can expand
\begin{align*}
    \frac{1}{p_T} = \frac{1}{1+\epsilon_T} = 1-\epsilon_T+\epsilon_T^2-\epsilon_T^3+\frac{\epsilon_T^4}{p_T} .
\end{align*}
We need to know how fast $\epsilon_T$ decays and how $P_T$ acts on functions such as $\epsilon_T$ and $\epsilon_T^2$.  In the following, all constants depend only on $d$ and $\gamma$, and $O(\cdot)$ is uniform in $\psi$ and in $p_0$. The estimates rely on three properties of the transition operators $P_s$.
\begin{enumerate}
    \item[(a)] They are Markov operators: $P_s1 = 1$ and $\sup|P_sh|\le\sup|h|$.
    \item[(b)] They are symmetric, $P_s(x,y) = P_s(y,x)$, because the process is reversible (\S\ref{subsec: depolarizing in d}). Hence they transport densities, $p_{t+s} = P_sp_t$; in particular $P_Tp_T = p_{2T}$.
    \item[(c)] They have the spectral expansion $P_s(x,y) = \sum_{k\ge0}e^{-\lambda_ks}\Pi_k(x,y)$ \cite{Bakry_2014}, where $\Pi_k(x,y)$ is the kernel of the orthogonal projection $\Pi_k$ onto the eigenspace $\mathcal{H}_k$. The dimension $N_k$ of $\mathcal{H}_k$ is given in \S\ref{subsec: depolarizing in d}. The eigenspace $\mathcal{H}_0$ consists of the constants, so $\Pi_0h = \int h\,dm$, which vanishes for functions of zero mean.
\end{enumerate}
To use the expansion (c) uniformly, we need a bound on the kernels $\Pi_k(x,y)$. The addition theorem for compact homogeneous spaces \cite{gine1975addition, koornwinder1973addition} gives $\Pi_k(x,x) = N_k$ for every $x$. The Cauchy-Schwarz inequality then gives $|\Pi_k(x,y)|\le N_k$. The expansion thus converges uniformly for $s>0$, because $N_k$ grows polynomially and $\lambda_k$ quadratically in $k$. 

The deviation $\epsilon_t$ has zero mean, so its expansion involves only $k\geq 1$ modes, i.e. $\epsilon_t = \sum_{k\ge1}e^{-\lambda_kt}\,\Pi_kp_0$. Since $\sup|\Pi_kp_0|\le N_k$ for every initial ensemble $p_0$,  $\epsilon_t$ decays at the rate $\lambda_1$ while contribution beyond the first eigenspace decays at the rate $\lambda_2$
\begin{align}
\sup|\epsilon_t|\le S_1(t),\qquad \sup\Big|\sum_{k\ge2}\Pi_k\epsilon_t\Big|\le S_2(t),\label{eq: epsilon decay}
\end{align}
where $S_j(t) := \sum_{k\ge j}N_ke^{-\lambda_kt}$.  More generally, $P_s$ damps every bounded function  $h$ of zero mean by a factor of order $e^{-\lambda_1s}$
\begin{align}
 \sup|P_sh|\le S_1(s)\,\sup|h|.
    \label{eq: sup decay}
\end{align}

With these estimates we can apply $P_T$ to the expansion of $1/p_T$. By \eqref{eq: epsilon decay}, the last two terms are $O(e^{-3\lambda_1T})$, and they stay so under $P_T$ by (a). We apply $P_T$ to the first three terms in turn. First, $P_T1 = 1$. Second, $P_T\epsilon_T = P_Tp_T-P_T1 = \epsilon_{2T}$ by (a) and (b). Third, we split $\epsilon_T^2 = \|\epsilon_T\|^2+h$ into its mean $\|\epsilon_T\|^2$, which is the $\chi^2$ divergence between $p_T$ and $p_\infty$ and is left unchanged by $P_T$, and the fluctuation $h$. The fluctuation has zero mean and $\sup|h|\le2\sup|\epsilon_T|^2 = O(e^{-2\lambda_1T})$, so $P_T$ damps it by a further factor $e^{-\lambda_1T}$ by \eqref{eq: sup decay}. Altogether,
\begin{align}
    \frac{\hat p_0}{p_0} = P_T\Big(\frac{1}{p_T}\Big) = 1-\epsilon_{2T}+\|\epsilon_T\|^2+O(e^{-3\lambda_1T}).
\end{align}
It remains to identify the terms of order $e^{-2\lambda_1T}$. By \eqref{eq: epsilon decay}, the components $\Pi_k\epsilon_t$ with $k\ge2$ are together $O(e^{-\lambda_2t})$. They therefore contribute $O(e^{-2\lambda_2T})$ both to $\epsilon_{2T}$ and to $\|\epsilon_T\|^2 = \sum_k\|\Pi_k\epsilon_T\|^2$, which is smaller than $e^{-3\lambda_1T}$ since $\lambda_2\ge2\lambda_1$. Only the first eigenspace survives, and its contribution was computed in \S\ref{subsec: prior error}:
\begin{align*}
    -\Pi_1\epsilon_{2T}+\|\Pi_1\epsilon_T\|^2 = d(d+1)\Big(\mathrm{tr}\Big[\Big(\rho(T)-\frac Id\Big)^2\Big] - \Big\langle\psi\Big|\rho(2T)-\frac Id\Big|\psi\Big\rangle\Big).
\end{align*}
Since $\rho(t)-I/d = e^{-\lambda_1t}(\rho(0)-I/d)$ by \eqref{eq: depol d}, the right-hand side is the correction in \eqref{eq: prior error}. This proves Proposition \ref{prop: prior error}.

For Corollary \ref{cor: prior error integrated}, we integrate the expansion against $p_0$, which is possible because its remainder is uniform. Let $R := \hat p_0/p_0-1$ and let $r$ be the correction in \eqref{eq: prior error}, so that $R = r+O(e^{-3\lambda_1T})$ and $r = O(e^{-2\lambda_1T})$, uniformly in $\psi$. Since $\mathrm{TV}(\hat p_0,p_0) = \frac12\mathbb{E}_{p_0}|R|$, integrating the uniform expansion directly gives \eqref{eq: prior error tv}. The relative entropy requires more care, because its expansion in $R$ starts at first order and this term cancels: both $\hat p_0$ and $p_0$ have unit mass, so $\mathbb{E}_{p_0}R = 0$. Likewise $\mathbb{E}_{p_0}r = 0$, because $\mathbb{E}_{p_0}\langle\psi|\rho(0)|\psi\rangle = \mathrm{tr}\,\rho(0)^2$, and therefore $\mathbb{E}_{p_0}[r^2]$ is $d^2(d+1)^2e^{-4\lambda_1T}$ times the variance in \eqref{eq: prior error kl}. With $(1+R)\log(1+R) = R+\frac12R^2+O(R^3)$, we obtain
\begin{align*}
    \mathrm{KL}(\hat p_0\,\|\,p_0) = \mathbb{E}_{p_0}\big[(1+R)\log(1+R)\big] = \frac12\mathbb{E}_{p_0}[R^2]+O(e^{-6\lambda_1T}) = \frac12\mathbb{E}_{p_0}[r^2]+O(e^{-5\lambda_1T}),
\end{align*}
which is \eqref{eq: prior error kl}.

Finally, to compare with the data-processing inequality, we evaluate the right-hand sides of \eqref{eq: data processing}. By \eqref{eq: epsilon decay}, $\mathrm{TV}(1,p_T) = \frac12\int|\epsilon_T|\,dm= O(e^{-\lambda_1T})$. For the relative entropy, the first-order term cancels again because $\epsilon_T$ has zero mean, and $\mathrm{KL}(1\,\|\,p_T) = -\int\log(1+\epsilon_T)\,dm = \frac12\|\epsilon_T\|^2+O(e^{-3\lambda_1T})$. As before, $\|\epsilon_T\|^2 = \|\Pi_1\epsilon_T\|^2+O(e^{-2\lambda_2T})$, this gives \eqref{eq: prior error dp} and completes the proof of Corollary \ref{cor: prior error integrated}.
\end{proof}

\section{Proof of Proposition \ref{prop: girsanov}}\label{app: girsanov}
The proof follows the flat-space argument of Song et al.\ \cite{song2021maximum}. On the compact manifold $\mathcal{P}$ all coefficients are bounded, so no additional integrability assumptions are needed.

\begin{proof}
We have to bound the relative entropy between $p_0$ and $\hat p^{\,s}_0$, the laws at the final time of two backward processes: the exact one, started from $p_T$, and the approximate one, started from $q$.  Let $Q$ and $Q^s$ be the laws of the paths $(\psi_\tau)_{\tau\in[0,T]}$ of the two processes. The map that sends a path to its endpoint has the law $p_0$ under $Q$ and $\hat p^{\,s}_0$ under $Q^s$, so the data-processing inequality \cite{cover2006elements, liese2006divergences} gives
\begin{align*}
    \mathrm{KL}\big(p_0\,\|\,\hat p^{\,s}_0\big)\le\mathrm{KL}\big(Q\,\|\,Q^s\big).
\end{align*}
The two path laws differ through their starting points and through their dynamics, and the chain rule of relative entropy \cite{cover2006elements} separates the two. Let $Q_x$ and $Q^s_x$ be the path laws of the exact and the approximate process started from a point $x$, so that $Q = \int p_T(x)Q_x\,dm(x)$ and $Q^s = \int q(x)Q^s_x\,dm(x)$. Then
\begin{align*}
    \mathrm{KL}\big(Q\,\|\,Q^s\big) = \mathrm{KL}\big(p_T\,\|\,q\big)+\int p_T(x)\,\mathrm{KL}\big(Q_x\,\|\,Q^s_x\big)\,dm(x).
\end{align*}
The first term is  the first term of \eqref{eq: girsanov bound}, so we just need to show that, for every starting point $x$,
\begin{align}
    \mathrm{KL}\big(Q_x\,\|\,Q^s_x\big)\le\mathbb{E}_{Q_x}\Big[\sum_k\int_0^T|\delta_k|^2\,d\tau\Big],\qquad \delta_k(\psi,t) := \partial_{B_k}\log p_t(\psi)-s_k(\psi,t),
    \label{eq: girsanov path bound}
\end{align}
with $\delta_k$ evaluated at $(\psi_\tau,T-\tau)$. Indeed, integrating \eqref{eq: girsanov path bound} against $p_T$ gives the expectation of the same quantity under $Q$, and since under $Q$ the state $\psi_\tau$ has the law $p_{T-\tau}$, this expectation is $\int_0^T\mathbb{E}_{p_t}\big[\sum_k|\delta_k|^2\big]dt$, the second term of \eqref{eq: girsanov bound}.

For a fixed starting point, the two processes have the same noise and different drifts, which is the setting of Girsanov's theorem. Before applying it, we need to make sure that both processes are well defined. All coefficients of the backward SDE depend on the state only through the projector, so we work with the backward SDE \eqref{eq: backward projector} for $\psi$. It is an It\^{o} SDE in the real vector space of Hermitian matrices whose solutions stay on $\mathcal{P}$, and its coefficients can be extended off $\mathcal{P}$ as Lipschitz functions without changing the solutions. By compactness and the assumptions, the score error $\delta_k$ is bounded, both SDEs have unique strong solutions, and pathwise uniqueness gives uniqueness in law \cite{oksendal2003stochastic, karatzas1991brownian}. 

Since only the $\Lambda_{k,t}$ depend on the score, the two drifts differ by $K_t-K^s_t = \sum_k\big(\bar\delta_k\,B_k\psi+\delta_k\,\psi B_k^\dagger\big)$.
This drift difference lies along the noise fields, so it can be absorbed into the noise. Let $\psi_\tau$ solve the exact SDE started from $x$ under a probability measure $\mathbb{P}$, driven by complex Wiener processes $\tilde\xi_k$. Then
\begin{align*}
    d\psi = \big(-\mathcal{L}[\psi]+K^s_t\big)d\tau + \sum_k\big(B_k\psi\,d\xi^s_k+\psi B_k^\dagger\,d\xi^{s*}_k\big),\qquad d\xi^s_k := d\tilde\xi_k+\bar\delta_k\,d\tau .
\end{align*}
In other words, the same path solves the approximate SDE, driven by the shifted processes $\xi^s_k$.  In terms of the real  components of $\tilde\xi_k = (W_{k1}+iW_{k2})/\sqrt2$, the shift is $W_{ka}\to W_{ka}+\int\theta_{ka}\,d\tau$ with
\begin{align*}
    \theta_{k1} = \sqrt2\,\mathrm{Re}\,\delta_k,\qquad \theta_{k2} = -\sqrt2\,\mathrm{Im}\,\delta_k,\qquad \sum_a\theta_{ka}^2 = 2|\delta_k|^2 .
\end{align*}
The process $\theta$ is adapted and bounded, so Novikov's condition holds, and Girsanov's theorem \cite{oksendal2003stochastic, karatzas1991brownian} provides a measure $\mathbb{P}^s$ with
\begin{align*}
    \log\frac{d\mathbb{P}}{d\mathbb{P}^s} = \sum_{k,a}\int_0^T\theta_{ka}\,dW_{ka} + \frac12\sum_{k,a}\int_0^T\theta_{ka}^2\,d\tau ,
\end{align*}
under which the $\xi^s_k$ are independent complex Wiener processes. By uniqueness in law, the law of $\psi$ under $\mathbb{P}^s$ is therefore $Q^s_x$. Under $\mathbb{P}$ and $\mathbb{P}^s$, the same map from the underlying probability space to paths thus has the laws $Q_x$ and $Q^s_x$. The data-processing inequality again gives us
\begin{align*}
    \mathrm{KL}\big(Q_x\,\|\,Q^s_x\big)\le\mathrm{KL}\big(\mathbb{P}\,\|\,\mathbb{P}^s\big) = \frac12\,\mathbb{E}_{\mathbb{P}}\Big[\sum_{k,a}\int_0^T\theta_{ka}^2\,d\tau\Big] = \mathbb{E}_{\mathbb{P}}\Big[\sum_k\int_0^T|\delta_k|^2\,d\tau\Big],
\end{align*}
which is \eqref{eq: girsanov path bound}, since the law of $\psi$ under $\mathbb{P}$ is $Q_x$. Note that only the form of the drift difference was used in this step. The same bound therefore holds for any drift difference $\sum_{k,a}\theta_{ka}V_{ka}(\psi_\tau)$ along the noise fields, with bounded adapted coefficients $\theta$, a fact that will be useful in Appendix \ref{app: discretization}.
\end{proof}

\section{Proof of Lemmas \ref{lem: backward depol} and \ref{lem: depol structure}}\label{app: depol structure}

\begin{proof}[Proof of Lemma \ref{lem: backward depol}] As explained in the main text, the idea is to expand the gradient in the frame of tangent vectors $X_{i\sigma_k}$. Let $G := \mathrm{grad}\,f$ be the gradient of a function $f$ at $\psi$. By \eqref{eq: tangent rotation}, $G$ can be regarded as the velocity of a rotation, $G = i[H_G,\psi]$, generated by the traceless Hamiltonian $H_G := i[\psi,G]$. Using the second identity of \eqref{eq: completeness consequences}, we can write $H_G$  as linear combination of $\sigma_k$ matrices,
\begin{align*}
    H_G = \frac{1}{2}\sum_k\mathrm{tr}(\sigma_kH_G)\,\sigma_k .
\end{align*}
Moreover, by definition of the gradient, the derivative of $f$ along $X_{i\sigma_k}$ is the inner product between $G$ and $X_{i\sigma_k}$,
\begin{align*}
    X_{i\sigma_k}f = \mathrm{tr}\big(G\,i[\sigma_k,\psi]\big) = \mathrm{tr}\big(\sigma_k\,i[\psi,G]\big) = \mathrm{tr}(\sigma_kH_G).
\end{align*}
Therefore, we have shown that, for every function $f$
\begin{align}
\mathrm{grad}\,f= i[H_G,\psi] = \frac{i}{2}\Big[\sum_k\mathrm{tr}(\sigma_kH_G)\,\sigma_k,\,\psi\Big]=\frac{1}{2}\sum_k\big(X_{i\sigma_k}f\big)\,X_{i\sigma_k} .\label{eq: tight frame}
\end{align}
Taking $f = \log p_t$ and we recover Lemma \ref{lem: backward depol}.
\end{proof}

\begin{proof}[Proof of Lemma \ref{lem: depol structure}]
We start with the martingale property. To alleviate the notation we fix $k$ and write $X := X_{i\sigma_k}$ and $f_t := \log p_t$, so that $\eta_{k,t} = Xf_t$. At reverse time $\tau$ the backward process has the generator $\tilde{\mathcal{W}}_t$, with $t = T-\tau$. By It\^o's formula applied to $u(\tau,\psi) := (Xf_{T-\tau})(\psi)$, the process $\eta_{k,T-\tau}(\psi_\tau)$ is the sum of its initial value, the time integral of the drift $(\partial_\tau+\tilde{\mathcal{W}}_t)u = (\tilde{\mathcal{W}}_t-\partial_t)(Xf_t)$, and a stochastic integral. We will show that the drift vanishes
\begin{align}
    \big(\tilde{\mathcal{W}}_t-\partial_t\big)(Xf_t) = 0 .
    \label{eq: no drift}
\end{align}
To evaluate the left-hand side, we need to know how the backward generator acts and how $f_t$ changes in time. Both can be best expressed with the carr\'e du champ operator $\Gamma(h,h') := \sum_{ij}D^{ij}\partial_ih\,\partial_jh'$ \cite{Bakry_2014}, with $\Gamma(h) := \Gamma(h,h)$. By \eqref{eq: generator}, it measures the failure of $\mathcal{W}$ to obey the Leibniz rule, $\mathcal{W}(hh') = h\,\mathcal{W}h'+h'\,\mathcal{W}h+\Gamma(h,h')$.

For the backward generator, \eqref{eq: nelson} and the product rule $p^{-1}\mathrm{div}(pV) = \mathrm{div}\,V+V\log p$ give $\tilde{\mathcal{W}}_th = -\mathcal{W}h+\mathrm{div}(D\nabla h)+\Gamma(f_t,h)$. Since $\mathcal{W} = \frac12\mathrm{div}(D\nabla\,\cdot\,)$,
\begin{align}
    \tilde{\mathcal{W}}_th = \mathcal{W}h+\Gamma(f_t,h) .
    \label{eq: step C}
\end{align}
For the time dependence, the process is reversible, so $\mathcal{W}^* = \mathcal{W}$ and the Fokker-Planck equation reads $\partial_tp_t = \mathcal{W}p_t$. A diffusion generator obeys the chain rule $\mathcal{W}\phi(h) = \phi'(h)\,\mathcal{W}h+\frac12\phi''(h)\,\Gamma(h)$ \cite{Bakry_2014}. With $\phi = \exp$ and $h = f_t$, it gives $\mathcal{W}p_t = p_t\big(\mathcal{W}f_t+\frac12\Gamma(f_t)\big)$, thus
\begin{align}
    \partial_tf_t = \mathcal{W}f_t+\frac12\,\Gamma(f_t) .
    \label{eq: step B}
\end{align}
We now substitute \eqref{eq: step C} and \eqref{eq: step B} into \eqref{eq: no drift}. Since $X$ does not depend on time, $\partial_t(Xf_t) = X(\partial_tf_t)$, and the left-hand side of \eqref{eq: no drift} can be written as
\begin{align*}
    \big(\tilde{\mathcal{W}}_t-\partial_t\big)(Xf_t) = \Big[\mathcal{W}(Xf_t)-X(\mathcal{W}f_t)\Big]+\Big[\Gamma(f_t,Xf_t)-\frac12X\big(\Gamma(f_t)\big)\Big] .
\end{align*}
Each bracket measures whether differentiating along $X$ commutes with the forward dynamics, the first one for the generator and the second one for the carr\'e du champ. Both vanish if the rotation generated by $\sigma_k$ is a symmetry of the forward dynamics, in the sense that for every smooth $h$
\begin{align}
    \mathcal{W}(Xh) = X(\mathcal{W}h),\qquad X\big(\Gamma(h)\big) = 2\,\Gamma(h,Xh) .
    \label{eq: step A}
\end{align}

These identities come from the unitary invariance of Gisin's equation for the depolarizing channel. For a unitary $U$, the map $M\mapsto UMU^\dagger$ preserves Hermiticity, tracelessness and the Hilbert-Schmidt inner product, so $U\sigma_kU^\dagger = \sum_jO_{jk}\sigma_j$ for a real orthogonal matrix $O$. If $\psi_t$ solves Gisin's equation with the jump operators $L_k$, then $U\psi_tU^\dagger$ solves it with the rotated jump operators $UL_kU^\dagger = \sum_jO_{jk}L_j$. This mixing does not change the law of the process: the drift is a sum of terms quadratic in the $L_k$, which is invariant, and the mixed noises $\sum_kO_{jk}d\xi_k$ are again independent complex Wiener processes. 

The map $\Phi_U(\psi) := U\psi U^\dagger$ thus intertwines the generator, $\mathcal{W}(h\circ\Phi_U) = (\mathcal{W}h)\circ\Phi_U$, and since $\Gamma(h) = \mathcal{W}(h^2)-2h\,\mathcal{W}h$ by the Leibniz rule above, it also intertwines the carr\'e du champ, $\Gamma(h\circ\Phi_U) = \Gamma(h)\circ\Phi_U$, for every smooth $h$. We differentiate both identities along $U_\epsilon = e^{i\epsilon\sigma_k}$ at $\epsilon=0$. The velocity of $\Phi_{U_\epsilon}$ is $X$, so that $\frac{d}{d\epsilon}(h\circ\Phi_{U_\epsilon})|_{\epsilon=0} = Xh$, and since $\Gamma$ is bilinear and symmetric, this gives \eqref{eq: step A}. Therefore \eqref{eq: no drift} holds, and what remains of $\eta_{k,T-\tau}(\psi_\tau)$ is a stochastic integral, which is a local martingale. It is a true martingale on $[0,T-\delta]$ with $\delta>0$, or on $[0,T]$ for smooth positive $p_0$, since there $\log p_t$ and its derivatives are bounded \cite{revuz1999continuous}.

We now turn to the Fisher information. Its second expression follows from \eqref{eq: tight frame} and \eqref{eq: D grad}: differentiating $f$ along both sides gives $\Gamma(f) = \gamma\sum_k(X_{i\sigma_k}f)^2$, and taking $f = \log p_t$ and the expectation under $p_t$ gives the second expression. This expression also shows why $I$ is monotone: it is a sum of second moments of the martingales. Indeed, $\psi_\tau$ has the law $p_{T-\tau}$, so that $I(T-\tau) = \gamma\sum_k\mathbb{E}[M_k(\tau)^2]$ with $M_k(\tau) := \eta_{k,T-\tau}(\psi_\tau)$. By the orthogonality of martingale increments, $\mathbb{E}[M_k(\tau)^2]$ is non-decreasing in $\tau$, and $I$ is therefore non-increasing in $t$.

Finally, we need the upper bound $I(t)\le2(d-1)/t$. Written in Riemannian terms, $I$ is the usual Fisher information of $p_t$: by \eqref{eq: depol brownian} and \eqref{eq: D grad}, $\mathcal{W} = \gamma\Delta_g$ and $D = 2\gamma g^{-1}$, hence $\Gamma(f) = 2\gamma|\nabla f|_g^2$ and $I(t) = 2\gamma\int|\nabla\log p_t|_g^2\,p_t\,dm$. Along the heat flow, such a quantity is controlled by the Li-Yau inequality \cite{li1986parabolic}: every positive solution of the heat equation $\partial_su = \Delta u$ on a compact manifold of dimension $n$ with non-negative Ricci curvature satisfies
\begin{align*}
    \frac{|\nabla u|_g^2}{u^2}-\frac{\partial_su}{u}\le\frac{n}{2s} .
\end{align*}
This applies here. The metric $g$ is a constant multiple of the Fubini-Study metric, which is an Einstein metric with positive Ricci curvature \cite{besse1987einstein}, on a manifold of dimension $n = 2(d-1)$, and $u(s) := p_{s/\gamma}$ solves $\partial_su = \Delta u$ because $\mathcal{W} = \gamma\Delta_g$. Multiplying the inequality by $u$, integrating over $\mathcal{P}$ and using $\int\partial_su\,dm = \frac{d}{ds}\int u\,dm = 0$, we find $\int|\nabla\log u|_g^2\,u\,dm\le n/(2s)$. With $s = \gamma t$, this is $I(t)\le2\gamma\,n/(2\gamma t) = 2(d-1)/t$.
\end{proof}

\section{Proof of Proposition \ref{prop: discretization}}\label{app: discretization}

\begin{proof}
We have to bound the relative entropy between the law $p_\delta$ of the exact backward process at reverse time $T-\delta$ and the law $\hat p_\delta$ of the output of the scheme. Appendix \ref{app: girsanov} bounds such a relative entropy for two continuous-time processes that have the same noise and whose drifts differ along the noise fields. The scheme proceeds in discrete steps, so we first write its output as the endpoint of a continuous process. In the $n$-th step, the scheme freezes the Hamiltonian at $H_n = \gamma\sum_ka_k(\hat\psi_n,t_n)\sigma_k$, then rotates and diffuses. Since the splitting of the frozen generator into a rotation and a diffusion is exact, this is the same as solving \eqref{eq: depol hamiltonian} on $[\tau_n,\tau_{n+1})$ with $H_t(\psi_\tau)$ replaced by $H_n$. Hence $\hat p_\delta$ is the law at reverse time $T-\delta$ of the process started from $q$ that solves \eqref{eq: depol hamiltonian} with the frozen Hamiltonian. This process has the same noise as the exact backward process, and the two drifts differ by the tangent vector
\begin{align*}
    \Delta_\tau := i\big[H_{T-\tau}(\psi_\tau)-H_n,\psi_\tau\big] = \gamma\sum_ke_k\,X_{i\sigma_k}(\psi_\tau),\qquad e_k := \eta_{k,T-\tau}(\psi_\tau)-a_k(\psi_{\tau_n},t_n).
\end{align*}

The argument of Appendix \ref{app: girsanov} used only that the drift difference lies along the noise fields. For every decomposition of $\Delta_\tau$ along the noise fields, with bounded adapted coefficients $\theta$, it gives
\begin{align*}
    \Delta_\tau = \sum_{k,a}\theta_{ka}V_{ka}(\psi_\tau)\quad\Longrightarrow\quad\mathrm{KL}\big(p_\delta\,\big\|\,\hat p_\delta\big) \le \mathrm{KL}\big(p_T\,\big\|\,q\big) + \frac12\,\mathbb{E}\int_0^{T-\delta}\sum_{k,a}\theta_{ka}^2\,d\tau .
\end{align*}
The decomposition is not unique, because the $2(d^2-1)$ real noise fields span a tangent space of dimension only $2(d-1)$, and the choice dictates the bound. The obvious choice is based on the relation $X_{i\sigma_k} = \sqrt{2/\gamma}\,V_{k2}$. It sets $\theta_{k1}=0$ and $\theta_{k2} = \sqrt{2\gamma}\,e_k$, so that $\frac12\sum_{k,a}\theta_{ka}^2 = \gamma\sum_ke_k^2$, which is twice too large. We therefore look for the decomposition with the smallest norm. To find it, we fix $\psi = \psi_\tau$ and introduce the linear maps
\begin{align*}
    S:\ \mathbb{R}^{2(d^2-1)}\to T_\psi\mathcal{P},\quad \theta\mapsto\sum_{k,a}\theta_{ka}V_{ka}(\psi),\qquad \mathcal{R}:\ \mathbb{R}^{d^2-1}\to T_\psi\mathcal{P},\quad e\mapsto\sum_ke_kX_{i\sigma_k}(\psi),
\end{align*}
from the noise coefficients and from the rotation coefficients to tangent vectors, so that we are looking for the smallest solution of $S\theta = \Delta_\tau = \gamma\mathcal{R}e$. Both maps are related to the diffusion matrix: $D = SS^T$, since $D^{ij} = \sum_aV_a^iV_a^j$, and $D = \gamma\mathcal{R}\mathcal{R}^T$ by \eqref{eq: tight frame} and \eqref{eq: D grad}. Moreover, $D$ is invertible, with $D^{-1} = g/2\gamma$ by \eqref{eq: D grad}. The solution with the smallest norm is then $\theta := S^TD^{-1}\Delta_\tau$, and it satisfies
\begin{align*}
    \sum_{k,a}\theta_{ka}^2 = \big\langle D^{-1}\Delta_\tau,\,SS^TD^{-1}\Delta_\tau\big\rangle = \big\langle D^{-1}\Delta_\tau,\,\Delta_\tau\big\rangle = \gamma^2\,e^T\mathcal{R}^T\big(\gamma\mathcal{R}\mathcal{R}^T\big)^{-1}\mathcal{R}e = \gamma\,e^T\Pi_{\mathcal{R}}\,e\le\gamma\sum_ke_k^2 ,
\end{align*}
where $\Pi_{\mathcal{R}} = \mathcal{R}^T(\mathcal{R}\mathcal{R}^T)^{-1}\mathcal{R}$ is the orthogonal projection onto the range of $\mathcal{R}^T$. These coefficients are adapted, and they are bounded because $D^{-1}$ is smooth on the compact manifold $\mathcal{P}$ and $e$ is bounded. Therefore
\begin{align*}
    \mathrm{KL}\big(p_\delta\,\big\|\,\hat p_\delta\big) \le \mathrm{KL}\big(p_T\,\big\|\,q\big) + \frac\gamma2\sum_n\int_{\tau_n}^{\tau_{n+1}}\mathbb{E}\Big[\sum_k\big(\eta_{k,T-\tau}(\psi_\tau)-a_k(\psi_{\tau_n},t_n)\big)^2\Big]d\tau ,
\end{align*}
where the expectation is over the exact backward process.

It remains to bound the last term by the last two terms of \eqref{eq: discretization bound}. The difference $\eta_{k,T-\tau}(\psi_\tau)-a_k(\psi_{\tau_n},t_n)$ combines two effects: at the beginning of the step the coefficient $a_k$ differs from the exact one, and within the step the exact coefficient moves away from its initial value. We separate them,
\begin{align*}
    \eta_{k,T-\tau}(\psi_\tau)-a_k(\psi_{\tau_n},t_n) = \big[\eta_{k,t_n}(\psi_{\tau_n})-a_k(\psi_{\tau_n},t_n)\big]+\big[\eta_{k,T-\tau}(\psi_\tau)-\eta_{k,t_n}(\psi_{\tau_n})\big],
\end{align*}
and use $(x+y)^2\le2x^2+2y^2$. The first effect is the score error. Since $\psi_{\tau_n}$ has the law $p_{t_n}$, it gives the second term of \eqref{eq: discretization bound}. The second effect is where the martingale property enters. By part 1 of Lemma \ref{lem: depol structure}, $M_k(\tau) := \eta_{k,T-\tau}(\psi_\tau)$ is a martingale, so its increments are orthogonal to its past and
\begin{align*}
    \mathbb{E}\big[(M_k(\tau)-M_k(\tau_n))^2\big] = \mathbb{E}\big[M_k(\tau)^2\big]-\mathbb{E}\big[M_k(\tau_n)^2\big] .
\end{align*}
Summed over $k$, these second moments are the Fisher information, $\gamma\sum_k\mathbb{E}[M_k(\tau)^2] = I(T-\tau)$, and $I(T-\tau)\le I(t_{n+1})$ for $\tau\le\tau_{n+1}$, since $I$ is non-increasing. The contribution of the $n$-th step is therefore at most $h_n[I(t_{n+1})-I(t_n)]$, which gives the last term of \eqref{eq: discretization bound}. With exact coefficients the first effect is absent, no splitting is needed, and the last term keeps its factor $\frac12$.
\end{proof}
\section{Proof of Corollary \ref{cor: step sizes}}\label{app: step sizes}
The condition $h_n\le\kappa t_{n+1}$ allows each step to be a fixed fraction of the time that remains. It turns the discretization term into $\kappa$ times a sum that can be bounded by a summation by parts and by the Li-Yau bound of Lemma \ref{lem: depol structure}.

\begin{proof}
Since $I$ is non-increasing, every term of the sum is non-negative, and the condition $h_n\le\kappa t_{n+1}$ bounds the sum by $\kappa\sum_nt_{n+1}[I(t_{n+1})-I(t_n)]$. A summation by parts gives
\begin{align*}
    \sum_{n=0}^{N-1}t_{n+1}\big[I(t_{n+1})-I(t_n)\big] = \sum_{n=0}^{N-1}\big[t_{n+1}I(t_{n+1})-t_nI(t_n)\big] + \sum_{n=0}^{N-1}h_n\,I(t_n) = \delta I(\delta)-TI(T)+\sum_{n=0}^{N-1}h_n\,I(t_n),
\end{align*}
where the first sum telescopes. Since $I$ is non-increasing, $I(t_n)\le I(t)$ for $t\in[t_{n+1},t_n]$, so the last sum is at most $\int_\delta^TI(t)\,dt$. Dropping the negative term $-TI(T)$ and using the bound $I(t)\le2(d-1)/t$ of Lemma \ref{lem: depol structure}, both for $\delta I(\delta)$ and in the integral, we obtain
\begin{align*}
    \sum_{n=0}^{N-1}t_{n+1}\big[I(t_{n+1})-I(t_n)\big]\le\delta I(\delta)+\int_\delta^TI(t)\,dt\le2(d-1)\Big(1+\log\frac T\delta\Big).
\end{align*}
Finally, we check that the geometric grid satisfies the condition. Away from the cut-off, $t_n = T(1+\kappa)^{-n}$ and $h_n = \kappa t_{n+1}$ exactly. Only the last step is affected by the cut-off at $\delta$: the choice of $N$ gives $T(1+\kappa)^{-N}\le\delta<T(1+\kappa)^{-(N-1)}$, so the last step runs from $t_{N-1}\le(1+\kappa)\delta$ to $t_N = \delta$ and has $h_{N-1}\le\kappa\delta$.
\end{proof}

\section{Details of the numerical tests}\label{app: numerics}
All the initial ensembles of Section \ref{sec: examples} are invariant under the unitaries that fix a pure state $\psi_0$. For such ensembles, every quantity of interest depends on the state only through its fidelity with $\psi_0$, and the dynamics reduces to a diffusion of a single variable, whose spectral decomposition is explicit. In this appendix we describe this reduction, show how the discrete scheme and the perturbed scores act on the reduced variable, and summarize the implementation.

\subsection{Reduction to one variable}\label{app: numerics reduction}
Let $x = 2\,\mathrm{tr}(\psi_0\psi)-1$ as in \S\ref{subsec: setting}. As noted after Proposition \ref{prop: girsanov}, the unitaries that fix $\psi_0$ act transitively on the level sets of $x$, and the depolarizing dynamics is unitarily covariant (Appendix \ref{app: depol structure}). If $p_0$ is invariant under these unitaries, so is $p_t$, which is then a function of $x$ alone. Moreover, $x$ is itself a diffusion. By \eqref{eq: observable sde} and \eqref{eq: depol d}, its drift is $\mathcal{W}x = 2\langle\mathcal{L}^\dagger[\psi_0]\rangle = 2\gamma(2-d-dx)$, and its diffusion coefficient is $2\sum_k|\partial_{B_k}x|^2 = 4\gamma(1-x^2)$, where $\partial_{B_k}x = 2\langle\psi|\psi_0(L_k-l_k)|\psi\rangle$ and the sum is evaluated with the completeness relation \eqref{eq: completeness}. On functions of $x$, the generator therefore acts as
\begin{align}
    \mathcal{W}_x = 2\gamma\Big[(1-x^2)\,\partial_x^2 + (2-d-dx)\,\partial_x\Big] = \frac{2\gamma}{(1-x)^{d-2}}\,\partial_x\Big((1-x)^{d-2}(1-x^2)\,\partial_x\Big).
    \label{eq: fidelity generator}
\end{align}
The second form shows that $\mathcal{W}_x$ is symmetric with respect to $\nu(dx) = \frac{d-1}{2}\big(\frac{1-x}{2}\big)^{d-2}dx$, the law of $x$ under the Fubini-Study measure. Its eigenfunctions are the Jacobi polynomials $\phi_k(x) = P_k^{(d-2,0)}(x)$, which are orthogonal for this weight, with $\int\phi_k^2\,d\nu = (d-1)/(2k+d-1)$ \cite{dlmf}, and its eigenvalues are the $-\lambda_k$ of \eqref{eq: depol spectrum}. The density of an invariant ensemble therefore evolves as
\begin{align}
    p_t(\psi) = \sum_{k=0}^\infty\frac{2k+d-1}{d-1}\,e^{-\lambda_kt}\,\mathbb{E}_{p_0}\big[\phi_k\big]\,\phi_k(x),
    \label{eq: jacobi heat kernel}
\end{align}
with $\mathbb{E}_{p_0}[\phi_k] = \phi_k(1) = \binom{k+d-2}{k}$ for the Dirac mass at $\psi_0$. In this basis, the transition operator $P_s$ simply multiplies the coefficient of $\phi_k$ by $e^{-\lambda_ks}$, which gives the exact output \eqref{eq: h transform} of the backward SDE started from any invariant $q$. For $d=2$, $\nu$ is the uniform law $dx/2$ and the $\phi_k$ are the Legendre polynomials. Finally, given $x$, the state is uniformly distributed on the level set, for $p_t$ as well as for all the outputs we consider. Relative entropies and total variation distances between laws of states therefore equal those between the corresponding laws of $x$.

\subsection{Score, Fisher information and the discrete scheme}\label{app: numerics scheme}
By the chain rule, the scores are $\partial_{B_k}\log p_t = g_t\,\partial_{B_k}x$, with $\sum_k|\partial_{B_k}x|^2 = 2\gamma(1-x^2)$, so that the Fisher information of Lemma \ref{lem: depol structure} is $I(t) = \mathbb{E}_{p_t}\big[4\gamma(1-x^2)\,g_t^2\big]$. The scheme of Definition \ref{def: discrete time scheme} preserves the invariance, so it can also be run on the law of $x$. With $\eta_{k,t} = g_t\,X_{i\sigma_k}x$ and $X_{i\sigma_k}x = 2\,\mathrm{tr}\big(\sigma_k\,i[\psi,\psi_0]\big)$, the second identity of \eqref{eq: completeness consequences} gives the Hamiltonian $H_t = 4\gamma\,g_t\,i[\psi,\psi_0]$, which acts in the plane spanned by $|\psi_0\rangle$ and $|\psi\rangle$. Writing $|\psi\rangle = \cos\frac\theta2|\psi_0\rangle + \sin\frac\theta2|\chi\rangle$ up to a phase, with $\langle\psi_0|\chi\rangle = 0$ and $x = \cos\theta$, the rotation step leaves $|\chi\rangle$ unchanged and maps $\theta$ to $\theta - 4\gamma h_n\,g_{t_n}(x)\sin\theta$. The diffusion step multiplies the coefficient of $\phi_k$ by $e^{-\lambda_kh_n}$. The law of $x$ is thus propagated step by step, by a change of variables followed by a diagonal multiplication in the basis of Jacobi polynomials. For the perturbed scores, $g_{t_n}$ is replaced by $g^s_{t_n}$ in the rotation step.

\subsection{Perturbed scores}\label{app: numerics perturbed}
With the learned scores $s_k = g^s_t\,\partial_{B_k}x$, the backward generator on functions of $x$ follows from \eqref{eq: step C} with the score replaced, and it reads $\mathcal{W}_xh + 4\gamma(1-x^2)\,g^s_t\,\partial_xh$. Its adjoint with respect to $\nu$ gives the equation for the density $q_\tau$ of $x$ with respect to $\nu$, at physical time $t = T-\tau$,
\begin{align*}
    \partial_\tau q_\tau = -\frac1n\,\partial_x\Big[n\,(1-x^2)\big(4\gamma\,g^s_t\,q_\tau - 2\gamma\,\partial_xq_\tau\big)\Big],\qquad n := \frac{d\nu}{dx},
\end{align*}
which for $g^s_t = g_t$ is solved by $q_\tau = p_{T-\tau}$. We integrate this equation from $q_0 = 1$ with finite volumes on $4000$ uniform cells, central fluxes and the Crank-Nicolson scheme with time step $5\times10^{-4}$. As a check, with the exact score it reproduces the relative entropy of the exact output \eqref{eq: h transform} to within $0.2\%$ for $d\le6$, and to within $4\%$ for $d=8$, where this relative entropy is only $2.3\times10^{-6}$.

\subsection{Implementation}
Densities are represented by their coefficients on the Jacobi polynomials $\phi_k$, $k\le200$, and evaluated on $3000$ Gauss-Legendre nodes in $x$, whose weights are multiplied by the density of $\nu$. The exact outputs are normalized to within $10^{-8}$. For the Dirac ensemble at small times, the density becomes extremely small far from $\psi_0$; where it falls below $10^{-13}$ of its maximum, we continue the score by its last reliable value, and these regions carry no weight. The smooth ensemble is the mixture
\begin{align}
    p_0(x) = 0.4\,\frac{e^{15(x-1)}}{\int e^{15(x-1)}\,d\nu} + 0.6\,\frac{e^{-(x+0.3)^2/(2\cdot0.08^2)}}{\int e^{-(x+0.3)^2/(2\cdot0.08^2)}\,d\nu}
    \label{eq: smooth p0}
\end{align}
of a cap around $\psi_0$, which carries $40\%$ of the weight, and a ring around $x = -0.3$. In higher dimension the volume factor of $\nu$ moves the first component away from $\psi_0$: its law of $x$ peaks at $x = 1-(d-2)/15$. The Fisher information of the smooth ensemble at $t=0$ is $I(0) = 94.7$, $104$, $114$, $131$ and $147$ for $d = 2,3,4,6,8$.

\section{Relation to the reverse SDE of Bompais, Gu\c{t}\u{a} and Garrahan}\label{app: bgg}
The reverse SDE of \cite{garrahan2026} is derived from the Kolmogorov equations in the real vector space $\mathrm{Herm}_d$ of Hermitian matrices, with the inner product $\mathrm{tr}(AB)$, in which the pure states form a submanifold of measure zero. In this appendix we show that it follows rigorously from the Stratonovich reversal of Section \ref{sec: reversal}, once its divergence term is read as a distribution. Unlike Gisin's equation, the forward process of \cite{garrahan2026} has one real Wiener process per jump operator, which describes homodyne detection,
\begin{align}
    d\psi = \mathcal{L}(\psi)\,dt + \sum_mK_m(\psi)\,dW_m,\qquad K_m := X_{L_m},
    \label{eq: homodyne sse}
\end{align}
with the Lindbladian \eqref{eq: lindblad}, the fields \eqref{eq: X_M} and independent real Wiener processes $W_m$. Let $\mathcal{D} := \mathcal{L}+i[H,\,\cdot\,]$ be the dissipator, $D := \sum_mK_m\otimes K_m$ the diffusion tensor, and $\mu_t$ the law of $\psi_t$, regarded as a probability measure on $\mathrm{Herm}_d$. It is concentrated on $\mathcal{P}$, in the sense that $\int\varphi\,d\mu_t = \int_{\mathcal{P}}\varphi\,p_t\,dm$ for functions $\varphi$ on $\mathrm{Herm}_d$, where $p_t$ is the density of $\psi_t$ with respect to $dm$. In reverse time $\tau = T-t$, the reverse SDE of \cite[Theorem~1]{garrahan2026} reads
\begin{align}
    d\psi = \Big(-i\big[H+H^{\rm fb}_\tau,\psi\big]+\mathcal{D}(\psi)\Big)d\tau + \sum_mK_m(\psi)\,d\tilde W_m,\qquad H^{\rm fb}_\tau := -2H+i[\Xi_\tau,\psi],
    \label{eq: bgg reversal}
\end{align}
with the Hermitian matrix
\begin{align}
    \Xi_\tau := -2\,\mathcal{D}(\psi)+\frac{1}{\mu_t}\,\mathrm{Div}\big(D\mu_t\big) .
    \label{eq: bgg Xi}
\end{align}
Here $\mathrm{Div}$ denotes the divergence over the matrix entries of $\psi$, $(\mathrm{Div}\,D)_{ij} := \sum_{h,k}\partial D_{ij,hk}/\partial\psi_{hk}$, while $\mathrm{div}$ is reserved  for the divergence on $\mathcal{P}$ with respect to $dm$.  Since $\mu_t$ has no density with respect to the Lebesgue measure of $\mathrm{Herm}_d$, this term is understood in a distributional sense. For smooth test functions $\varphi$ on $\mathrm{Herm}_d$, it is defined by  
\begin{align}
    \langle\mathrm{Div}(D\mu_t),\varphi\rangle := -\int D\nabla\varphi\,d\mu_t.
\end{align}
In this expression $D\nabla \varphi := \sum_m(K_m\varphi)\,K_m$ in the notation of \cite{garrahan2026}. On $\mathcal{P}$ it depends only on the restriction of $\varphi$ to $\mathcal{P}$, because the $K_m$ are tangent, so it coincides with $D\nabla(\varphi|_{\mathcal{P}})$ as defined in Section \ref{sec: reversal}.

\begin{proposition}[Relation to \cite{garrahan2026}]\label{prop: bgg}
Assume that $p_t$ is smooth and positive for $t>0$; the matrix-valued distribution $\mathrm{Div}(D\mu_t)$ has a density with respect to $\mu_t$, namely
    \begin{align}
        \frac{1}{\mu_t}\,\mathrm{Div}\big(D\mu_t\big) = \sum_m\Big[\mathrm{div}_{p_t}(K_m)\,K_m+(K_m\cdot\partial)K_m\Big].
        \label{eq: bgg divergence}
    \end{align}
As a result, \eqref{eq: bgg reversal} coincides with the reverse SDE of Section \ref{sec: reversal} for \eqref{eq: homodyne sse}.
\end{proposition}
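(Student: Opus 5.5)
The proof has two parts. First we identify the distribution $\mathrm{Div}(D\mu_t)$ by integrating by parts on $\mathcal{P}$. Then we compare drifts.

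\textbf{Step 1: the density of $\mathrm{Div}(D\mu_t)$.} Fix a smooth test function $\varphi$ on $\mathrm{Herm}_d$. By definition,
\begin{align*}
\langle\mathrm{Div}(D\mu_t),\varphi\rangle = -\int_{\mathcal{P}}\sum_m (K_m\varphi)\,K_m\,p_t\,dm .
\end{align*}
This is matrix valued: $K_m(\psi)$ is a Hermitian matrix, i.e. the vector of the ambient coordinates $\psi_{ij}$ restricted to $\mathcal{P}$.

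Write $(K_m\varphi)\,K_m = K_m(\varphi\,K_m)-\varphi\,(K_m\cdot\partial)K_m$. Here $(K_m\cdot\partial)K_m$ is the componentwise derivative of the matrix-valued function $K_m$ along $K_m$. Because $K_m$ is tangent, this derivative only involves the restriction of $K_m$ to $\mathcal{P}$.

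For the first term, apply the divergence theorem on the compact manifold $\mathcal{P}$ entrywise to the scalar functions $\varphi (K_m)_{ij}$:
\begin{align*}
\int K_m(\varphi K_m)\,p_t\,dm = -\int \varphi\,K_m\,\mathrm{div}(p_tK_m)\,dm = -\int\varphi\,K_m\,\mathrm{div}_{p_t}(K_m)\,p_t\,dm ,
\end{align*}
using \eqref{eq: div p}. Combining the two terms gives
\begin{align*}
\langle\mathrm{Div}(D\mu_t),\varphi\rangle=\int\varphi\sum_m\big[\mathrm{div}_{p_t}(K_m)K_m+(K_m\cdot\partial)K_m\big]\,d\mu_t .
\end{align*}
This is exactly \eqref{eq: bgg divergence}. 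Smoothness and positivity of $p_t$ make the density well defined.

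\textbf{Step 2: comparison of drifts.} Apply the real-noise reversal \eqref{eq: stratonovich reversal} to \eqref{eq: homodyne sse}. The Stratonovich drift is $V_0=\mathcal{L}(\psi)-\frac12\sum_m(K_m\cdot\partial)K_m$. The reverse Itô drift is therefore
\begin{align*}
-V_0+\sum_m\mathrm{div}_{p_t}(K_m)K_m+\tfrac12\sum_m(K_m\cdot\partial)K_m = -\mathcal{L}(\psi)+\sum_m\big[\mathrm{div}_{p_t}(K_m)K_m+(K_m\cdot\partial)K_m\big],
\end{align*}
and by Step 1 this equals $-\mathcal{L}(\psi)+\frac{1}{\mu_t}\mathrm{Div}(D\mu_t)$.

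Now expand the drift of \eqref{eq: bgg reversal}:
\begin{align*}
-i[H,\psi]+2i[H,\psi]+[[\Xi,\psi],\psi]+\mathcal{D}(\psi)=i[H,\psi]+\mathcal{D}(\psi)+[[\Xi,\psi],\psi] .
\end{align*}
We need to show this equals $-\mathcal{L}(\psi)+\Xi+2\mathcal{D}(\psi)$. Since $-\mathcal{L}(\psi)=i[H,\psi]-\mathcal{D}(\psi)$, this reduces to the identity $[[\Xi,\psi],\psi]=\Xi$.

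By \eqref{eq: tangent rotation}, $[[u,\psi],\psi]=\psi u+u\psi-2\psi u\psi$, which equals $u$ exactly when $u$ is tangent. So the task is to show that $\Xi$ is tangent at $\psi$: traceless, Hermitian, and satisfying $\{\psi,\Xi\}=\Xi$.

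\textbf{Main obstacle.} Neither $-2\mathcal{D}(\psi)$ nor $\sum_m(K_m\cdot\partial)K_m$ is separately tangent. We have to check that the non-tangential parts cancel.

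The cleanest route is to avoid direct computation. The backward Itô SDE on $\mathcal{P}$ from Section \ref{sec: reversal} has Itô drift $b$ equal to (tangent part) plus $\frac12\sum_m(K_m\cdot\partial)K_m$, and $-\mathcal{L}(\psi)+2\mathcal{D}(\psi)$ has the same structure: $i[H,\psi]$ is tangent, and $\mathcal{D}(\psi)$ equals $\frac12\sum_m(K_m\cdot\partial)K_m$ plus a tangent vector, because the forward Itô drift $\mathcal{L}(\psi)$ is tangent up to exactly this correction. Hence
\begin{align*}
\Xi=-2\mathcal{D}(\psi)+\sum_m(K_m\cdot\partial)K_m+(\text{tangent}) ,
\end{align*}
and the first two terms combine into twice the Stratonovich drift of the dissipative part, which is tangent.

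As a fallback, verify the identity directly. Use $K_m=L_m\psi+\psi L_m^\dagger-2\mathrm{Re}\langle L_m\rangle\psi$, compute $(K_m\cdot\partial)K_m$ from \eqref{eq: X_M}, and check $\{\psi,\Xi\}=\Xi$ using $\psi^2=\psi$. This is mechanical.
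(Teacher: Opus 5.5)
Your proposal is correct and follows the paper's proof essentially step for step. Your product-rule split in Step 1 is the same integration by parts as the paper's expansion of $\mathrm{div}(p_t f K_m)$ with $f$ the entries of $K_m$, and in Step 2 you reduce to tangency of $\Xi_\tau$ via $[[Y,\psi],\psi]=\psi Y+Y\psi-2\psi Y\psi$ and cancel the normal parts exactly as the paper does, using the tangency of $i[H,\psi]$ and of the Stratonovich drift $V_0=\mathcal{L}(\psi)-\frac12\sum_m(K_m\cdot\partial)K_m$, so the direct-computation fallback is unnecessary.
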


\begin{proof}
By definition of $p_t$ 
\begin{align}
-\int D\nabla\varphi\,d\mu_t = -\sum_m\int_{\mathcal{P}}(K_m\varphi)K_m\,p_t\,dm.
\end{align}
The $K_m$ are tangent, so this involves only the derivatives of $\varphi$ along $\mathcal{P}$, and we can integrate by parts on $\mathcal{P}$. For a function $f$ on $\mathcal{P}$, \eqref{eq: div p} gives
\begin{align}
    -\int(K_m\varphi)\,f\,p_t\,dm = \int\varphi\,\mathrm{div}\big(p_tfK_m\big)\,dm = \int\varphi\,\big(f\,\mathrm{div}_{p_t}K_m+K_mf\big)\,p_t\,dm .
\end{align}
Taking for $f$ the entries of $K_m$, whose derivatives $K_mf$ are the entries of $(K_m\cdot\partial)K_m$, and summing over $m$ gives
\begin{align}
    \langle\mathrm{Div}(D\mu_t),\varphi\rangle &= \int_\mathcal{P}\varphi \Big(\sum_m\Big[\mathrm{div}_{p_t}(K_m)\,K_m+(K_m\cdot\partial)K_m\Big]\Big)p_t\,dm\nonumber\\
    &= \int \varphi \Big(\sum_m\Big[\mathrm{div}_{p_t}(K_m)\,K_m+(K_m\cdot\partial)K_m\Big]\Big)\,d\mu_t.
\end{align}
Since this holds for all test functions $\varphi$, we get \eqref{eq: bgg divergence}. 

With this result we can relate \eqref{eq: bgg reversal} with our reverse SDE obtained using the machinery in Section \ref{sec: reversal}. The reversal \eqref{eq: stratonovich reversal} holds for the real noise fields $K_m$, and gives the reverse Stratonovich drift $-V_0+\sum_m\mathrm{div}_{p_t}(K_m)K_m$ while the It\^o correction is $\frac12\sum_m(K_m\cdot\partial)K_m$. The reverse It\^o drift for the forward process \eqref{eq: homodyne sse} is thus given by
\begin{align*}
    \tilde b = -\mathcal{L}(\psi)+\sum_m\Big[\mathrm{div}_{p_t}(K_m)\,K_m+(K_m\cdot\partial)K_m\Big] = -\mathcal{L}(\psi)+\frac{1}{\mu_t}\,\mathrm{Div}\big(D\mu_t\big) = i[H,\psi]+\mathcal{D}(\psi)+\Xi_\tau .
\end{align*}
To show that this coincides with the drift of \eqref{eq: bgg reversal}, we just need to show that $\Xi_\tau = [[\Xi_\tau,\psi],\psi]$. By \S\ref{subsec: pure state geometry}, the tangent space at $\psi$ is $T_\psi = \lbrace u\in\mathrm{Herm}_d : u = \psi u+u\psi\rbrace$, and multiplying this relation by $\psi$ on both sides shows that $\psi u\psi = 0$ for $u\in T_\psi$. The orthogonal projection onto $T_\psi$ is $\Pi_\psi(Y) := \psi Y+Y\psi-2\,\psi Y\psi$, and $N_\psi := 1-\Pi_\psi$ is the projection onto the normal space. Expanding the commutators gives for every Hermitian $Y$
\begin{align}
    \big[[Y,\psi],\psi\big] = \Pi_\psi(Y) .
\end{align}
Therefore, we just need to show that $\Xi_\tau$ is a tangent vector. Since $\mathrm{div}_{p_t}(K_m)K_m$ are tangent, 
\begin{align}
    N_\psi(\Xi_\tau) = -2 N_\psi\big(\mathcal{D}(\psi)\big) + \sum_mN_\psi\big((K_m\cdot\partial)K_m\big).
\end{align}
Since the Hamiltonian part $-i[H,\psi] = X_{-iH}(\psi)$ is tangent, $N_\psi\big(\mathcal{D}(\psi)\big) = N_\psi\big(\mathcal{L}(\psi)\big)$. Moreover, the Stratonovich drift $V_0 = \mathcal{L}(\psi)-\frac12\sum_m(K_m\cdot\partial)K_m$ of \eqref{eq: homodyne sse} is tangent as well, so $N_\psi\big(\mathcal{L}(\psi)\big) = \frac12\sum_mN_\psi\big((K_m\cdot\partial)K_m\big)$. This concludes the proof.
\end{proof}

\end{document}